\title{Polynomial-time computation of homotopy groups and Postnikov
systems in fixed dimension\thanks{This research %by J.\,M. and M.\,K.\
was supported by the  ERC Advanced Grant No.~267165.
The research of M.~\v{C}.~was
supported by the project CZ.1.07/2.3.00/20.0003 of the
Operational Programme Education for Competitiveness of the Ministry of
Education, Youth and Sports of the Czech Republic.
The research by M.\,K.\ and J.\,M.\ was
supported by the Center of Excellence -- Inst.\ for Theor.\
        Comput.\ Sci., Prague (project P202/12/G061 of GA~\v{C}R).
The research of L.~V.~was supported by the Center of Excellence --
Eduard \v{C}ech Institute (project P201/12/G028 of GA~\v{C}R).
The research by U.\,W.\ was supported by the Swiss National Science Foundation
(grants SNSF-200020-138230 and SNSF-PP00P2-138948).
 }
}
\newif\ifcmts
\newcommand{\ra}{\rightarrow}
\newlength{\hlp}
\newcommand{\leftbox}[2]{\settowidth{\hlp}{$#1$}\makebox[\hlp][l]{${#1}{#2}$}}
\theoremstyle{plain}
\newtheorem{theorem}{Theorem}[section]
\newtheorem{lemma}[theorem]{Lemma}
\newtheorem{corol}[theorem]{Corollary}
\newtheorem{prop}[theorem]{Proposition}
\newtheorem{defn}[theorem]{Definition}
\theoremstyle{definition}
\DeclareMathOperator{\im}{im} % IMAGE OF A MAP
\DeclareMathOperator{\SM}{SMap}
\DeclareMathOperator{\sset}{SSet}
\DeclareMathOperator{\rank}{rank}
\DeclareMathOperator{\Hom}{Hom}
\DeclareMathOperator{\Barr}{Bar}
\DeclareMathOperator{\Bas}{Bas}
\DeclareMathOperator{\ev}{ev}
\DeclareMathOperator{\fib}{fib}
\DeclareMathOperator{\ind}{ind}
\renewcommand\:{\colon}
\newcommand{\alterdef}[1]{\!\left\{\!\!\begin{array}{ll}
                                   #1 \end{array}  \right. }
\newcommand{\Z}{\mathbbm{Z}}
\newcommand{\N}{\mathbbm{N}}
\newcommand{\Q}{\mathbbm{Q}}
\newcommand{\R}{\mathbbm{R}}
\newcommand{\II}{\mathcal{I}}
\newcommand{\JJ}{\mathcal{J}}
\newcommand{\MM}{\mathcal{M}}
\newcommand{\FSC}{\mathcal{FSC}}
\newcommand{\FSS}{\mathcal{FSS}}
\newcommand{\EC}{\mathit{EC}}
\newcommand{\EtC}{\widetilde{\mathit{EC}}}
\newcommand{\EZ}{\mathit{EZ}^M}
\newcommand{\EF}{\mathit{EF}}
\newcommand{\ET}{\mathit{ET}}
\newcommand{\EB}{\mathit{EB}}
\newcommand{\EM}{\mathit{EM}}
\newcommand{\tEM}{\widetilde{\EC}^M}
\newcommand{\AB}{\mathrm{Ab}}
\newcommand\makevec[1]{{\bf #1}}
\def \mm {\makevec{m}}
\newcommand\kkk{{\mathbf{k}}}
\newcommand\thedim{k}
\newcommand\thedimm{k} % maybe switch to d in Postnikov systems?!?!
\newcommand\otherdim{\ell}
\newcommand\diff{d}
\newcommand\cobo{\delta}
\newcommand\tC{\tilde{C}}
\newcommand\hC{\hat{C}}
\newcommand\hT{\hat{T}}
\newcommand\ef{\mathrm{ef}}
\newcommand{\dt}[1]{%
{\tilde{\raisebox{-.14ex}{\phantom{#1}}}}\mathllap{%
{\tilde{\raisebox{.16ex}{\phantom{#1}}}}\mathllap{#1}}
}
\newcommand\dtC{\dt{C}}
\newcommand\redu{\Rightarrow\!\!\!\!\Rightarrow}
\newcommand\reduP{\stackrel{{\scriptscriptstyle \rm P~\,}}{\redu}}%{\redu_{\rm P}}
\newcommand\lredu{\Leftarrow\!\!\!\!\Leftarrow}
\newcommand\lreduP{\stackrel{{\scriptscriptstyle \rm ~~~P\,}}{\lredu}}
\newcommand{\steq}{\Leftarrow\!\!\!\!\Leftarrow\!\!\!\!\Rightarrow\!\!\!\!\Rightarrow}
\newcommand{\steqP}{\stackrel{{\scriptscriptstyle \rm \,P}}{\steq}}
\def\EML{\mathrm{EML}}
\def\AW{\mathrm{AW}}
\def\SHI{\mathrm{SHI}}
\newcommand\intg{\iota}
\DeclareMathOperator{\id}{id}   % IDENTITY MAP
\newcommand\ndg{{\rm ndg}}
\DeclareMathOperator{\size}{{\sf size}}
\DeclareMathOperator{\enc}{{\sf enc}}
\DeclareMathOperator\MCyl{Cyl}
\DeclareMathOperator\MCone{Cone}
\newcommand\MCo{M} % the mapping cone notation
\long\def\onefigure#1#2{%  #1 picture,  #2  caption
\begin{figure*}[tbp]
\begin{center}
#1
\end{center}
\caption{#2}
\end{figure*}
}
\def\immediateFigure#1{%
\smallskip\begin{center}#1\end{center}\smallskip }
\newcommand{\labfig}[2]  % labeled figure
{\onefigure{\mbox{\includegraphics{#1}}}{\label{f:#1} #2} }
\newcommand{\labfigw}[3]  % labeled figure with prescribed width
{\onefigure{\mbox{\includegraphics[width=#2]{#1}}}{\label{f:#1} #3}}
\newcommand{\immfig}[1]  % immediate figure
{\immediateFigure{\mbox{\includegraphics{#1}}}}
\newcommand{\immfigw}[2] % immediate figure with prescribed width
{\immediateFigure{\mbox{\includegraphics[width=#2]{#1}}}}
\def\indef#1{\emph{#1}}
\newcommand{\heading}[1]{\vspace{1ex}\par\noindent{\bf\boldmath #1}}
\newcommand{\marrow}{\marginpar{\boldmath$\longleftarrow$}}
\newcommand{\jirka}[1]{\ifhmode\newline\fi\marrow \textsf{*** (JIRKA: ) #1\newline}}
\newcommand{\marek}[1]{\ifhmode\newline\fi\marrow \textsf{*** (MAREK: ) #1\newline}}
\newcommand{\lukas}[1]{\ifhmode\newline\fi\textsf{*** (LUKAS: ) \marrow #1\newline}}
\newcommand{\martin}[1]{\ifhmode\newline\fi\textsf{*** (MARTIN: ) \marrow #1\newline}}
\newcommand{\marrow}{}
\newcommand{\jirka}[1]{}
\newcommand{\marek}[1]{}
\newcommand{\lukas}[1]{}
\newcommand{\martin}[1]{}
\def\kamsymb{{\rm b}}
\def\ethsymb{{\rm c}}
\def\massymb{{\rm a}}
\def\epflsymb{{\rm d}}
\author{
Martin \v{C}adek$^\massymb$
\and
Marek Kr\v{c}\'al$^{\kamsymb}$ \and
Ji\v{r}\'{\i} Matou\v{s}ek$^{\kamsymb, \ethsymb}$
%\and Francis Sergeraert$^\grensymb$
\and Luk\'a\v{s} Vok\v{r}\'{\i}nek$^\massymb$
 \and Uli Wagner$^{\epflsymb}$
}
\begin{document}

\maketitle

{\renewcommand\thefootnote{\massymb}
\footnotetext{Department of Mathematics and Statistics,
 Masaryk University, Kotl\'a\v{r}sk\'a~2, 611~37~Brno,
 Czech Republic}
}

{\renewcommand\thefootnote{\kamsymb}
\footnotetext{Department of Applied Mathematics,
Charles University, Malostransk\'{e} n\'{a}m.~25,
118~00~~Praha~1,  Czech Republic}
}
%{\renewcommand\thefootnote{\itisymb}
%\footnotetext{Institute of Theoretical Computer Science (ITI),
%Charles University, Malostransk\'{e} n\'{a}m.~25,
%118~00~~Praha~1,  Czech Republic}
%}
{\renewcommand\thefootnote{\ethsymb}
\footnotetext{Department of Computer Science,
ETH Zurich, 8092~Zurich, Switzerland}
}

%{\renewcommand\thefootnote{\grensymb}
%\footnotetext{Institut Fourier, BP~74,
%38402~St Martin d'H\`eres Cedex, France}}

{\renewcommand\thefootnote{\epflsymb}
\footnotetext{Institut de Math\'{e}matiques de G\'{e}om\'{e}trie et Applications, {\'E}cole Polytechnique F\'{e}d\'{e}rale de Lausanne, EPFL SB MATHGEOM, MA C1 553, Station 8, 1015 Lausanne, Switzerland}}

%%%%%%%%%

\begin{abstract}
For several computational
problems in homotopy theory,
we obtain algorithms  with running time  polynomial in
the input size.
In particular, for
every \emph{fixed} $\thedim\ge 2$, there is a polynomial-time algorithm that,
for a $1$-connected topological space $X$ given as a finite simplicial
complex, or more generally, as a simplicial set with polynomial-time
homology, computes the $\thedim$th homotopy group $\pi_\thedim(X)$,
as well as the first $\thedim$ stages of a Postnikov system of $X$.
 Combined with results of an earlier paper,
this yields a polynomial-time computation
of $[X,Y]$, i.e., all homotopy classes of continuous mappings $X\to Y$,
under the assumption that
$Y$ is $(\thedim{-}1)$-connected and $\dim X\le 2\thedim-2$.
We also obtain a polynomial-time solution of the
\emph{extension problem}, where the input consists
of finite simplicial complexes $X,Y$, where $Y$ is $(\thedim{-}1)$-connected
and $\dim X\le 2\thedim-1$,
plus a subspace $A\subseteq X$ and a (simplicial) map $f\:A\to Y$,
and the question is the extendability of $f$ to all of~$X$.

The algorithms are based on the notion of a
\emph{simplicial set with polynomial-time homology},
which is an enhancement of the notion of a simplicial set with
effective homology developed earlier by Sergeraert and his co-workers.
Our polynomial-time algorithms are obtained by showing that
simplicial sets with polynomial-time homology are closed under
various operations, most notably, Cartesian products, twisted
Cartesian products, and classifying space.
One of the key components is also polynomial-time homology for
the Eilenberg--MacLane  space $K(\Z,1)$, provided in another recent paper
by Kr\v{c}\'al, Matou\v{s}ek, and Sergeraert.
\end{abstract}
\clearpage

\section{Introduction}

 One of the central themes in algebraic
topology is understanding the structure of all continuous
maps $X\to Y$, for given topological spaces $X$ and $Y$
(all maps between topological spaces in this paper are assumed to
be continuous). Here two maps $f,g\:X\to Y$ are usually
considered equivalent if they are
\emph{homotopic}\footnote{Homotopy means a continuous deformation
of one map into another. More precisely,
$f$ and $g$ are defined to be homotopic, in symbols $f\sim g$,
if there is a continuous $F\colon X\times [0,1]\to Y$
such that $F(\cdot,0)=f$ and $F(\cdot,1)=g$.
With this notation, $[X,Y]=\{[f]:f\colon X\to Y\}$,
where $[f]=\{g: g\sim f\}$ is the \emph{homotopy class} of~$f$.
%More generally, for a subspace $A\subseteq X$, we say that
%$f$ and $g$ are homotopic by a homotopy fixing $A$, in symbols
%$f\sim_A g$, if $f(x)=g(x)=F(x,t)$ for all $x\in A$ and all $t\in[0,1]$.
};
thus, the object of interest is $[X,Y]$,
the set of all homotopy classes of maps $X\to Y$.

\heading{Computing higher homotopy groups. }
Many of the celebrated results throughout the history of topology
can be cast as information about $[X,Y]$
for certain spaces $X$ and~$Y$. In particular,
one of the important challenges propelling the research in
algebraic topology has been the computation of the
\emph{homotopy groups of spheres}\footnote{The $\thedim$th homotopy group
$\pi_\thedim(Y)$ of a space $Y$ is defined as the set of all homotopy classes
of \emph{pointed} maps $f\:S^\thedim\to Y$, i.e.,
maps $f$ that send a distinguished
point $s_0\in S^\thedim$ to a distinguished point $y_0\in Y$ (and the homotopies $F$ also satisfy $F(s_0,t)=y_0$ for all $t\in[0,1]$). Strictly speaking, one should
really write $\pi_\thedim(Y,y_0)$ but for a path-connected $Y$, the choice of
$y_0$ does not matter. Moreover, for $1$-connected $Y$
the pointedness of the maps
does not matter either and one can identify $\pi_\thedim(Y)$
with $[S^\thedim,Y]$. Each $\pi_\thedim(Y)$ is a group, which for $\thedim\ge 2$
is Abelian, but the definition of the group operation is not important
for us at the moment.
 }
$\pi_\thedim(S^n)$, where only partial results have been obtained in spite
of an enormous effort (see, e.g., \cite{Ravenel,Kochman}).

Our concern here is the (theoretical) \emph{complexity} of
computing homotopy groups $\pi_\thedim(Y)$ for an arbitrary~$Y$.
It is well known that the fundamental group $\pi_1(Y)$ is
uncomputable, as follows from undecidability of the word
problem in groups \cite{Novikov:UndecidabilityWordProblem-1955}.%
\footnote{%The undecidability  of $\pi_1(Y)\cong 0\)  holds even when $Y$ is a \(2\)-complex or a \(4\)-manifold. On the other hand, the problem is decidable for closed \(3\)-manifolds due to the validity of the Poincare conjecture \cite{morgan2007ricci} and the recognizability of the \(3\)-sphere \cite{thompson1994thin}. Even the word problem can be solved for several classes of manifolds 
The undecidability  of the word problem holds even for the fundamental groups of \(2\)-complexes or \(4\)-manifolds. On the other
hand, the problem is decidable for certain classes of manifolds \cite{Manning:mani-with-word-problem, Dehn:comput-fund-group-2-mani}.}
On the other hand, given a $1$-connected space $Y$
(i.e., one with $\pi_1(Y)$ trivial), say represented as a finite simplicial
complex, there are algorithms that compute the higher homotopy
group $\pi_\thedim(Y)$, for every given~$\thedim\ge 2$.
The first such algorithm is due to Brown  
\cite{Brown},
and newer ones have been obtained as a part of general
computational frameworks in algebraic topology
due to Sch\"on \cite{Schoen-effectivetop}, Smith
\cite{smith-mstructures}, and Sergeraert and his
co-workers (e.g.,
 \cite{Sergeraert:ComputabilityProblemAlgebraicTopology-1994,RubioSergeraert:ConstructiveAlgebraicTopology-2002,RomeroRubioSergeraert,SergRub-homtypes}).
In particular, an algorithm based on the methods
of Sergeraert et al.\ can be found in Real \cite{Real96}.
We also refer to Romero and Sergeraert \cite{SergRomEffHmtp}
for a new approach to homotopy computations.

The computation of the higher homotopy groups is generally considered
very hard. The running time for the algorithms
mentioned above has apparently never been analyzed.
It is clear, however, that Brown's algorithm, which for a long time
 had been the only explicitly published
algorithm for computing $\pi_\thedim(Y)$, is heavily superexponential and
totally unsuitable for actual computations.

Moreover, Anick \cite{Anick-homotopyhard} proved that
computing $\pi_\thedim(Y)$ is \#P-hard,\footnote{Somewhat
informally, the class of  \#P-hard problems consists of computational
problems that should return a natural number
(as opposed to YES/NO problems) and are at least as hard as
counting the number of all Hamiltonian cycles in a given graph,
or counting the number of subsets with zero sum for a given set of integers,
etc. These problems are clearly at least as hard as NP-complete
problems, and most likely even less tractable.}
where $Y$ can even be assumed to be a $4$-dimensional space,
but, crucially, $\thedim$ is regarded as a part of the input.
Actually, the hardness already applies to the potentially easier
problem of computing the \emph{rational homotopy groups}
$\pi_\thedim(Y)\otimes \Q$; practically speaking, one asks
only for the rank of $\pi_\thedim(Y)$,
i.e., the number of direct summands isomorphic to $\Z$.

Anick's \#P-hardness result has a caveat: it assumes $Y$
to be given as a cell complex with a certain very compact representation.
However, recently it was shown by the present authors \cite{ext-hard}
that the computation of $\pi_\thedim(Y)$ remains \#P-hard
even for a $4$-dimensional simplicial complex $Y$, still
with $\thedim$  a part of the input.  

Recently the computation of $\pi_\thedim(Y)$, with $\thedim$ as the parameter,
has been shown W[1]-hard \cite{Mat-homotopyW1}. This means that this computational problem is very unlikely to admit an algorithm with time complexity
bounded by $f(k)n^{C}$, where $n$ is the input size, $C$ is a constant independent of $k$, and $f$ is an arbitrary function. 

Since, as was mentioned above, higher homotopy groups have the reputation
of being very difficult to compute, we consider the following result
surprising.

\begin{theorem}\label{t:pi_k}
For every fixed $\thedim\ge 2$, there is a polynomial-time algorithm
that, given a $1$-connected space $Y$
 represented as a finite simplicial complex,
or more generally, as a \emph{simplicial set with polynomial-time homology}
(this notion will be
 defined in Section~\ref{s:ph-intro}), computes
 (the isomorphism type of) the $\thedim$th homotopy group~$\pi_\thedim(Y)$.
\end{theorem}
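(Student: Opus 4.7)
The plan is to compute $\pi_\thedim(Y)$ by constructing, inductively for $n = 1, 2, \ldots, \thedim$, the first $\thedim$ stages of a Postnikov tower of $Y$ together with the canonical map $Y \to P_n$, keeping the invariant that every $P_n$, every Eilenberg--MacLane fiber $K(\pi_n(Y), n)$, and the $(n{-}1)$-connected homotopy fiber $F_n$ of $Y \to P_n$ is represented as a simplicial set with polynomial-time homology. Once $P_{\thedim - 1}$ and $F_{\thedim - 1}$ are available, the Hurewicz theorem supplies the answer:
\[
\pi_\thedim(Y) \;\cong\; \pi_\thedim(F_{\thedim - 1}) \;\cong\; H_\thedim(F_{\thedim - 1}),
\]
and the right-hand side can be computed in polynomial time directly from polynomial-time homology of $F_{\thedim - 1}$, with Smith normal form extracting the isomorphism type of the resulting finitely generated abelian group.

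The base of the induction is $P_1 = \{*\}$ (valid because $Y$ is $1$-connected) and $F_1 = Y$. Passing from stage $n{-}1$ to stage $n$ breaks into three sub-steps: first, compute $\pi_n(Y) \cong H_n(F_{n-1})$ and build $K(\pi_n(Y), n)$ iteratively as an $n$-fold classifying space of the discrete abelian group $\pi_n(Y)$, taking the polynomial-time homology of $K(\Z, 1)$ from Kr\v{c}\'al--Matou\v{s}ek--Sergeraert (and handling cyclic torsion summands $K(\Z/m,1)$ analogously) as the base; second, determine a cocycle $\kappa_n \in Z^{n+1}(P_{n-1}; \pi_n(Y))$ representing the next $k$-invariant by a polynomial-time cochain-level obstruction computation using the existing map $Y \to P_{n-1}$; third, set $P_n = P_{n-1} \times_{\kappa_n} K(\pi_n(Y), n)$ as a twisted Cartesian product, lift $Y \to P_{n-1}$ to $Y \to P_n$, and realize $F_n$ as the pullback along this lift of the path-space fibration over $P_n$.

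Since $\thedim$ is fixed, only a bounded number of rounds are needed, so the overall algorithm is polynomial-time provided each individual step is. The main obstacle, and what I expect to be the technical core of the paper, is proving that simplicial sets with polynomial-time homology are closed, effectively and in polynomial time, under the three constructions appearing above: Cartesian products, twisted Cartesian products, and the classifying-space functor. Granting those three closure theorems, together with polynomial-time Smith normal form, the Postnikov recipe above reduces $\pi_\thedim(Y)$ to a fixed-length sequence of polynomial-time operations. The most delicate case is the twisted Cartesian product, because the twisting cocycle $\kappa_n$ has nontrivial complexity and the effective chain-level reduction on $P_n$ must be built from that on $P_{n-1}$ without breaking polynomial bounds; this is precisely where the hypothesis that $\thedim$ is fixed is essential to keep the degree of the bounding polynomial under control.
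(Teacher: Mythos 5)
Your proposal is correct in broad outline and closely parallels the paper's strategy: build the Postnikov stages $P_0,\dots,P_{\thedim-1}$ inductively, keeping polynomial-time homology throughout by means of closure under products, twisted products, and the bar/classifying-space construction, with $K(\Z,1)$ and $K(\Z/m,1)$ as the base, and extract $\pi_\thedim(Y)$ by Hurewicz plus Smith normal form. The one genuine divergence is \emph{where} Hurewicz is applied. You propose to carry along, stage by stage, the homotopy fiber $F_n$ of $\varphi_n\colon Y\to P_n$ as a simplicial set with polynomial-time homology, and to read off $\pi_\thedim(Y)\cong\pi_\thedim(F_{\thedim-1})\cong H_\thedim(F_{\thedim-1})$ by the absolute Hurewicz theorem. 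The paper never constructs $F_n$: it forms the purely algebraic mapping cone $\MCone_*\bigl((\varphi_{\thedim-1})_*\bigr)$ of the induced chain map $C_*(Y)\to C_*(P_{\thedim-1})$ and uses the relative Hurewicz isomorphism $\pi_\thedim(Y)\cong\pi_{\thedim+1}(\MCyl\varphi_{\thedim-1},Y)\cong H_{\thedim+1}(\MCone_*)$. The two are equivalent (the paper even remarks that a dual, Whitehead-tower argument would work), but the mapping cone is a one-step chain-level operation (Proposition~\ref{p:mcone}) on data already in hand, whereas equipping each $F_n$ with polynomial-time homology requires an extra fibration argument: the path-space pullback you suggest is not directly among the paper's toolbox operations (one would first have to equip the path space of $P_n$ with polynomial-time homology), while the cleaner route within the given tools is to iterate the fibration sequence $F_n\to F_{n-1}\to K(\pi_n(Y),n)$ via Corollary~\ref{c:pullback}, which in turn requires producing the classifying map $F_{n-1}\to K(\pi_n(Y),n)$ from the effective homology of $F_{n-1}$. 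The second place where you are vaguer than the paper is the determination of the $k$-invariant: ``cochain-level obstruction computation'' is the right intuition, but the paper's mapping cone delivers it concretely --- both $\kappa^\ef_{\thedim-1}$ and the lifting datum $\lambda^\ef_\thedim$ that refines $\varphi_{\thedim-1}$ to $\varphi_\thedim$ are obtained as the two restrictions of a single projection $\rho\colon\EC^M_{\thedim+1}\to\pi_\thedim$ onto homology, which makes the compatibility $\kkk_{\thedim-1}\varphi_{\thedim-1}=\cobo\,\ell_\thedim$ come out essentially for free. In short: right strategy and correct high-level invariants, but as sketched your execution does more work than the paper and leaves the two most delicate constructions (the fiber and the $k$-invariant) unspecified.
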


Here and in the sequel, the size of a simplicial complex
is the number of simplices.
%For the definition of polynomiality of an algorithm that takes a
%simplicial set with polynomial-time homology as input,
%see Section~\ref{s:ph-intro}.

Since, under the conditions of the theorem,
$\pi_\thedim(Y)$ is a finitely generated Abelian group,
it can be represented as a direct sum of finitely many cyclic groups,
and the algorithm returns such a representation.

Let us remark that the algorithm does not need any certificate
of the $1$-connectedness of $Y$, but if $Y$ is not $1$-connected,
the result may be wrong.

We should also mention that, although the theorem asserts the existence
of an algorithm for every $\thedim\ge 2$, we will actually present a single
algorithm that accepts $Y$ and $\thedim$ as input and outputs
$\pi_{\thedim}(Y)$, and for every $\thedim$
the running time is bounded by a polynomial in the size of $Y$,
where the polynomial generally depends on $\thedim$.
However, for this setting, a single algorithm accepting $Y$ and $\thedim$,
some of the formulations in the sequel would become more cumbersome,
and so in the interest of simpler presentation, we stick to the
setting as in Theorem~\ref{t:pi_k}. A similar remark applies
to all of the other results below.

\heading{Remark: simple spaces. } It can be checked that
Theorem~\ref{t:pi_k}, as well as Theorem~\ref{t:ipost} below, hold,
without any significant change in the proofs,
under the assumtion that $Y$ is a %so-called
\emph{simple} space
(instead of $1$-connected). This, somewhat technical, notion means
that the fundamental group $\pi_1(Y)$ is possibly nontrivial
but Abelian, and  its action on each $\pi_{\thedim}(Y)$, $\thedim\ge 2$,
is trivial. Here the action basically means ``pulling the basepoint
in $Y$ along a loop''---see \cite[pp.~341--342]{Hatcher} for discussion.
A natural example of simple spaces are \emph{H-spaces},
which are a generalization of topological groups.
In the interest of easier presentation we stick to the
1-connectedness assumption, though.

\heading{Computing Postnikov systems. }
The algorithm for computing $\pi_\thedim(Y)$ in
Theorem~\ref{t:pi_k} is a by-product of a polynomial-time
algorithm for computing the first $\thedim$ stages
of a (standard) \emph{Postnikov system} for a given space~$Y$.
In this respect it is similar to the algorithm
of Brown \cite{Brown} and some
others, while, e.g., the algorithm in Real~\cite{Real96} is,
in a sense, dual, building a \emph{Whitehead tower} of~$Y$.
We note that with the tools used in the present paper,
the Whitehead tower algorithm, too, could serve to prove
Theorem~\ref{t:pi_k}.

A Postnikov system of a space $Y$ is, roughly speaking, a
way of building $Y$ from
``canonical pieces'', called \emph{Eilenberg--MacLane
spaces}, whose homotopy structure is the simplest possible.
A Postnikov system has countably many \emph{stages} $P_0,P_1,\ldots$,
where $P_\thedim$ reflects the homotopy properties of $Y$ up to
dimension $\thedim$, and in particular, $\pi_i(P_\thedim)\cong\pi_i(Y)$
for all $i\le\thedim$, while $\pi_i(P_\thedim)=0$ for $i>\thedim$. The
 isomorphisms of the homotopy groups for $i\le\thedim$
are induced by maps $\varphi_i\:Y\to P_\thedim$, which are also a part
of the Postnikov system.
Moreover, there is a mapping $\kkk_{i}$
defined on $P_i$, called the $i$th \emph{Postnikov class}; together with
the group $\pi_{i+1}(Y)$ it describes how $P_{i+1}$ is obtained
from $P_i$, and it is of fundamental importance for dealing with
maps from a space $X$ into~$Y$.
We will say more about Postnikov systems
later on; now we state the result somewhat informally.

\begin{theorem}[informal]\label{t:ipost}
For every fixed $\thedim\ge 2$,
given a $1$-connected space $Y$
represented as a finite simplicial complex,
or more generally, as a simplicial set with polynomial-time homology,
a suitable representation of
the first $\thedim$ stages of a Postnikov system for $Y$
can be constructed, in such a way that
each of the mappings $\varphi_i$ and $\kkk_i$, $i\le \thedim$,
can be evaluated in polynomial time.
\end{theorem}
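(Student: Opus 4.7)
The plan is to construct $P_0,P_1,\ldots,P_\thedim$, the maps $\varphi_i\:Y\to P_i$, and the Postnikov classes $\kkk_i$ inductively on $i$, maintaining the invariant that each $P_i$ is a simplicial set with polynomial-time homology and that both $\varphi_i$ and $\kkk_{i-1}$ admit polynomial-time evaluation procedures. The base case $P_0=*$ with the constant map $\varphi_0$ is trivial.

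For the inductive step I would first compute $\pi_i(Y)$. Since $\varphi_{i-1}$ induces isomorphisms on $\pi_j$ for $j<i$ and $P_{i-1}$ has no higher homotopy, the long-exact-sequence chase shows that the homotopy fiber $F_{i-1}$ of $\varphi_{i-1}$ is $(i{-}1)$-connected and $\pi_i(Y)\cong\pi_i(F_{i-1})\cong H_i(F_{i-1};\Z)$ by Hurewicz. Realizing $F_{i-1}$ as the pullback of the path fibration $EP_{i-1}\to P_{i-1}$ along $\varphi_{i-1}$ presents it as a twisted Cartesian product, and the closure properties of polynomial-time homology under path spaces and twisted Cartesian products (announced in the abstract) then furnish $H_i(F_{i-1};\Z)$, and thus the isomorphism type of $\pi_i(Y)$ as a finitely generated Abelian group, in polynomial time.

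Next I would build $\kkk_{i-1}$ and $P_i$. The class $[\kkk_{i-1}]\in H^{i+1}(P_{i-1};\pi_i(Y))$ is the transgression/obstruction class attached to the fiber sequence $F_{i-1}\to Y\to P_{i-1}$; the polynomial-time homology data for the three spaces in this sequence provide an explicit cocycle representative $z\in C^{i+1}(P_{i-1};\pi_i(Y))$, which is then realized as a simplicial map $\kkk_{i-1}\:P_{i-1}\to K(\pi_i(Y),i{+}1)$ via the standard cocycle-to-map correspondence. I would then define $P_i$ as the pullback of the path fibration $EK(\pi_i(Y),i{+}1)\to K(\pi_i(Y),i{+}1)$ along $\kkk_{i-1}$, which is a twisted Cartesian product of $P_{i-1}$ with $K(\pi_i(Y),i)$. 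Polynomial-time homology for the Eilenberg--MacLane space $K(\pi_i(Y),i)$ is assembled by iterating the classifying-space construction from the polynomial-time homology of $K(\Z,1)$ quoted from the companion paper of Kr\v{c}\'al, Matou\v{s}ek, and Sergeraert, so $P_i$ inherits polynomial-time homology from $P_{i-1}$ and $K(\pi_i(Y),i)$ by the twisted-Cartesian-product closure. Finally, the lift $\varphi_i\:Y\to P_i$ is produced from the explicit cochain-level null-homotopy of $\kkk_{i-1}\circ\varphi_{i-1}$ obtained in the transgression step, yielding a polynomial-time evaluable simplicial map.

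The main obstacle will be establishing that polynomial-time homology is genuinely closed under twisted Cartesian products and under the classifying-space functor, together with bounding the bit-sizes of the chain-level objects (reductions, perturbation data, locators) after iteration through $\thedim$ stages; this is essentially the heart of the paper and requires enhancing Sergeraert-style effective homology (basic perturbation lemma, Eilenberg--Zilber contractions, Shih's homotopy operator) with explicit polynomial bounds at every stage. Once these closure theorems are in hand, the recursion above has fixed depth $\thedim$, so only a bounded (in terms of $\thedim$) number of polynomial-sized expansions occur, and Theorem~\ref{t:ipost} follows, with Theorem~\ref{t:pi_k} as an immediate corollary because $\pi_\thedim(Y)$ is produced in step one of the final round.
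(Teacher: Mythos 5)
Your plan follows the \emph{fiber} side of the story: you compute $\pi_i(Y)$ as $H_i$ of the homotopy fiber $F_{i-1}$ of $\varphi_{i-1}\:Y\to P_{i-1}$, realized as a pullback of the path fibration over $P_{i-1}$. The paper works on the dual, \emph{cofiber} side: it forms the algebraic mapping cone $\MCone_*\bigl((\varphi_{\thedim-1})_*\bigr)$ of the induced chain map and reads $\pi_\thedim(Y)\cong H_{\thedim+1}$ of that cone, using the relative Hurewicz isomorphism for the $\thedim$-connected pair $(\MCyl\varphi_{\thedim-1},Y)$. These are equivalent in principle, and the paper itself remarks that a Whitehead-tower (fiber-side) algorithm could also be used.

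However, as written your proposal has a genuine gap. To get polynomial-time homology for $F_{i-1}$ as a twisted Cartesian product over the path fibration of $P_{i-1}$ you would need polynomial-time homology for a simplicial model of $\Omega P_{i-1}$ (e.g.\ the Kan loop group $GP_{i-1}$), and hence a loop-space/cobar operation preserving polynomial-time homology. That operation is \emph{not} in the paper's toolbox: the abstract lists closure under Cartesian products, twisted Cartesian products, and \emph{classifying space} $\overline{W}$ (a de-looping, not a looping), and the body supplies the bar construction but not the cobar construction. The actual Whitehead-tower algorithm (à la Real) avoids this by only ever taking fibers over Eilenberg--MacLane spaces, where $\Omega K(\pi,n)=K(\pi,n-1)$ is trivial to model; but the fiber of $\varphi_{i-1}$ over a general Postnikov stage $P_{i-1}$ is not of that form. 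The paper's mapping-cone route sidesteps the loop space entirely, since polynomial-time homology is closed under the algebraic mapping cone by a direct perturbation-lemma argument (Proposition~\ref{p:mcone}). A second, smaller point: the paper does not construct $\kkk_{i-1}$ and $\varphi_i$ by first extracting a transgression cocycle and then separately producing a null-homotopy of $\kkk_{i-1}\varphi_{i-1}$. Instead it chooses one projection $\rho\:\EC^M_{\thedim+1}\to\pi_\thedim$ (a splitting of the cycle group of the effective mapping cone onto homology) and takes $\kappa_{\thedim-1}^\ef$ and $\lambda_\thedim^\ef$ as the two restrictions of $\rho$ to the summands $\EC^{P_{\thedim-1}}_{\thedim+1}$ and $\EC^{Y}_{\thedim}$. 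This makes the compatibility $\kkk_{\thedim-1}\varphi_{\thedim-1}=\cobo\ell_\thedim$ fall out of the mapping-cone differential formula automatically, whereas your null-homotopy step would have to be constructed and certified separately.
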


A precise statement will be given as Theorem~\ref{t:restat14}.

\heading{Computing the structure of all maps. }
In the earlier paper \cite{CKMSVW11} we provided an algorithm
that, given finite simplicial complexes $X$ and $Y$,
computes the structure of $[X,Y]$ under the assumption that,
for some $\thedim\ge 2$, we have $\dim X\le 2\thedim-2$ and
$Y$ is $(\thedim-1)$-connected.\footnote{This means that
$\pi_i(Y)=0$ for all $i=0,1,\ldots,\thedim-1$; a basic example
is $Y=S^\thedim$.} More precisely, under these assumptions,
$[X,Y]$ has a canonical structure of a finitely generated Abelian group,
and the algorithm determines its isomorphism type.

In the algorithm, the stage $P_{2\thedim-2}$ of the Postnikov system
of $Y$ is used as an approximation to $Y$, since for every
$1$-connected $Y$ and every $X$ of dimension at most $2\thedim-2$,
there is an isomorphism $[X,Y]\cong[X,P_{2\thedim-2}]$, induced
by the composition with the mapping
$\varphi_{2\thedim-2}\:Y\to P_{2\thedim-2}$. At the same time,
the continuous maps $X\to P_{2\thedim-2}$ are easier to handle
than the maps $X\to Y$: each of them is homotopic to a simplicial,
and thus combinatorially described, map, and
it is possible to define (and implement) a binary operation on
$P_{2\thedim-2}$ which induces the group structure on $[X,P_{2\thedim-2}]$.
This, in a nutshell, explains the usefulness of the Postnikov
system for dealing with maps into~$Y$.

It is easy to check that the algorithm in  \cite{CKMSVW11}
works in polynomial time
in the size (number of simplices) of $X$ and $Y$ for every \emph{fixed}
$\thedim$, provided that the first $2\thedim-2$ stages of a Postnikov system
for $Y$ can be computed in polynomial time, as in
Theorem~\ref{t:ipost} (the precise requirements on what
should be computed can be found in  \cite{CKMSVW11}).
We thus obtain the following result, anticipated in  \cite{CKMSVW11}.

\begin{corol}[based on \cite{CKMSVW11}]\label{c:[XY]}
For every fixed $\thedim\ge 2$, there is a polynomial-time algorithm that,
given finite simplicial complexes $X$, $Y$,
where $\dim(X)\le 2k-2$ and $Y$ is $(k-1)$-connected,
computes the isomorphism type of $[X,Y]$ as an Abelian group.
More generally, $X$ can be a finite simplicial set
 and $Y$ a simplicial set with polynomial-time homology.
\end{corol}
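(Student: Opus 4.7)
The plan is to combine Theorem \ref{t:ipost} with the algorithm of \cite{CKMSVW11} in a black-box fashion, and then verify that the total running time remains polynomial in $|X|+|Y|$ for each fixed $k$.

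First I would invoke Theorem \ref{t:ipost} on $Y$ (which is $1$-connected, since it is $(k-1)$-connected with $k\ge 2$) to obtain a representation of the first $2k-2$ Postnikov stages $P_0,P_1,\ldots,P_{2k-2}$ of $Y$, together with polynomial-time evaluation subroutines for the projections $\varphi_i\:Y\to P_i$ and the Postnikov classes $\kkk_i$. The hypotheses of Corollary \ref{c:[XY]} also include that $X$ is a finite simplicial set, so no preprocessing of $X$ is needed beyond what \cite{CKMSVW11} already assumes.

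Next I would exploit the dimension/connectivity hypothesis: since $\dim X\le 2k-2$ and $Y$ is $(k-1)$-connected, composition with $\varphi_{2k-2}$ induces a bijection $[X,Y]\cong [X,P_{2k-2}]$, and the latter set carries a canonical Abelian group structure. Hence it suffices to compute the isomorphism type of $[X,P_{2k-2}]$. This is precisely the output of the algorithm from \cite{CKMSVW11}: given effective-homology-style access to the Postnikov tower satisfying the stated dimension and connectivity constraints, it returns a finite list of cyclic summands describing $[X,P_{2k-2}]$ as a finitely generated Abelian group.

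The remaining task is quantitative bookkeeping. One must check that when the Postnikov data supplied by Theorem \ref{t:ipost} is plugged into the \cite{CKMSVW11} procedure, the total running time is polynomial in $|X|+|Y|$ for each fixed $k$. Concretely, one verifies that (i) the number of calls to the subroutines evaluating $\varphi_i$ and $\kkk_i$ is polynomial in $|X|$, with exponent depending on $k$; (ii) the arguments passed to these subroutines have bit-size polynomial in $|X|+|Y|$; and (iii) the final extraction of the group invariants (for instance via Smith normal form of a polynomial-size presentation matrix) costs polynomial time. These points were already anticipated by the authors of \cite{CKMSVW11}, where it is explicitly stated that polynomial-time Postnikov data suffice. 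The main (though essentially notational) obstacle is to match the interface expected by \cite{CKMSVW11} with the guarantees delivered by Theorem \ref{t:ipost}; since both are formulated in the common language of simplicial sets with polynomial-time homology, no further mathematical ingredient is required, and the corollary follows.
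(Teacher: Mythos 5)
Your proposal matches the paper's own argument: the corollary is obtained by plugging the polynomial-time Postnikov data from Theorem~\ref{t:ipost} into the $[X,P_{2k-2}]$ algorithm of \cite{CKMSVW11} and observing that the latter queries $Y$ only through its Postnikov stages, so the combined procedure runs in polynomial time for fixed $k$. The paper treats the quantitative bookkeeping as routine and does not spell it out, but the route is the same.
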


We will not dwell on the proof here, since it follows immediately by
plugging the Postnikov system algorithm of Theorem~\ref{t:ipost}
into the algorithm of \cite{CKMSVW11} as a subroutine.
We only remark that while the result of \cite{CKMSVW11} is
formulated for $Y$ a finite simplicial complex,
 $Y$ actually enters the computation solely through its Postnikov
system, and so any $Y$ can be handled for which the appropriate
stages of the Postnikov system are available.

\heading{Computing extensions of maps. }
Related to the problem of determining $[X,Y]$ is the
\emph{extension problem}: given spaces $A$, $X$, $Y$, where $A \subseteq X$,
and a map $f\:A\to Y$, can $f$ be extended to a map $X\to Y$?
This is one of the most basic questions in algebraic topology, and
a number of topological concepts, which may look quite advanced
and esoteric to a newcomer, such as
\emph{Steenrod squares}, have a natural motivation in an attempt
at a stepwise solution of the extension problem;
see, e.g., Steenrod  \cite{Steenrod:CohomologyOperationsObstructionsExtendingContinuousFunctions-1972}.

For $A\subseteq X$ and $f\:A\to Y$ as above, let
$[X,Y]_{f}\subseteq [X,Y]$
denote the set of all homotopy classes of maps $X\to Y$ that
contain a map extending~$f$.

One may also  want to study the set of all
extensions $\bar f$ of $f$ with a finer equivalence relation
than the ordinary homotopy of maps $X\to Y$, namely, homotopy
fixing the map on $A$ (i.e., $\bar f_1,\bar f_2\:X\to Y$
are equivalent if they are connected by a homotopy $F\:X\times[0,1]\to Y$
with $F(x,t)=f(x)$ for all $x\in A$ and $t\in[0,1]$).
In order to distinguish these two notions, we refer to
determining the structure of all extensions modulo homotopy
fixing $f$ on $A$ as the \emph{fine classification of the extensions of~$f$},
and to determining $[X,Y]_f$ as the \emph{coarse classification of
the extensions of~$f$}.

As a simple consequence of Theorem~\ref{t:ipost}
and the methods of \cite{CKMSVW11}, we obtain the following.

\begin{theorem}[Extendability of maps]\label{t:ext}
Let $\thedim\ge 2$ be fixed. Then there is a polynomial-time algorithm that,
given finite simplicial complexes $X$, $Y$, a subcomplex
$A\subseteq X$, and a simplicial map $f\:A\to Y$,
where $\dim(X)\le 2k-1$ and $Y$ is $(k-1)$-connected,
decides whether $f$ admits
an extension to a (not neccessarily simplicial) map $X\to Y$.

Moreover, if the extension exists and $\dim X\le 2k-2$,
the algorithm computes the structure of $[X,Y]_{f}$
as a coset in the Abelian group $[X,Y]$.

More generally, $X$ can be a finite simplicial set
and $Y$ a simplicial set with polynomial-time homology.
\end{theorem}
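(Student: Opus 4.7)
The plan is to apply Theorem~\ref{t:ipost} (with parameter $2k-1$) to produce a polynomial-time Postnikov tower $P_0\leftarrow P_1\leftarrow\cdots\leftarrow P_{2k-1}$ of $Y$ together with classifying maps $\varphi_i\:Y\to P_i$ and Postnikov classes $\kkk_i$, and then to drive the relative obstruction-theoretic machinery of \cite{CKMSVW11} through the tower. Because $Y$ is $(k{-}1)$-connected, the homotopy fiber of $\varphi_{2k-1}$ is $(2k{-}1)$-connected, so $\varphi_{2k-1}$ is a $2k$-equivalence; under the hypothesis $\dim X\le 2k-1$, standard obstruction theory then shows that every extension $\bar g\:X\to P_{2k-1}$ of $g:=\varphi_{2k-1}\circ f$ admits a rel-$A$ lift to an extension $\bar f\:X\to Y$ of $f$, since all relevant obstructions lie in cohomology groups $H^{j}(X,A;\pi_{j-1}(\fib\varphi_{2k-1}))$ that vanish in the available range. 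Consequently, $f$ is extendable iff $g$ is, and the problem reduces to one inside the Postnikov tower.

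To decide the extendability of $g$, one builds the lift stage by stage. Suppose an extension $\bar g_i\:X\to P_i$ of the composite $A\to P_{2k-1}\to P_i$ is already in hand; the base case $i=0$ is trivial because $P_0$ is a point. Lifting $\bar g_i$ rel~$A$ through the principal $K(\pi_{i+1}(Y),i+1)$-fibration $P_{i+1}\to P_i$ is obstructed by the relative cohomology class of $\bar g_i^*\kkk_{i+1}$ in $H^{i+2}(X,A;\pi_{i+1}(Y))$; by the construction of the Postnikov system this cocycle automatically vanishes on $A$, so it represents a well-defined relative obstruction. Since $\kkk_{i+1}$ is polynomial-time evaluable by Theorem~\ref{t:ipost}, $\bar g_i^*\kkk_{i+1}$ is a polynomial-size integral cochain on $X$, and vanishing of its class reduces to a Smith-normal-form computation over $\Z$. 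If the obstruction is nonzero, $f$ admits no extension; otherwise one picks an explicit witness cochain, modifies $\bar g_i$ to a lift $\bar g_{i+1}$, and proceeds.

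For the coarse classification under $\dim X\le 2k-2$, once an extension $\bar f$ has been produced, Corollary~\ref{c:[XY]} already equips $[X,Y]\cong [X,P_{2k-2}]$ with a computed Abelian group structure, and $[X,Y]_f$ is the coset $[\bar f]+K$, where $K=\ker\bigl([X,Y]\to [A,Y]\bigr)$. The subgroup $K$ is cut out of the cocycle description of $[X,Y]$ from \cite{CKMSVW11} by the linear condition that the restriction to $A$ vanishes, and hence is computable in polynomial time by another Smith normal form. The main obstacle is controlling the representation size across the tower: one must guarantee that after modifying $\bar g_i$ by a coboundary to kill the obstruction, the new lift $\bar g_{i+1}$ is still described by a polynomial-size simplicial certificate supporting polynomial-time pullback of the next Postnikov class $\kkk_{i+2}$. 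This inductive size control is precisely what the polynomial-time homology framework underlying Theorem~\ref{t:ipost} is engineered to preserve, so once it is in place the extension theorem reduces to routine linear algebra.
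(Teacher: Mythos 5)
Your opening reduction is essentially correct and matches the paper's first step, though the paper uses the more economical stage $P_{2k-2}$ rather than $P_{2k-1}$: since the homotopy fiber of $\varphi_{2k-2}\:Y\to P_{2k-2}$ is $(2k-2)$-connected, the cited result of Spanier already gives that $f$ extends over $X$ iff $\varphi_{2k-2}f\:A\to P_{2k-2}$ does, under $\dim X\le 2k-1$.

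The genuine gap is in your stage-by-stage obstruction argument, specifically the assertion ``if the obstruction is nonzero, $f$ admits no extension.''\ The obstruction class in $H^{i+2}(X,A;\pi_{i+1}(Y))$ depends on the rel-$A$ homotopy class of the lift $\bar g_i\:X\to P_i$ you happen to have chosen, and that class is (when it exists) only well-defined up to a torsor over $H^{i}(X,A;\pi_{i}(Y))$; changing $\bar g_i$ changes the next obstruction by a (generically \emph{nonlinear}) cohomology operation. A nonzero obstruction for your particular greedy choice therefore does not preclude another lift of the same $\bar g_{i-1}$ with zero obstruction, and the procedure as stated would need backtracking, with no guarantee of correctness or termination. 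For $k\ge 3$ the range $\dim X\le 2k-1$ leaves genuine room for this to bite: the freedom at stage $i=k$ lives in $H^{k}(X,A;\pi_{k})$ and can affect obstructions up to degree $2k-1$. (You have also misidentified the ``main obstacle'' as representation size; the polynomial-time homology framework handles that, but not the correctness of a greedy obstruction pass.)

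The paper sidesteps all of this by not lifting stage by stage. Instead it reduces extendability directly to deciding whether $[\varphi_{2k-2}f]$ lies in the image of the restriction map $\rho\:[X,P_{2k-2}]\to[A,P_{2k-2}]$, which is a single global condition. The crucial input from~\cite{CKMSVW11} is that the algorithm there computes the Abelian group $[X,P_{2k-2}]$ (with a simplexwise group operation $\boxplus$ on $\SM(X,P_{2k-2})$) for $X$ of \emph{arbitrary} dimension --- the bound $\dim X\le 2k-2$ is needed only for the isomorphism $[X,Y]\cong[X,P_{2k-2}]$, not for the computation of the latter group. Because $\boxplus$ is simplexwise, $\rho$ is a homomorphism, and testing whether $[\varphi_{2k-2}f]\in\im\rho$ is linear algebra over $\Z$ (Smith normal form). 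This one-shot membership test replaces the entire inductive lifting you propose. Your treatment of the coarse classification for $\dim X\le 2k-2$ --- identifying $[X,Y]_f$ with $\rho^{-1}([\varphi_{2k-2}f])=[\bar f]+\ker\rho$ --- does agree with the paper, but it presupposes a correct decision procedure, which the greedy obstruction pass does not supply.
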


The proof, assuming some of the material from  \cite{CKMSVW11},
is presented in Section~\ref{s:ext} below. We stress that,
while $f$ is given as a simplicial map (so that it can be
specified by finite means), the extensions are considered
as \emph{arbitrary continuous maps}, and in particular, they
are not assumed to be simplicial maps $X\to Y$.

Theorem~\ref{t:ext} provides a coarse classification of
all extension assuming $\dim X\le2\thedim-2$.
There is also an  algorithm that, under the same
conditions, provides a fine classification
of all extensions. It appears as a part of a more general
result in  
 \v{C}adek, Kr\v{c}\'al, and Vok\v{r}\'inek~\cite{aslep}.

For the next higher dimension $\dim X=2\thedim-1$, although
the existence of an extension can be decided, we can
no longer produce the coarse classification of all
extensions, and we suspect that this
problem should be intractable in a suitable sense.

\heading{Hardness results.} The assumption on $X$ and $Y$ in Corollary~\ref{c:[XY]} may perhaps
look artificial at first sight. However, it is needed for $[X,Y]$ to have
a canonical structure of an Abelian group.
Moreover, the similar assumption in Theorem~\ref{t:ext}
(with $\dim X$ one higher) turns out to be sharp,
in the following sense: In \cite{ext-hard} we show that the extendability
problem is \emph{algorithmically undecidable} for finite
simplicial complexes $A\subseteq X$ and $Y$ and a simplicial map
$f\:A\to Y$ with $\dim X= 2\thedim$ and $(\thedim{-}1)$-connected $Y$.
Moreover, for every $\thedim\ge 2$, there
is a \emph{fixed} $(\thedim-1)$-connected $Y=Y_\thedim$ such that the extension
problem for maps into $Y_\thedim$, with $A,X,f$ as the input,
$\dim X\le 2\thedim$,
is undecidable. In particular, for every \emph{even} $\thedim \ge 2$,
the extension problem is undecidable for $X$ of dimension
$2\thedim$ and $Y=S^\thedim$, the sphere.  (Interestingly,
for \emph{odd} $\thedim$, it was recently shown \cite{vokrinek:odd-spheres}
that the extension problem is decidable for $Y=S^\thedim$,
without any restriction on the dimension of~$X$.)
%\footnote{In fact, the undecidability holds in the important case $Y_d=S^d$ for every $d\ge 2$ even. On the contrary, the extension problem is decidable when $Y$ is an odd-dimensional sphere without any restrction on the dimension of $X$, see \cite{vokrinek:odd-spheres}.}

 In a similar sense, $X=X_\thedim$ and $A=A_\thedim$
can be fixed, so that the input consists only of $Y$ and $f$,
and undecidability still holds. See  \cite{ext-hard}
for more details. The undecidability is obtained by reduction from quadratic Diophantine equations. A very similar argument shows that deciding the existence of a nontrivial map $X\to Y$ is as hard as deciding the existence of a nontrivial solution of quadratic \emph{homogeneous} Diophantine equations \cite{Krcal-thesis}.  

We have already mentioned known hardness results
for computing the homotopy group $\pi_\thedim(Y)$: the \#P-hardness
if $k$ is a part of input and W[1]-hardness if $k$ is regarded
as a parameter. The latter shows that, modulo a widely believed complexity
assumption, for every polynomial-time algorithm that computes $\pi_k(Y)$,
the  degree of the polynomial in the running time bound has to grow with~$k$
(and of course, the same applies to algorithms for computing the
Postnikov stages of $Y$). Still, it may be interesting to analyze
the running time in more detail. 

On the other hand, this kind of finer theoretical analysis
may not be very relevant for the practical performance of the algorithm
on manageable instances. For example, one of the main
ingredients of our polynomial-time algorithm, is an algorithm
of \cite{pKZ1} dealing with the Eilenberg--MacLane space $K(\Z,1)$
(discussed later). That algorithm is not quite simple and
its analysis is demanding; however, as for practical performance,
it seems to be inferior to a simple, classical, but worst-case
exponential algorithm due to Eilenberg and Mac~Lane, at least in simple
tests (as we were informed by Francis Sergeraert).

%This type of hardness results do not apply to the problems of computing homotopy groups \(\pi_k(Y)\) and \(k\)th Postnikov stage of \(Y\). Indeed it can be computed in time \(p_k(|Y|)\) where \(p_k\) is a fixed polynomial for each $k$. But how does $p_k$ depend on the parameter $k$? This question deserves further research, however, due to the above mentioned   W[1]-hardness of the computation of \(\pi_k(Y)\), we already know that the independence of the degree of $p_k$ on the parameter $k$ is extremely unlikely no matter what algorithm we use.  

\heading{Methods. } The results of this paper rely on a number of known
methods and techniques.  We see the main contributions
in selecting suitable methods  among various available alternatives and
adapting them for our purposes, assembling everything together,
and setting up a framework for dealing with
polynomial-time algorithms of a somewhat unusual kind.

This framework, with somewhat modified terminology, has been used
in several subsequent papers 
\cite{aslep,Vokrinek,VokriFil-homotopic,Vokrinek-algoheaps}, 
which provide polynomial-time
algorithms for a number of other homotopy-theoretic problems.

We have also made a significant effort to present the results
in an accessible manner. The required techniques
involve a large amount of material, and methods from two traditionally
separated areas, algebraic topology and algorithm design,
need to be brought together. We expect
the number of potential readers moving with ease in both of these
areas to be rather small, and thus we try to make the exposition
as self-contained as reasonably possible, sometimes
covering things which may be considered well known
in one of  the areas.

The Postnikov system algorithm, on the top level, essentially
follows the approach of Brown \cite{Brown} (we have examined proofs of existence of
a Postnikov system in standard
textbooks, such as \cite{Hatcher,Spanier}, and none seemed quite
suitable for our purposes). But Brown's algorithm in the original
form uses a straightforward representation of
simplicial Eilenberg--MacLane spaces, and thus it
works only for input spaces with all the relevant homotopy groups finite.
In the case of infinite homotopy groups, the corresponding
Eilenberg--MacLane spaces
are simplicial sets with \emph{infinitely many} nondegenerate
simplices in the relevant dimensions.
For dealing with these, and with other infinitary objects
derived from them in the course of the algorithm,
we follow the paradigm of \emph{objects with effective homology}
developed  by Sergeraert, Rubio, Real, Dousson, and Romero
(see, e.g.,
\cite{Sergeraert:ComputabilityProblemAlgebraicTopology-1994,RubioSergeraert:ConstructiveAlgebraicTopology-2002,RomeroRubioSergeraert,SergRub-homtypes};
the lecture notes \cite{SergerGenova} provide the most detailed
exposition available so far).
Some of their results have never appeared
in peer-reviewed journals; for example, for
some of the operations needed in the present paper,
we use methods described in some detail,
as far as we know, only in Real's PhD. thesis \cite{RealThesis}
written in Spanish.

For the purpose of polynomial-time computations, we replace effective
homology with \emph{po\-ly\-nomi\-al-ti\-me homology}, as introduced in
\cite{pKZ1}. Thus, we need polynomial-time versions of
all the required operations in effective homology.

There is one case, namely, polynomial-time
homology for the simplicial Eilenberg--MacLane space $K(\Z,1)$,
 where we had to develop  a new algorithm, since the classical one
is not polynomial in general. This part is not provided here,
but rather in the companion paper \cite{pKZ1}; the methods used in that
paper have flavor somewhat distinct from those employed here, and
we feel that a combined paper would be too extensive and cumbersome.

In all other cases, we could rely on known algorithms. Verifying
their polynomiality sometimes still requires nontrivial analysis
and assumptions. Moreover, since the intermediate
objects used in the algorithms are of somewhat unusual kind
from the computer science point of view, we need to set up
a suitable formal framework in order to make claims about
polynomial running time.

%Although the results of the present paper might perhaps be regarded
%as a ``straightforward consequence of known methods'', we decided
%to present them in reasonable detail and in a mostly self-contained manner.
%One justification is that some parts of the effective homology techniques
%have never appeared in peer-reviewed journals; for example, for
%some of the operations, we use methods described in some detail,
%as far as we know, only in Real's PhD. thesis \cite{RealThesis}
%written in Spanish.

%\heading{Complementary hardness. }
%One of the interesting directions of further
%research is a simplification; perhaps large parts of the machinery
%used here will turn out to be unnecessary for achieving the results
%(e.g., Theorem~\ref{t:pi_k}).
%
%An interesting question is the dependence of the running time
%in the theorems above on the fixed parameter $\thedim$.
%As indicated by the \#P-hardness result alluded to above,
%polynomial dependence on $\thedim$ is extremely unlikely.
%However, one may ask whether some of the considered computational
%problems are \emph{fixed-parameter tractable},\footnote{A computational
%problem involving a parameter $\thedim$
% is \emph{fixed-parameter tractable} if it admits an
%algorithm with running time bounded by  $f(\thedim)n^C$,
%where $n$ measures the input size, $f(\thedim)$ is an arbitrary
%function of $\thedim$, and $C$ is a constant \emph{independent}
%of $\thedim$; see, e.g., \cite{RolfFPT} for an introduction
%to \emph{parameterized complexity}.} or rather W[1]-hard or %even harder.
%\heading{What are the degrees of the polynomials?}   

\heading{Applications.} We consider the fundamental nature of the algorithmic problem considered here a sufficient
motivation of our research (e.g., because $[X,Y]$ is indeed one of the most basic objects of study in algebraic topology).
However, we also believe that work in this area will bring
various  connections and applications, also in other fields,
possibly including practically usable software, e.g., for aiding
research in topology.

A nice concrete application comes from the  so-called ROB-SAT problem---robust satisfiability of systems of equations. The problem is given by a rational value $\alpha>0$ and a PL function $f\:K\to \R^k$ defined by rational values on the vertices of a simplicial complex $K$.  The question is whether an arbitrary continuous $g\:K\to\R^k$ that is at most $\alpha$-far from $f$ (i.e., \(\|f-g\|_\infty\leq \alpha\)) has a root. Franek and  Kr\v{c}\'{a}l \cite{FranekKrcal:RobustSatisfiability} exhibit a computational equivalence of ROB-SAT and the extension problem for maps into the sphere \(S^{k-1}\). Our Theorem~\ref{t:ext} then yields an algorithmic solution when $\dim K\leq 2k-3$.

One important motivation for starting this project was the computation
of the  \emph{$\Z_2$-index}  (or \emph{genus}) $\ind(X)$ of a
$\Z_2$-space $X$,\footnote{A \emph{$\Z_2$-space} is a topological space $X$ with an action of the group $\Z_2$; the action is described by a homeomorphism $\nu\colon X\to X$ with $\nu\circ\nu=\id_X$. A primary example is a sphere $S^d$ with the antipodal action $x\mapsto-x$. An \emph{equivariant  map} between $\Z_2$-spaces is a continuous map that commutes with the $\Z_2$ actions.} i.e., the smallest $d$ such that $X$ can be equivariantly mapped into~$S^d$. For example, the classical Borsuk--Ulam theorem asserts that there is \emph{no} equivariant map $S^{d}\to S^{d-1}$, i.e., that $\ind(S^d)= d$.

Generalizing the results in the present paper, it is shown in~\cite{aslep} that there is an algorithm that decides whether $\ind(X)\le d$, provided that $d\geq 2$ and $\dim(X)\le 2d-1$; for fixed $d$ the running time is polynomial in the size of $X$.

The problem of computing $\ind(X)$ arises, among others, in the problem of \emph{embeddability}
of topological spaces, which is a classical and much studied area
(see, e.g., the survey by Skopenkov \cite{skopenkov-survey}).
One of the basic questions here is, given a $k$-dimensional
finite simplicial complex $K$, can it be (topologically) embedded
in $\R^d$? The celebrated \emph{Haefliger--Weber theorem}
from the 1960s asserts that, in the \emph{metastable range
of dimensions}, i.e., for $k\le \frac23d-1$, embeddability
is equivalent to $\ind(K_\Delta^2)\le d-1$,
where $K_\Delta^2$ is a certain $\Z_2$-space constructed from $K$
(the \emph{deleted product}). Thus, in this range, the embedding problem is, computationally,
a special case of $\Z_2$-index computation; see \cite{MatousekTancerWagner:HardnessEmbeddings-2011} for a study of algorithmic aspects of the embedding problem, where the metastable range was left as one of the main open
problems.

The $\Z_2$-index also appears as
a fundamental quantity in combinatorial applications
of topology. For example, the celebrated result of Lov\'asz
on Kneser's conjecture can nowadays be re-stated as
$\chi(G)\ge \ind(B(G))+2$, where $\chi(G)$ is the chromatic
number of a graph $G$, and $B(G)$ is a certain simplicial complex
constructed from $G$ (see, e.g., \cite{Matousek:BorsukUlam-2003}).
We find it striking that prior to \cite{aslep}, \emph{nothing} seems to 
have been known about the computability of such an interesting quantity
as $\ind(B(G))$.
Indeed, some authors (e.g., Kozlov \cite{kozl-surv})
worked with a weaker, cohomologically defined index, in part because 
of suspicions that the $\Z_2$-index might be intractable.

\heading{Implementation.} As indicated  above, another appealing research direction is the development  of a practical software for the problems  considered in  this  paper.  A particular solution, the  Kenzo  program written in Common Lisp  by Francis Sergeraert and Xavier  Dousson, maintained and extended with other collaborators, is freely available at
\url{http://www-fourier.ujf-grenoble.fr/~sergerar/Kenzo/}.

The  program implements  the  concepts of  effective  homology, and
currently it  enables  the construction of the Postnikov  stages as long as the homotopy  groups involved are isomorphic to  direct summands of copies of  \(\Z \) and  \(\Z_2\). For instance, for spheres \(S^{d}, d\ge 2,\) the Postnikov stages $P_0,P_1,\ldots, P_{d+2}$ can be constructed (as well as homotopy groups \(\pi_0(S^d),\ldots,\pi_{d+2}(S^d)\) can be  computed).   The program Kenzo cannot compete with the current state-of-the-art computations of homotopy  groups  of spheres,  where  many  special  properties of  the spheres are employed.  Rather in an orthogonal fashion,  it provides a general solution  for essentially arbitrary spaces  in low dimensions.

A different piece of software is a package called HAP written by Graham Ellis extending the computer algebraic system GAP; see \cite{ellis2008HAP}. Among others, it provides homological computations related to Eilenberg--MacLane spaces.

%For example the \(\pi_n(\Omega S^3 \cup_2 e^3)\) have so been computed for \(n \leq 5\). The main domain where the Kenzo program is currently much better than  the ``human computers'' is the  hard question of the homology groups of iterated loop spaces, see the web site above.

\section{Simplicial sets and chain complexes
with polynomial-time homology}\label{s:ph-intro}

\subsection{Preliminaries on simplicial sets and chain complexes}
\label{s:ssetchain}

\heading{Simplicial sets. } A simplicial set is a way of specifying
a topological space in purely combinatorial terms;
 we can think of it as an instruction manual telling us how
the considered space should be assembled from simple building
blocks.
All topological spaces in the considered algorithms
are going to be represented in this way.
Simplicial sets can be regarded as a generalization of
simplicial complexes; they are formally more complicated,
but more powerful and flexible. We refer to \cite{Friedm08,SergerTrieste}
for an introduction,
to \cite{Curtis:SimplicialHomotopyTheory-1971,May:SimplicialObjects-1992}
as compact more comprehensive sources, and to \cite{GoerssJardine}
for a more modern treatment.

Similar to a simplicial complex,
a simplicial set is a space built of vertices, edges, triangles,
and higher-dimensional simplices, but simplices are allowed to be glued
to each other and to themselves in more general ways. For example,
one may have several 1-dimensional simplices connecting the same
pair of vertices, a 1-simplex forming a loop,
two edges of a  2-simplex identified
to create a cone, or the boundary of a 2-simplex all contracted
to a single vertex, forming an $S^2$.
\immfig{simplset}

Another new feature of a simplicial set, in comparison with a simplicial
complex, is the presence of \emph{degenerate simplices}. For example,
the edges of the triangle with a contracted boundary (in the last
example above) do not disappear, but each of them becomes
a degenerate 1-simplex.

A simplicial set $X$ is represented as a sequence
$(X_0,X_1,X_2,\ldots)$ of mutually disjoint sets, where the
elements of $X_\thedim$ are called the \emph{$\thedim
$-simplices of $X$} (we note that, unlike for simplicial
complexes, a simplex in a simplicial set need not be
determined by the set of its vertices; indeed, there can be
many simplices with the same vertex set). For every $\thedim
\ge 1$, there are $\thedim +1$ mappings
$\partial_0,\ldots,\partial_\thedim\:X_\thedim\to
X_{\thedim-1}$ called \indef{face operators}; the intuitive
meaning is that for a simplex $\sigma\in X_\thedim$,
$\partial_i\sigma$ is the face of $\sigma$ opposite to the
$i$th vertex. Moreover, there are $\thedim +1$ mappings
$s_0,\ldots,s_\thedim\:X_\thedim\to X_{\thedim+1}$ (opposite
direction) called the \emph{degeneracy operators}; the
approximate meaning of $s_i\sigma$ is the degenerate simplex
which is geometrically identical to $\sigma$, but with the
$i$th vertex duplicated.

A simplex is called
\indef{degenerate} if it lies in the image of some $s_i$;
otherwise, it is \indef{nondegenerate}. We write $X^\ndg$ for
the set of all nondegenerate simplices of~$X$.
We call $X$ \emph{finite} if $X^\ndg$ is finite (every nonempty simplicial
set has infinitely many degenerate simplices).

There are natural axioms that the $\partial_i$
and the $s_i$ have to satisfy, but we will not list them here,
since we will not really use them. Moreover, the usual definition
of simplicial sets uses the language of category theory
and is very elegant and concise; see, e.g.,
\cite[Section I.1]{GoerssJardine}.

Every simplicial set $X$ specifies a topological space $|X|$,
the \emph{geometric realization} of $X$. It is obtained by assigning
a geometric $\thedim $-dimensional simplex to each nondegenerate
$\thedim $-simplex of $X$,
and then gluing these simplices together according to the
face operators; we refer to the literature for the precise definition.

\heading{Simplicial maps.} For simplicial sets $X,Y$,
a \emph{simplicial map} $f\:X\to Y$ is a sequence
$(f_\thedim)_{\thedim=0}^\infty$ of maps
$f_\thedim\:X_\thedim\to Y_\thedim$ (every $\thedim$-simplex is
mapped to a $\thedim$-simplex) that commute with the face and
degeneracy operators, i.e., $\partial_i f_\thedim =f_{\thedim-1}\partial_i$
and $s_i f_\thedim=f_{\thedim+1}s_i$.  We let $\SM(X,Y)$ stand
for the set of all simplicial maps $X\to Y$.

It is useful to observe that it suffices to specify
a simplicial map $f\: X\to Y$ on the \emph{nondegenerate} simplices
of $X$; the values on the degenerate simplices are then determined
uniquely. In particular, if $X$ is finite,
then such an $f$ can be specified as a finite object.

Every simplicial map $f\:X\to Y$ defines a continuous map
$\varphi\:|X|\to|Y|$ of the geometric realizations. There is a very
important class of simplicial sets, called \emph{Kan simplicial sets},
with the following crucial property: if $Y$ is a Kan simplicial
set and $X$ is any simplicial set, then every continuous
map $\varphi:|X|\to|Y|$ is homotopic to (the geometric realization of)
some simplicial map $f\:X\to Y$.
This is essential in the algorithmic treatment of continuous maps.
Here we omit the definition of a Kan simplicial set, since we will
not directly use it.

\heading{Chain complexes. } Together with a simplicial set $X$,
we will consider its associated \emph{normalized chain
complex} $C_*(X)$, but sometimes the algorithms will
also need other types of chain complexes.

For our purposes, it is sufficient to use the kind of chain complexes
usually considered in introductory textbooks when defining homology
and cohomology groups. Thus, in the sequel, a \emph{chain complex} $C_*$
is a sequence
$(C_\thedim)_{\thedim\in\Z}$ of \emph{free} Abelian groups
(in other words, free $\Z$-modules), together with a
sequence $(\diff_\thedim\:C_\thedim\to
C_{\thedim-1})_{\thedim\in\Z}^\infty$ of group
homomorphisms that satisfy the condition
$\diff_{\thedim-1} \diff_\thedim=0$.\footnote{More generally,
instead of $\Z$-modules, one might consider modules over a commutative ring $R$,
and they need not be free. Moreover, in the literature, the operations
considered in Section~\ref{s:ops} below are sometimes presented
in a still more general algebraic setting, with \emph{differential modules}
replacing chain complexes. Here we prefer the more concrete setting of chain
complexes, mainly in order to avoid burdening the presentation
with additional notions.}
The $C_\thedim$ are the \emph{chain groups},
their elements are called \emph{$\thedim$-chains}, and the
$\diff_\thedim$ the \emph{differentials}. If $c$ is a $\thedim$-chain,
we sometimes say that the \emph{degree} of $c$ equals~$\thedim$.
We will work only with chain complexes $C_*$ with $C_\thedim=0$
for all $\thedim<0$.

We also recall that $Z_\thedim=Z_\thedim(C_*):=\ker \diff_\thedim\subseteq
C_\thedim$ is the group of \emph{cycles},
$B_\thedim=B_\thedim(C_*):=\im\diff_{\thedim+1}\subseteq Z_\thedim$ is
the group of \emph{boundaries}, and
the quotient group $H_\thedim(C_*):=Z_\thedim/B_\thedim$
is the $\thedim $th \emph{homology group}
of the chain complex $C_*$.

For the normalized chain complex $C_*(X)$ of a simplicial set $X$
mentioned above, the $\thedim$th chain group
$C_\thedim(X)$ is the free Abelian group over
$X_\thedim^\ndg$, the set of all $\thedim$-dimensional
\emph{nondegenerate} simplices (in particular, $C_\thedim(X)=0$
for $\thedim <0$).\footnote{In the literature, the notation
$C_*(X)$ is sometimes used for another
chain complex
associated with $X$, where the degenerate simplices also
appear as generators (it yields the same homology
as the normalized chain complex). But since we will work exclusively with
the normalized chain complex, we reserve the simple notation
$C_*(X)$ for these.}
This means that a $\thedim $-chain is a formal sum
$$
c=\sum_{\sigma\in X_\thedim^\ndg} \alpha_\sigma \cdot\sigma,
$$
where the $\alpha_\sigma$ are integers, only finitely many of them
nonzero.
The differentials are defined in a standard way
using the face operators: for $\thedim $-chains of the form $1\cdot\sigma$,
which constitute a basis of $C_\thedim(X)$, we set
$\diff_\thedim (1\cdot\sigma):=\sum_{i=0}^\thedim(-1)^i\cdot\partial_i\sigma$
(some of the $\partial_i\sigma$ may be degenerate simplices; then
they are ignored in the sum),
and this extends to a homomorphism in a unique way (``linearly'').

Let $C_*$ and $\tC_*$ be two chain complexes.
We recall that a \emph{chain map} $f\:C_*\to\tC_*$
is a sequence $(f_\thedim)_{\thedim\in\Z}$ of homomorphisms,
$f_\thedim\:C_\thedim\to\tC_\thedim$, compatible with the differentials,
i.e., $f_{\thedim-1} \diff_\thedim=\tilde\diff_{\thedim} f_\thedim$.
A simplicial map $f\:X\to Y$ of simplicial sets induces a chain map
$f_*\:C_*(X)\to C_*(Y)$ in the obvious way.

\heading{Mapping cylinder and mapping cone. }
We recall two standard constructions for topological spaces,
and then we mention their algebraic counterparts.
Let $f\:X\to Y$
be a map of topological spaces. Then the \emph{mapping cylinder}
$\MCyl(f)$ is obtained by gluing the product (``cylinder'')
$X\times [0,1]$ to  $Y$ via the identification of $(x,0)$
with $f(x)\in Y$, for all $x\in X$, as the next picture indicates.

\immfig{mcyl}

The mapping cone $\MCone(f)$ is obtained from the mapping cylinder
$\MCyl(f)$ by contracting the ``top copy''
of $X$, i.e., the subspace $X\times\{1\}$, to a single point.

We will not use these geometric constructions directly.
In one of the proofs, we will need a simplicial version of the mapping
cylinder,  in a setting where $X,Y$ are simplicial sets and
$f$ is a simplicial map, and we will introduce it at the appropriate
moment. Otherwise, we will work exclusively with \emph{algebraic
analogs} of these constructions. Conceptually, they are obtained
by considering how the chain complexes of $\MCyl(f)$ and $\MCone(f)$
are related to the chain complexes of $X$ and $Y$ and to the chain
map $f_*$ induced by $f$, and then generalizing to arbitrary chain complexes
and chain map.

The resulting definitions are as follows. Let
$C_*,\tC_*$ be chain complexes and let $\varphi\:C_*\to\tC_*$ be
a chain map. Then the (algebraic) \emph{mapping cylinder}
$\MCyl_*(\varphi)$ has chain groups $\MCyl_\thedim:=C_{\thedim-1}\oplus
\tC_\thedim\oplus C_\thedim$ (a direct sum), and the differential given by
\[
\diff^{\MCyl_*(\varphi)}_\thedim(a,\tilde c,b):=
(-\diff_{\thedim-1}(a), \varphi_\thedim(a)+\tilde\diff_\thedim(\tilde c),
-a+\diff_\thedim(b)),
\]
where $\diff$ is the differential of $C_*$
and $\tilde\diff$ is the differential of $\tC_*$.

In a similar spirit,
the (algebraic) \emph{mapping cone} $\MCone_*(\varphi)$ of $\varphi$
is the chain complex whose $\thedim$th
chain group is the direct sum $C_{\thedim-1}\oplus\tC_\thedim$,
and with the differential given by
\begin{equation}\label{e:conediff}
\diff^{\MCone_*(\varphi)}_\thedim(a,\tilde b):=
 (-\diff_{\thedim-1}(a),\varphi_\thedim(a)+\tilde\diff_\thedim(\tilde b)),
\ \ \ (a,\tilde b)\in C_{\thedim-1}\oplus\tC_\thedim.
\end{equation}

For later use, we note that the canonical inclusion $i\:\tC_*\to
\MCone_*(\varphi)$, given by $i(\tilde b)=(0,\tilde b)$, is a chain map,
as can be seen from (\ref{e:conediff}); on the other hand,
the other canonical inclusion $j\:C_*\to \MCone_*(\varphi)$
is not a chain map (it does not respect degrees, and it does not
commute with the face operators, unless $\varphi=0$).

\subsection{The meaning of ``computing $\pi_{17}(Y)$ in polynomial time''}
\label{s:encsize}

In computational complexity theory, which is a branch of computer
science that focuses on classifying computational problems according
to their inherent difficulty, algorithms are usually represented
as Turing machines, or some other models of a general computing
machine. Such an algorithm accepts an input $u\in \Sigma^*$,
where $\Sigma$ is some fixed finite alphabet (for our purposes,
we may assume w.l.o.g.~that $\Sigma=\{0,1\}$ is the binary alphabet),
and where $\Sigma^*$ denotes the set of all strings (finite sequences)
of symbols of $\Sigma$. Given $u\in \Sigma^*$, the algorithm
computes some output $v\in\Sigma^*$.

We say that a mapping $f\:\Sigma^*\to\Sigma^*$ is a \emph{polynomial-time
mapping} if there is an algorithm $A$ and a polynomial $p(x)$
such that, for every input $u\in\Sigma^*$, the algorithm $A$
outputs $f(u)$ after at most $p(|u|)$ steps, where $|u|$
denotes the length (number of symbols) of~$u$.

It is easy to see that
the composition of two polynomial-time mappings is again
a polynomial-time mapping. (Here we use that
if the computation of $f(u)$ takes at most $p(|u|)$ steps,
then $|f(u)|\le p(|u|)$, for otherwise, the algorithm
for evaluating $f$ would not have enough time to write
$f(u)$ down.)
We will frequently use this fact, often without mentioning
it explicitly.

\heading{Encoding size. }
Thus, the notion of polynomial time is very straightforward,
although not easy to study, for mappings assigning strings to strings.
However, if we consider ``real-life'' computational tasks,
such as testing whether a given natural number $n$ is a prime,
or computing $\pi_{17}(Y)$ for a simplicial complex $Y$,
then neither the input nor the output are \emph{a~priori} strings.
In order to talk about the computational complexity of such
tasks, we first need to specify some \emph{encoding}
of the input and output objects by strings.

For testing primality, we thus need to specify an (injective) function
$\enc\:\N\to\Sigma^*$ assigning a string to every natural number
(here we consider the encoding of the two possible outputs, YES and NO,
as too trivial to discuss). The most usual choice is representing
$n$ by the standard binary notation; e.g., $\enc(17)=10001$.
In this paper we assume binary encoding of all integers (unless
stated otherwise).
With this encoding, the possibility of primality testing
in polynomial time is a celebrated recent result. However, if
we chose a different, \emph{unary} encoding $\enc'$,
which represents $n$ by a string of $n$ ones, e.g.,
$\enc'(17)=1111111111111111$, then testing primality in polynomial
time becomes very easy---we can afford to test all possible divisors
from $2$ to $n-1$. This example illustrates that sometimes
the choice of encoding may be very significant.\footnote{Here is another
example, closer to our topics, of how the encoding may matter:
If a simplicial complex $K$ is given by a list of \emph{all}
of its simplices, as we are going to assume here, then
computing the Euler characteristic $\chi(K)$ is a trivial matter
and can obviously be done in polynomial time.
However, if $K$ is specified by listing only the \emph{maximal}
simplices, and if we do \emph{not} assume $\dim K$ fixed,
then computing $\chi(K)$ is \#P-hard \cite{RouneSaensdeCabezon},
and thus extremely unlikely to be polynomial-time solvable!}

For discussing polynomial-time algorithms, we often do not need
to specify the encoding function $\enc$ completely. Usually we suffice
with the \emph{encoding size}, where the size of an object $a$
is $\size(a)=|\enc(a)|$, the number of bits in its encoding.
In the above example with primality,
we had $\size(n)=\lfloor 1+\log_2 n\rfloor$ for the binary encoding and
$\size'(n)=n$ for the unary encoding.

We note that changes in the encoding
that transform the size by at most a fixed polynomial,
e.g., replacing $\size(a)$ with $\size'(a)=(37\size(a)+100)^{26}$,
leave the notion of a polynomial-time mapping unchanged.
Thus, for the purpose of developing polynomial-time algorithms,
we usually need not describe the encoding in much detail.

\heading{The encoding size of simplicial complexes and of Abelian
groups. } We recall that a finite simplicial complex $Y$
can be regarded as a hereditary system of subsets of a finite
vertex set $V$ (hereditary meaning that if $\sigma\in Y$
and $\sigma'\subseteq\sigma$, then $\sigma'\in Y$ as well).
For encoding such an $Y$, we can identify $V$ with
the set $\{1,2,\ldots,n\}$, and then represent $Y$ as a list
of all simplices, where each simplex is given by the list
of vertices. Thus, if the dimension of $Y$ is bounded by a
constant (as we may assume in all of our results),
$\size(Y)$ is bounded by a polynomial
function of the number of simplices of $Y$, and so for the
purpose of discussing polynomial-time algorithms, we may
assume that $\size(Y)$ equals the number of simplices.

The elements of the homotopy group $\pi_{17}(Y)$ are, by definition,
equivalence classes of pointed maps $S^{17}\to Y$, and it is
far from obvious how even a single such element could be
represented by a string. However, our algorithm computes only
the \emph{isomorphism type} of the homotopy groups. (Computing
a reasonable representation for the mappings corresponding
to the generators of the homotopy group is currently an
interesting open problem.)

It is known that, for a finite
simplicial complex $Y$ and $\thedim\ge 2$,
$\pi_{\thedim}(Y)$ is a finitely
generated Abelian group; this actually also follows from the analysis of our
algorithm. A well-known structure theorem asserts that
each finitely generated Abelian group $\pi$ can be represented as
a direct sum $\Z^r\oplus
(\Z/m_1)\oplus (\Z/m_2)\oplus\cdots\oplus (\Z/m_s)$
of cyclic groups.\footnote{Moreover, we may assume that
the $m_i$ satisfy the divisibility
condition $m_1|m_2|\cdots|m_s$, in which case these orders are determined uniquely from $\pi$ and thus describe its isomorphism type.} We are going to encode it
by the $(r+s)$-tuple
$\mm=(\underbrace{0,0,\ldots,0}_{r~\mathrm{zeros}},m_1,\ldots,m_s)$,
where $m_1,\ldots,m_s$ are encoded in binary.
Thus, we may take
\[
\size(\pi)=r+\sum_{i=1}^s\size(m_i).
\]

The reader may wonder why $r$ is not encoded in binary as well.
The reason is pragmatic; we will also be using finitely generated
Abelian groups as inputs to certain auxiliary algorithms,
and we would not be able to make these auxiliary algorithms
polynomial with $r$ encoded in binary. A heuristic explanation
for this is that an \emph{element} of $\Z^r$ is an $r$-tuple
of integers, and thus an encoding of such an element has
size at lest proportional to $r$. If the encoding size
of $\Z^r$ were of order $\log_2 r$, then a polynomial-time
algorithm working with $\Z^r$ would not be able even to
read or write any single group element.

This specification of encoding sizes gives a precise meaning
to the polynomiality claim in Theorem~\ref{t:pi_k}.
We note that the polynomiality of the algorithm also implies
the (non-obvious) claim
that, for $\thedim$ fixed, $\size(\pi_\thedim(Y))$ is bounded
by a polynomial function of $\size(Y)$.
%Indeed, otherwise the
%algorithm would not have enough time to even write down the output.

\subsection{Locally polynomial-time simplicial sets and
chain complexes}

\heading{In what sense do we construct a Postnikov system?}
As was mentioned after Theorem~\ref{t:ipost},
the stages $P_\thedim=P_\thedim(Y)$ of a Postnikov system of $Y$ can be
regarded as approximations of $Y$,
which are in some sense easier to work with than $Y$ itself.
The price to pay is that even if $Y$ is a finite simplicial
complex, the $P_\thedim$ are simplicial sets that
usually have infinitely many nondegenerate simplices
in each dimension.
%(possibly with some exceptions in small dimensions).

In many areas where computer scientists seek efficient
algorithms, the algorithms work with finite objects,
such as finite graphs or matrices, and there is no problem
with explicitly representing such objects in the computer memory.
This contrasts with the situation for the $P_\thedim$,
where we cannot produce the infinite list of all simplices
of a given dimension explicitly. Thus, the question
arises, in what sense we construct $P_\thedim$
and how we can work with it.

A complete answer is that we want to equip $P_\thedim$
with polynomial-time homology, which is a notion defined
later. For now, we give at least a partial answer:
 We certainly want to be able to inspect \emph{locally} every given
piece of $P_\thedim$. For example, for every fixed $\thedim$ and
$\otherdim$, given any $\otherdim$-dimensional
simplex $\sigma$ of $P_\thedim$, and an integer $i\in\{0,1,\ldots\otherdim\}$,
we should be able to compute the $i$th face $\partial_i\sigma$,
the $i$th degeneracy $s_i\sigma$, and
also the value $\kkk_\thedim(\sigma)$
of the Postnikov class at~$\sigma$.
Because of the infinite domains, the mappings $\partial_i$, $s_i$,
and $\kkk_\thedim$ cannot be given by a finite table (somewhat
exceptionally, the mapping $\varphi_\thedim\:Y\to P_\thedim$
could be represented by a table if $Y$ is finite).
Instead, each of them is going to be given as an \emph{algorithm}.

Thus, we are going to represent  stage of the Postnikov system
by a collection of algorithms, and similarly for
various other infinite simplicial sets,
chain complexes, and some other kinds of objects.
In computer science, this is sometimes
called a \emph{black box} or \emph{oracle}
representation.\footnote{Professor Sergeraert has suggested an alternative
framework, inspired by functional programming,
 for dealing with computational complexity of algorithms
similar to those considered in the present paper.
It should be presented in \cite{RS-funcform}. }

\heading{Polynomiality. } Since we want to use the stages of the
Postnikov system in polynomial-time algorithms, such as the one
in Corollary~\ref{c:[XY]} (the computation of $[X,Y]$),
we obviously want that the black boxes representing $P_\thedim=P_\thedim(Y)$
work in polynomial time. But some care is needed in formulating
such a requirement.

For example, let us consider the Postnikov class $\kkk_{17}$,
which is a simplicial map from $P_{17}$ into another simplicial
set, namely, the Eilenberg--MacLane space $K(\pi_{18}(Y),19)$,
to be introduced in Section~\ref{s:EML1}. The simplices
of $P_{17}$, as well as those of $K(\pi_{18}(Y),19)$,
are canonically represented by certain ordered collections
of integers (or sometimes elements of some $\Z/m$), and
it might happen that while $\size(\sigma)$ is a constant,
$\size(\kkk_{17}(\sigma))$ also depends on the input simplicial
complex $Y$ and becomes arbitrarily large for some choices
of~$Y$.\footnote{Here is an example
of a similar phenomenon in a simpler and perhaps more familiar
setting. Suppose that we want to represent the elements of
the cyclic group $\Z/m$ by the integers $0,1,\ldots,m-1$,
and we want an algorithm for computing the inverse element
$-i$ for a given $i$. Then we cannot require the algorithm
to run in time polynomial in $\size(i)$, because for $i=1$
the output must be $m-1$, and its encoding size depends on~$m$---at
least if we use the standard binary encoding of the integers.
A reasonable requirement is to bound the running time
polynomially in $\size(m)$.} Then $\kkk_{17}(\sigma)$ cannot
be evaluated in time polynomial in~$\size(\sigma)$.

Even if, for every input $Y$, we could compute $\kkk_{17}(\sigma)$
in time polynomial in $\size(\sigma)$, it might happen that
the polynomial depended on $Y$. For example, we might encounter
a sequence $Y^{(1)},Y^{(2)},\ldots$ of inputs such that
$\size(Y^{(j)})\le j$, say, and the time for evaluating
$\kkk_{17}(\sigma)$ is $\size(\sigma)^j$.
Then we would not be able to use such a Postnikov stage
an algorithm such as the one for computing $[X,Y]$ (Corollary~\ref{c:[XY]}),
where the running time should depend polynomially on $\size(Y)$.

Thus, we cannot simply require $\kkk_{17}(\sigma)$ to be computed
in time polynomial in $\size(\sigma)$.
Instead, we are going to require the running time to be
bounded by a polynomial in $\size(Y)+\size(\sigma)$
(where the polynomial depends on $\dim\sigma$ and on $\thedim$,
the index of the Postnikov stage).

To get $Y$ in the picture,
we introduce \emph{parameterized simplicial sets}; these are
families of simplicial sets, typically with infinitely many members,
where each member of the family is described by some value
of a parameter.
We assume some agreed-upon encoding of the parameters
by strings, and the length of the encoding strings is taken as the
size of the corresponding simplicial set in the family.
Then we assume that the black boxes, such as the one
for evaluating $\kkk_{17}$,  take both the parameter value
and $\sigma$ as input, and that they run in time polynomial
in the size of this combined input.

\heading{Locally polynomial-time simplicial sets. }
At this moment we postpone further discussion of the
Postnikov stages $P_\thedim$ and the Postnikov classes $\kkk_\thedim$
until Section~\ref{s:post}, and we introduce a general notion of
a simplicial set represented ``locally'' by polynomial-time black boxes.

\begin{defn}[Locally polynomial-time simplicial set]\label{d:lptss}
Let $\II$ be a set, on which an injective mapping
$\enc\:\II\to\Sigma^*$ is defined, specifying an encoding
of each element of $\II$ by a string; we will refer to $\II$
as a \emph{parameter set}.
We define a \emph{parameterized simplicial set} as a mapping $X$
that, for some parameter set $\II$, assigns to each $I\in\II$
a simplicial set  $X(I)$. Sometimes we will write
such a parameterized simplicial set as $(X(I):I\in\II)$.
We also assume that an encoding of simplices by strings has been
fixed for each of the simplicial sets~$X(I)$.

We say that such an
$X$ is a \emph{locally polynomial-time simplicial set} if,
for each $\thedim$, there is an algorithm that, given
$I\in\II$, a $\thedim$-dimensional simplex $\sigma\in X(I)_\thedim$,
and $i\in\{0,1,\ldots,\thedim\}$, computes $\partial_i\sigma$
in time polynomial in $\size(I)+\size(\sigma)$ (where the polynomial
may depend on $\thedim$), and there is a similar algorithm for evaluating
the degeneracy operators~$s_i\sigma$.

Let $(X(I):I\in\II)$ and $(Y(I):I\in\II)$ be parameterized simplicial
sets with the same parameter set, and for each $I\in\II$,
let $f_I$ be a simplicial map  $X(I)\to Y(I)$.
We say that $f=(f_I\:I\in\II)$ is a \emph{polynomial-time simplicial map}
$X\to Y$ if
for each $\thedim\ge 0$, there is an algorithm that, given
$I\in\II$ and $\sigma\in X(I)_\thedim$, computes $f_I(\sigma)$
in time polynomial in $\size(I)+\size(\sigma)$.\footnote{
More generally, we might want to consider a simplicial map $f_I$ that
goes from $X(F(I))$ to $Y(G(I))$,  for some polynomial-time
computable maps $F,G$. By composing algorithms we may think of
$X(F(I))$ and $Y(G(I))$ as simplicial sets parameterized by $I$
and thus this seemingly more general notion can be interpreted
as a polynomial-time simplicial map in our sense.}
\end{defn}

As was explained above,
the main purpose of the parameterized setting is to make
the polynomial bounds on the running time of the black boxes
uniform in the input of the considered algorithms.
%Let us remark that for effective homology in the setting
%of Sergeraert et al. \cite{SergerGenova}, where one only
%wants the existence of algorithms and does not analyze
%their running time, no uniformity and no parameterization is needed,
%and one can work with individual simplicial sets,
%each equipped with its own black boxes.

We will see numerous examples  of locally polynomial-time simplicial
sets later on. Of course, the Postnikov stage $P_\thedim=P_\thedim(Y)$,
parameterized by the set of all finite 1-connected simplicial complexes,
is going to be one such example. (However, $P_\thedim$ also
has an additional structure besides being a locally polynomial-time
simplicial set.)

Another, rather simple, example is made of all finite simplicial sets,
as will be discussed at the end
of the present section. Others can be built from this one
by applying various operations, such as products or twisted products,
which will be considered later.

\heading{Locally polynomial-time chain complexes. }
First, let $(X(I):I\in\II)$ be a locally polynomial-time
simplicial set, and let $C_*(X(I))$ be the normalized
chain complex of $X(I)$. This gives us a chain complex parameterized
by $\II$. The $\thedim$-chains
of $C_*(X(I))$ are finite sums of the
form $c=\sum_{\sigma:\alpha_\sigma\ne 0}\alpha_\sigma\cdot\sigma$,
where the $\sigma$ are nondegenerate simplices of $X(I)_\thedim$,
and we can represent such a $c$ by a list of simplices and
of the corresponding nonzero coefficients.
Thus we naturally put
$\size(c):= \sum_{\sigma:\alpha_\sigma\ne 0} (\size(\sigma)+
\size(\alpha_\sigma))$.

For this representation, it is easy to check that the addition
 and subtraction of $\thedim$-chains, as well as the differentials,
can be computed in time polynomial in $\size(I)$ plus the size of
the chains involved. (For this, we need to observe that,
given a simplex $\sigma\in X(I)$, we can test whether it
is degenerate, since every degenerate $\sigma$ satisfies
$\sigma=s_i\partial_i\sigma$ for some~$i$.)

We will also need to work with chain complexes that are not
necessarily normalized chain complexes of simplicial sets.
We will need that the chain groups are ``effectively free,''
meaning that the chains are represented by coefficients
w.r.t.\ some fixed basis. The following definition is a direct
analog of the definition of a locally polynomial-time simplicial set,
and it includes the normalized chain complex of a locally
polynomial-time simplicial set as a special case.

\begin{defn}[Locally polynomial-time chain complex]\label{d:lpcc}
Let $\II$ be a parameter set as in Definition~\ref{d:lptss},
and let $(C(I)_*:I\in\II)$ be a parameterized chain complex,
i.e., a mapping assigning a chain complex to each $I\in\II$.
We say that such a parameterized chain complex
is a \emph{locally polynomial-time chain complex} if the following
hold.
\begin{enumerate}
\setlength{\itemsep}{1pt}
\setlength{\parskip}{0pt}
\setlength{\parsep}{0pt}
\item[\rm(i)]
For each $C(I)_*$ and each $\thedim$, there is a basis
$\Bas_\thedim=\Bas(I)_\thedim$ of $C(I)_\thedim$ (possibly infinite),
which we call the \emph{distinguished basis}\footnote{Chain
complexes with a distinguished basis for each chain group
are sometimes called \emph{cellular}.}
of $C(I)_\thedim$, and whose elements have some agreed-upon
encoding by strings. An arbitrary
$\thedim$-chain $c\in C(I)_\thedim$ is (uniquely) represented
as an integer linear combination of elements of $\Bas(I)_\thedim$,
i.e.,  by a finite list of elements of $\Bas(I)_\thedim$ and the
corresponding nonzero coefficients. (This also defines
the encoding size for chains.)
\item[\rm(ii)]
For every fixed $\thedim$,  there is an algorithm
for evaluating the differential $\diff_\thedim$ of $C(I)_*$, which
computes $\diff_\thedim(c)$ in time polynomial in $\size(I)+\size(c)$.
\end{enumerate}
\end{defn}

We note that in the representation of $\thedim$-chains as in (i),
the chains $c+c'$ and $c-c'$ can be computed in time polynomial
in $\size(c)+\size(c')$, even without including $\size(I)$.

If $(C(I)_*:I\in\II)$ and $(\tC(I)_*:I\in\II)$ are parameterized
chain complexes, then, in complete analogy with polynomial-time
simplicial maps in Definition~\ref{d:lptss}, we define a
\emph{polynomial-time chain map} $\varphi=(\varphi_I)_{I\in\II}\:
C_*\to\tC_*$, where each $\varphi_I$ is a chain map
$C(I)_*\to \tC(I)_*$,
such that for each fixed $\thedim$, $(\varphi_I)_\thedim(c)$
can be computed in time polynomial in $\size(I)+\size(c)$.

\heading{Changing the parameter or: preprocessing. }
Let $(X(J):J\in\JJ)$ be a parameterized simplicial set,
and let $F\:\II\to\JJ$ be a polynomial-time mapping
of another parameter set $\II$ into $\JJ$.
Then we can define a new parameterized simplicial set $(\tilde X(I):I\in\II)$
by $\tilde X(I):=X(F(I))$; if $X$ is locally polynomial-time, then so
is~$\tilde X$.

In our algorithms, $X$ can often be regarded as a version of $\tilde X$
``with preprocessing''. For this, the parameter $J$ will typically
be of the form $(I,G(I))$, where $I$ is the original parameter
and $G$ is some polynomial-time map. Here $G(I)$ represents
some auxiliary data computed from~$I$.

For example, if we regard the
Postnikov stage $P_\thedim(Y)$ as parameterized by the finite
simplicial complex $Y$, then by Definition~\ref{d:lptss}, the
algorithm for evaluating $\partial_i\sigma$ receives $Y$
and $\sigma$ as input. Thus, each time we want to
know the $i$th face of some simplex, all of the
computations are done from scratch.

In the algorithm from Theorem~\ref{t:ipost}
for constructing a Postnikov system, we will proceed differently:
given $Y$, we first compute, once and for all, some data based on
$Y$, such as the first $\thedim$ homotopy groups of $Y$.
Then we will represent $P_\thedim$ using these data (concretely,
as a twisted product of suitable Eilenberg--MacLane spaces),
instead of the ``raw'' representation by $Y$,
so that these computations can be reused in all subsequent
computations of face operators in $P_\thedim$. This will make
the computation of the face operators and other operations
with the Postnikov system much more efficient, although if
we care only about the distinction polynomial/non-polynomial,
both ways are equivalent.

\heading{Keeping the parameters implicit. }
Although a locally polynomial-time simplicial set $(X(I):I\in\II)$
is defined as a mapping assigning a simplicial set $X(I)$ to every value
of $I$, in most cases we can think of it as a single simplicial set
$X$. The exact nature of the parameter $I$ usually does not matter;
it may be useful to keep in mind that $X$ is actually parameterized,
but in most of the subsequent discussion, we will suppress the parameter.

This is in agreement with the common
practice in the literature on polynomial-time algorithms, where
phrases like ``the resulting graph has a polynomial size'' are used,
which are also formally imprecise but easily understood.

\heading{Converting finite simplicial complexes into simplicial sets. }
Here we make a slight digression and describe how a finite simplicial
complex, which is one of the possible kinds of inputs for
our algorithms, is (canonically)
converted into a simplicial set.

Given a finite simplicial complex $K$,
the corresponding simplicial set $\sset(K)$,
which in particular has the same geometric realization as $K$
and thus specifies the same topological space, is defined
as follows.
The $\thedim$-dimensional \emph{nondegenerate} simplices of $\sset(K)$
are just the $k$-simplices of $K$, with the face operators defined
in the obvious way. It remains to specify the degenerate simplices
and the face and degeneracy operators on them.
For this,
we can use a standard fact about simplicial sets:
every degenerate simplex $\tau$ can be expressed as
$s_{i_t}s_{i_{t-1}}\cdots s_{i_1}\sigma$, where $\sigma$ is a
uniquely determined nondegenerate simplex of $X$ and $i_1<i_2<\cdots<i_t$
is a uniquely determined sequence of integers. Thus, we can
represent $\tau$ by $\sigma$ and $i_1,\ldots,i_t$. With
this representation, the face and degeneracy operators can be
evaluated by simple rules; see, e.g.,
 \cite{Friedm08,May:SimplicialObjects-1992}.
(Also see \cite[Section~3]{Friedm08} for another,
simpler way of adding degenerate simplices to a simplicial
complex.)

Then $(\sset(K): K\in\FSC)$
forms a locally polynomial-time simplicial set, whose parameter
set $\FSC$ consists of all finite simplicial complexes.

More generally, we can consider the family of all finite simplicial
sets, which are given
by  lists of nondegenerate simplices for each
of the relevant dimensions and tables specifying the face
operators, and where the degenerate
simplices and degeneracy operators are represented as above.
Then the identity map on $\FSS$ forms a locally polynomial-time
simplicial set.

\subsection{Reductions, strong equivalences, and polynomial-time homology}

It turns out that the notion of locally polynomial-time simplicial
set is too weak for most computational purposes. We can inspect such
a simplicial set locally, but it is in general impossible to
compute useful global information about it, such as homology groups
or homotopy groups.

Here we introduce a stronger notion of simplicial set
with polynomial-time homology, modeled after simplicial
sets with effective homology due to Sergeraert et al.
This is a (parameterized) locally polynomial-time simplicial
set $X$ whose normalized chain complex $C_*(X)$ is, moreover, associated
with another, typically much smaller chain complex $\EC_*$,
which we can think of as a finitary approximation of $C_*(X)$.
(The notation $\EC_*$ follows \cite{SergerGenova}, and it should
suggest that $\EC_*$ is an ``effective version'' of $C_*$.)
The chain groups $\EC_\thedim$ have polynomially many
generators for every fixed $\thedim$, and thus we can compute
each homology group $H_\thedim(\EC_*)$ in polynomial time.
The association of
$\EC_*$ with $C_*(X)$ is such that these homology computations
in $\EC_*$ can be ``pulled back'' to $C_*(X)$.
We will now define the properties of $\EC_*$ and the
way it is associated with $C_*(X)$ in detail.

\begin{defn}[Globally polynomial-time chain complexes]\label{d:glob-ptime}
A \emph{globally polynomial-time chain complex}
is a locally polynomial-time chain complex $(\EC(I)_*:I\in\II)$
such that,  for each fixed $\thedim$, the chain group
$\EC(I)_\thedim$ is finitely generated, and there is an
algorithm that, given $I\in \II$, outputs
the list of elements of the distinguished basis $\Bas(I)_\thedim$ of
$\EC(I)_\thedim$, in time bounded by a polynomial in $\size(I)$
(and in particular, the rank of $\EC(I)_\thedim$ is bounded
by a polynomial in $\size(I)$).
\end{defn}

We note that, for a globally polynomial-time $\EC_*$
and each fixed $\thedim$, we can compute the matrix
of the differential $\diff_\thedim\:\EC_{\thedim}\to\EC_{\thedim-1}$
w.r.t.\ the distinguished bases in polynomial
time---we just evaluate $\diff_\thedim$
on each element of the distinguished basis $\Bas_\thedim$.
Then the homology groups $H_\thedim(\EC_*)$
is computed using a Smith normal form algorithm
applied to the matrices of $\diff_\thedim$ and
$\diff_{\thedim+1}$, as is explained in standard textbooks
(such as \cite{Munkres}). Polynomial-time algorithms for
the Smith normal form are nontrivial but known
\cite{KannanBachem}; also see
\cite{Storjohann:NearOptimalAlgorithmsSmithNormalForm-1996}
for apparently the asymptotically fastest deterministic algorithm.

\heading{Globally polynomial-time Abelian groups. }
By the above, we can compute $H_\thedim(\EC_*)$ in polynomial
time. We represent its isomorphism type%
\footnote{To get a bijective correspondence with isomorphism types,
we should ask for divisibility $m_1|\cdots|m_s$.
We do not care about uniqueness, however, and thus we will not require this.}
in the usual way, as a direct sum
$\Z^r\oplus (\Z/m_1)\oplus (\Z/m_2)\oplus\cdots\oplus (\Z/m_s)$.
But in our algorithms, we are not interested just in knowing this
description of the homology group; we will also need to work
with its elements, with homomorphisms into it, etc.
Moreover, since the chain complex $\EC_*$ is parameterized,
the homology group $H_\thedim(\EC_*)$ should be regarded
as parameterized as well (and similarly for homotopy groups
of parameterized simplicial sets). We thus define a globally
polynomial-time Abelian group in analogy with a globally
polynomial-time chain complex.

First, let $\MM$ be the set of all $(r+s)$-tuples
$\mm=(0,0,\ldots,0,m_1,\ldots,m_s)$ specifying isomorphism
types of finitely generated Abelian groups in the way introduced
in Section~\ref{s:encsize}. For $\mm\in\MM$, let $\AB(\mm)$
be the group $\Z^r\oplus (\Z/m_1)\oplus\cdots\oplus (\Z/m_s)$,
with elements represented by $(r+s)$-tuples $(\alpha_1,\ldots,\alpha_{r+s})$,
$\alpha_1,\ldots,\alpha_r\in\Z$, $\alpha_{r+i}\in \Z/m_i$.
Here $\AB(\mm)$ can be regarded as a canonical representation
of an Abelian group with the isomorphism type~$\mm$.

Now we define a \emph{parameterized Abelian group} and
\emph{locally polynomial-time Abelian group} in an obvious
analogy with the corresponding notions for simplicial sets
and chain complexes.
A \emph{globally polynomial-time Abelian
group} $(\pi(I):I\in\II)$ is a locally polynomial-time Abelian
group equipped with a polynomial-time algorithm that,
given $I\in\II$, returns an $\mm\in\MM$  specifying the isomorphism
type of $\pi(I)$, and with a polynomial-time isomorphism
of $\pi(I)$ with $\AB(\mm)$. In more detail, in time polynomial
in $\size(I)$ we can compute a basis $(b_1,b_2,\ldots,b_{r+s})$
of $\pi(I)$ such that $b_i$ generates the $i$th cyclic summand
isomorphic to $\Z$ (for $i\le r$) or $\Z/m_{i-r}$ (for $i>r$)
in an expression of $\pi(I)$ as a direct sum. Moreover, given
an arbitrary element $a\in\pi(I)$, in time polynomial
in $\size(I)+\size(a)$ we can compute the coefficients
$\alpha_1,\ldots,\alpha_{r+s}$  such that $a=\sum_{i=1}^{r+s}\alpha_i b_i$.
This provides the isomorphism $\pi(I)\to \AB(\mm)$, and the
inverse mapping is also obviously polynomial-time computable.

We now consider the globally polynomial-time chain complex
$\EC_*$ parameterized by $\II$.  We want to regard $H_\thedim(\EC_*)$
as a globally polynomial-time Abelian group parameterized
by $\II$. To this end, we need that the computation of
$H_\thedim(\EC(I)_*)$ returns its isomorphism type $\mm$,
and also fixes an isomorphism of $H_\thedim(\EC(I)_*)$
with $\AB(\mm)$. Such an isomorphism is naturally
obtained from the Smith normal form algorithm.\footnote{Formally,
for this we need the Smith normal form algorithm to be
\emph{deterministic}, so that it always returns the same
isomorphism for a given $I$ (which need not be true for
a randomized algorithm, for example). However, in an actual
implementation, this issue does not arise, since anyway we
want to store in memory the Smith normal form once computed for a
given $I$, in order to avoid repeated computations.}
In this way, $H_\thedim(\EC_*)$ becomes a globally
polynomial-time Abelian group parameterized by $\II$.

Moreover, given a chain $z\in Z_\thedim(\EC(I)_*)$, we can
compute in polynomial time the corresponding homology class $[z]
\in H_\thedim(\EC(I)_*)$. This defines a polynomial-time homomorphism
$Z_\thedim(\EC_*)\to H_\thedim(\EC_*)$, also parameterized by~$\II$.
Slightly more generally,
given a chain $c\in \EC_\thedim$, we can decide
whether $c$ is a cycle, and if yes, compute $[c]$.
Moreover, if $[c]$ is zero, that is, if $c$ is a boundary,
we can also compute a ``witness,'' i.e., a $(\thedim+1)$-chain $b$
 with $c=\diff_{\thedim+1}b$. Conversely, given
 $h\in H_\thedim(\EC_*)$,
we can compute a \emph{representing cycle}, i.e.,
$z\in Z_\thedim(\EC_*)$ with $[z]=h$.
All of these calculations are easily done in polynomial time using
the Smith normal form of the matrices of the differentials.

\heading{Reductions. } Now we start discussing the way of associating
a ``small'' chain complex $\EC_*$ with a ``big'' chain complex $C_*$.
First we deal with the usual setting of homological algebra,
where we consider individual chain complexes, rather than
parameterized ones, and then we add some remarks on transferring
the notions to the setting of parameterized chain complexes and maps.

The most common way in algebraic topology of making two chain complexes
$C_*$ and $\tC_*$ ``equivalent'' is \emph{chain homotopy equivalence},
but for effective homology and polynomial-time homology, it
is more convenient to use two special cases
of chain homotopy equivalences, namely, \emph{reduction} and \emph{strong equivalence}.

If $f,g\:C_*\to\tC_*$ are two chain maps, then a
\emph{chain homotopy} of $f$ and $g$ is a sequence $(h_\thedim)_{\thedim\in\Z}
$ of homomorphisms, where $h_\thedim\: C_\thedim\to\tC_{\thedim+1}$
(raising the dimension by one),
such that $g_\thedim-f_\thedim=\tilde\diff_{\thedim+1} h_\thedim +
h_{\thedim-1}\diff_\thedim$. Chain maps and chain homotopies
can be regarded as algebraic counterparts of continuous maps
of spaces and their homotopies, respectively. In particular,
two chain-homotopic chain maps induce the same map in homology.

%Finally, a \emph{chain homotopy equivalence} of $C_*$ and
%$\tC_*$ consists of chain maps $f\:C_*\to\tC_*$ and
%$g\:\tC_*\to C_*$ and chain homotopies $h,\tilde h$,
%where $h$ is a chain homotopy of $gf$ with $\id_{C_*}$
%and $\tilde h$ is a chain homotopy of $fg$ with $\id_{\tC_*}$.

\begin{defn}[Reduction\footnote{In
a part of the literature, other notions such as  \emph{chain
contraction} or \emph{strong deformation retraction} are used
instead of the word \emph{reduction}. For instance  Eilenberg
and Mac~Lane
\cite[Section~12]{EilenbergMacLane:GroupsHPin1-1953} use the
word \emph{contraction}, while \emph{reduction}
has a different meaning there.}]\label{d:redu}
Let $C_*$ and $\tC_*$ be chain complexes.
A \emph{reduction} $\rho$ from $C_*$ to $\tilde C_*$ consists
of three maps $f=(f_\thedim)_{\thedim\in\Z},g=
(g_\thedim)_{\thedim\in\Z},h=(h_\thedim)_{\thedim\in\Z}$, such that
\begin{enumerate}
\setlength{\itemsep}{1pt}
\setlength{\parskip}{0pt}
\setlength{\parsep}{0pt}
\item[\rm(i)] $f\:C_*\to \tilde C_*$ and $g\:\tilde C_*\to C_*$
are chain maps;
\item[\rm(ii)] the composition $f  g\:\tilde C_*\to\tilde C_*$
is equal to the identity $\id_{\tilde C_*}$, while the composition
$g  f\:C_*\to C_*$ is chain-homotopic to $\id_{C_*}$,
with $h\:C_*\to C_*$ providing the chain homotopy, i.e.~$\id_{C_*}-gf=\diff h+h\diff$; and
\item[\rm(iii)] $f  h=0$, $h  g=0$, and $h  h=0$.
\end{enumerate}
We write
\[
C_*\redu \tC_*
\]
if there is a reduction from
$C_*$ to $\tC_*$.
%and $C_*\reduP \tC_*$ if both $C_*$
%and $\tC_*$ are locally polynomial-time and there
%is a polynomial-time reduction.
\end{defn}

A reduction $C_*\redu \tC_*$ can be depicted by the following diagram:
\[
\xymatrix{
C_* \POS\save[]+L+<5pt,0pt>*++[o]{}\ar@(ul,dl){}_h\restore
\ar@/^/[rr]^f && \tC_* \ar@/^/[ll]^g
}
\]
Intuitively, such a reduction is a tool that allows us to
reduce questions about homology of a ``big'' chain complex $C_*$
to questions about homology of a ``smaller'' chain complex $\tC_*$.
In particular, the existence of a reduction $C_*\redu \tC_*$
implies that $H_\thedim(C_*)\cong H_{\thedim}(\tC_*)$ for all~$\thedim$.

It is easily checked that $(f,g,h)$ is a reduction $C_*\redu \tC_*$ and
$(f',g',h')$ is a reduction $\tC_*\redu \dtC_*$,
then there is a reduction $C_*\redu  \dtC_*$,
namely,
$%\begin{equation}\label{e:comp-redu}
(f' f,g g', h+gh'f)
$ %\end{equation}
(see, e.g., \cite[Proposition~59]{SergerGenova}).
We will also need a (straightforward) extension to composing
a larger number of reductions (the proof is omitted).

\begin{lemma}\label{l:comp-r0}
Let $C_*^{(1)},\ldots,C_*^{(n)}$ be chain complexes, and let
$\rho^{(i)}=(f^{(i)},g^{(i)},h^{(i)})$ be a reduction
$C_*^{(i)}\redu C_*^{(i+1)}$, $i=1,2,\ldots,n-1$. Then
the reduction $(f,g,h)\:C_*^{(1)}\redu C_*^{(n)}$ obtained by composing
these reductions is given by
$f=f^{(n-1)}f^{(n-2)}\cdots f^{(1)}$, $g=g^{(1)}g^{(2)}\cdots g^{(n-1)}$,
and
$$
h=h^{(1)}+g^{(1)}h^{(2)}f^{(1)}+\cdots+
g^{(1)}g^{(2)}\cdots g^{(n-2)}h^{(n-1)}f^{(n-2)}\cdots f^{(1)}.
$$
\end{lemma}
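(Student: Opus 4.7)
The plan is to prove this by induction on $n$, using the binary composition rule for reductions that was recalled immediately before the lemma statement, namely that if $(f,g,h)\:C_*\redu \tC_*$ and $(f',g',h')\:\tC_*\redu\dtC_*$ are reductions, then $(f'f,\,gg',\,h+gh'f)$ is a reduction $C_*\redu\dtC_*$.

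For the base case $n=2$, there is only one reduction $\rho^{(1)}$ and the claimed formulas trivially give $f=f^{(1)}$, $g=g^{(1)}$, $h=h^{(1)}$. For the inductive step, assume the formula has been established for $n-1$ reductions, giving a reduction $\bar\rho=(\bar f,\bar g,\bar h)\:C_*^{(1)}\redu C_*^{(n-1)}$ with
\[
\bar f=f^{(n-2)}\cdots f^{(1)},\qquad \bar g=g^{(1)}\cdots g^{(n-2)},
\]
\[
\bar h=\sum_{i=1}^{n-2}g^{(1)}\cdots g^{(i-1)}\,h^{(i)}\,f^{(i-1)}\cdots f^{(1)},
\]
(with the convention that empty products are the identity). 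Now compose $\bar\rho$ with $\rho^{(n-1)}=(f^{(n-1)},g^{(n-1)},h^{(n-1)})$ using the binary rule. The first two components give
\[
f^{(n-1)}\bar f=f^{(n-1)}f^{(n-2)}\cdots f^{(1)},\qquad \bar g\, g^{(n-1)}=g^{(1)}g^{(2)}\cdots g^{(n-1)},
\]
matching the claimed $f$ and $g$. For the homotopy component, the binary rule yields
\[
\bar h+\bar g\,h^{(n-1)}\,\bar f = \bar h+g^{(1)}\cdots g^{(n-2)}\,h^{(n-1)}\,f^{(n-2)}\cdots f^{(1)},
\]
which is precisely the additional $(n-1)$st summand needed to extend the sum defining $\bar h$ to the formula stated for $h$.

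The only genuine content is the binary composition rule, which is the quoted result from \cite{SergerGenova}; the rest is bookkeeping to check that the terms in $h$ telescope in the claimed manner. There is no real obstacle here beyond keeping track of indices; one small point worth noting is that each term in $\bar h$ is left untouched when passing from $n-1$ to $n$ (the binary rule only appends a new term), so a clean induction goes through without needing to reindex previous summands.
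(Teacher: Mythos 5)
Your proof is correct. The paper explicitly omits the proof of this lemma (it says "the proof is omitted"), and your induction on $n$ using the binary composition rule is exactly the routine argument the authors had in mind; there is nothing to compare against and nothing further to add.
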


\heading{Strong equivalences. }
While reductions $C_*\redu \tC_*\redu \dtC_*$
compose to a reduction $C_*\redu\dtC_*$,
in some constructions one naturally arrives at a different kind
of situation:
\begin{equation}\label{e:steq}
C_*\lredu \dtC_*\redu \tC_*.
\end{equation}
Here we have no natural way of composing the reductions to obtain
a reduction between $C_*$ and $\tC_*$.
%Instead, we could compose
%them as chain homotopy equivalences, and obtain a chain homotopy
%equivalence of $C_*$ with $\tC_*$. However, for reasons mentioned
%later on, it is more convenient
For algorithmic purposes, we regard
the situation (\ref{e:steq}) as a primitive notion,
called \emph{strong chain homotopy equivalence} or just
strong equivalence.

\begin{defn}[Strong equivalence] A \emph{strong equivalence}
of chain complexes $C_*$ and $\tC_*$, in symbols
$C_*\steq \tC_*$, means that there
exists another chain complex $\dtC_*$ and reductions
$C_*\lredu \dtC_*\redu \tC_*$.
\end{defn}

\begin{lemma}\label{l:tr-steq0}
Strong equivalence is transitive:
if $C_*\steq \tC_*$ and $\tC_*\steq \dtC_*$,
then $C_*\steq \dtC_*$.
\end{lemma}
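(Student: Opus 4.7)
The plan is to unpack each of the two strong equivalences into its constituent reductions, obtaining chain complexes $A_*$ and $B_*$ and reductions $C_* \lredu A_* \redu \tC_*$ and $\tC_* \lredu B_* \redu \dtC_*$. The transitivity then reduces to the following key lemma: for any two reductions with the same target, say $A_* \redu \tC_* \lredu B_*$, there is a chain complex $E_*$ together with reductions $E_* \redu A_*$ and $E_* \redu B_*$. Granted the lemma, Lemma~\ref{l:comp-r0} composes $E_* \redu A_* \redu C_*$ and $E_* \redu B_* \redu \dtC_*$ into the pair of reductions witnessing $C_* \steq \dtC_*$.

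To construct $E_*$, denote the two reductions into $\tC_*$ by $(f_A, g_A, h_A)$ and $(f_B, g_B, h_B)$. I propose the \emph{homotopy pullback} chain complex, with underlying graded group
\[
E_\thedim := A_\thedim \oplus \tC_{\thedim+1} \oplus B_\thedim
\]
and differential
\[
\diff^E(a, c, b) := \bigl(\diff a,\; f_A(a) - f_B(b) - \diff c,\; \diff b\bigr).
\]
A short calculation using only that $f_A, f_B$ are chain maps shows $\diff^E \circ \diff^E = 0$. The forward maps of the two desired reductions will be the projections $\pi_A(a,c,b) := a$ and $\pi_B(a,c,b) := b$, with sections $\sigma_A(a) := (a, 0, g_B f_A(a))$ and $\sigma_B(b) := (g_A f_B(b), 0, b)$; both are easily seen to be chain maps, using $f_B g_B = \id$ and $f_A g_A = \id$ respectively. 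The chain homotopies are
\[
H_A(a, c, b) := \bigl(0,\, 0,\, g_B(c) - h_B(b)\bigr), \qquad H_B(a, c, b) := \bigl(-h_A(a) - g_A(c),\, 0,\, 0\bigr).
\]

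Verification of the six reduction axioms for each of the pairs (the homotopy formula $\sigma \pi - \id = \diff^E H + H \diff^E$ together with the side conditions $\pi H = 0$, $H \sigma = 0$, $H^2 = 0$) then proceeds by direct computation, each identity reducing in one or two steps to a defining property of the given reductions, namely $f_\star g_\star = \id$, $\id - g_\star f_\star = \diff h_\star + h_\star \diff$, $f_\star h_\star = 0$, $h_\star g_\star = 0$, and $h_\star^2 = 0$. The main obstacle is not conceptual but lies in choosing the correct model for $E_*$ and guessing the correct homotopies; the rest is mechanical bookkeeping. A naive alternative such as the strict pullback $\{(a,b) \in A_* \oplus B_* : f_A(a) = f_B(b)\}$ would also compute the correct homology, but would not carry a natural distinguished basis---and we will need such a basis later, when we upgrade this construction to its parameterized, polynomial-time version required by Definition~\ref{d:lpcc}.
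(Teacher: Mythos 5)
Your proof is correct, but takes a genuinely different route from the paper's. Both arguments reduce to the same core task, namely to produce, from two reductions $A_*\redu\tC_*\lredu B_*$ with a common target, one auxiliary complex admitting a reduction onto each of $A_*$ and $B_*$. The paper's choice is the \emph{homotopy pushout} (the ``double mapping cylinder'') along the \emph{backward} maps $g_A\colon\tC_*\to A_*$ and $g_B\colon\tC_*\to B_*$: its chain groups are $A_\thedim\oplus\tC_{\thedim-1}\oplus B_\thedim$, with $\tC_*$ shifted \emph{down}, and a differential built from $g_A,g_B$. You instead form the \emph{homotopy pullback} along the \emph{forward} maps $f_A,f_B$, with $\tC_{\thedim+1}$ shifted \emph{up}. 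These are dual constructions and both do the job; I have verified that the projections, sections and homotopies you propose satisfy the reduction axioms, using exactly the identities you list (chain-map compatibility of $f_\star,g_\star$, the homotopy identity for $h_\star$, and the three side conditions). There is only a cosmetic sign mismatch with the paper's conventions: Definition~\ref{d:redu}(ii) demands $\id-gf=\diff h+h\diff$, while you wrote $\sigma\pi-\id=\diff^E H+H\diff^E$; replacing $H_A,H_B$ by $-H_A,-H_B$ fixes this without affecting the side conditions. Both complexes carry the obvious distinguished basis (that of a direct sum), so either version upgrades to the polynomial-time statement of Corollary~\ref{c:steq}.

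One small blemish of the upward shift: $E_{-1}=\tC_0$ is generally nonzero (e.g.\ $\tC_0\cong\Z$ for the chain complex of a nonempty simplicial set), so $E_*$ violates the paper's standing convention that chain complexes vanish in negative degrees. This is harmless here --- the definitions of reduction and strong equivalence do not rely on that convention, and one checks $H_{-1}(E_*)=0$ since $f_A$ is surjective --- but the paper's downward shift $\tC_{\thedim-1}$ sidesteps the issue entirely and is slightly tidier for that reason.
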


\begin{proof} There are several proofs available.
One of them follows \cite[Proposition~124]{SergerGenova}
(using the algebraic mapping cylinder). Another possibility
is to regard reductions as special cases of chain homotopy
equivalences, which are closed under composition, and then
show that a chain homotopy equivalence can be converted into
a strong equivalence, also using a suitable mapping cylinder---see,
e.g., \cite{BarnesLambe}, \cite[Sec.~3]{RomeroEllisRubio}.

Here we offer yet another short proof.
Let us consider strong equivalences $C_*\lredu A_*\redu \tC_*$
and $\tC_*\lredu A'_*\redu \dtC_*$. In view of Lemma~\ref{l:comp-r0}
it is suffices to exhibit a strong equivalence
$A_*\steq A'_*$.

Let the  reduction $A_*\redu \tC_*$ be $(f,g,h)$ and
let the reduction $A'_*\redu \tC$ be  $(f',g',h')$.
We construct a new chain complex $D_*$, the
\emph{double mapping cylinder}
of the pair of maps $A_*\xleftarrow{g}\tC_*\xrightarrow{g'}A'_*$
(this construction is analogous to the mapping cylinder
introduced earlier).
Its chain groups are
\[D_\thedim:=A_\thedim\oplus \tC_{\thedim-1}\oplus A'_\thedim\]
and the differential is given by $\diff^{D}(a,c,a')
:=(\diff(a)-g(c),-\tilde\diff(c),\diff'(a')+g'(c))$
(where $\diff,\tilde\diff,\diff'$ are differentials
in $A_*$, $\tC_*$, and $A'_*$, respectively).
It is easily checked that $D_*$ indeed forms a chain complex.
%and clearly, it is locally polynomial-time for $A_*$, $\tC_*$, and $A'_*$
%locally polynomial-time.

We now describe a reduction $(F,G,H)\:D_*\redu A_*$; we set
\[F(a,c,a'):=a+gf'(a'),\ G(a)=(a,0,0),\ H(a,c,a')=(0,f'(a'),h'(a')).\]
The reduction $(F',G',H')\:D_*\redu A'_*$ is obtained almost symmetrically as
\[F'(a,c,a'):=a'+g'f(a),\ G'(a')=(0,0,a'),\ H'(a,c,a')=(h(a),-f(a),0).\]
Checking that both $(F,G,H)$ and $(F',G',H')$ are indeed reductions is routine and we omit it.
\end{proof}

\heading{Polynomial-time reductions and strong equivalences. }
Let $(C(I)_*:I\in\II)$ and $(\tC(I)_*:I\in\II)$
be two locally polynomial-time chain complexes with the \emph{same} parameter
set. A \emph{polynomial-time reduction} of $C_*$ to $\tC_*$,
in symbols
\[C_*\reduP\tC_*,
\]
 is a triple $\rho=(f,g,h)$.
Here $f=(f_I)_{I\in\II}$ is a polynomial-time chain map
$C_*\to\tC_*$,
$g=(g_I)_{I\in\II}$ is a polynomial-time chain map $\tC_*\to C_*$,
and $h=(h_I)_{I\in\II}$ is a polynomial-time chain homotopy
$C_*\to C_*$, defined in obvious analogy with a
polynomial-time chain map.
For each $I$,
$(f_I,g_I,h_I)$ form a reduction $C(I)_*\redu \tC(I)_*$
according to Definition~\ref{d:redu}.

Similarly, we define a polynomial-time strong equivalence
of two locally polynomial-time chain complexes, $C_*\steqP\tC_*$,
with the middle chain complex also locally polynomial-time and
with the same parameterization as $C_*$ and $\tC_*$.

By the fact that a composition of any constant number of
polynomial-time maps is polynomial-time, it is easy to check
that the proof of Lemma~\ref{l:tr-steq0}
yields the following.

\begin{corol}\label{c:steq}
Polynomial-time strong equivalence of locally polynomial-time
chain complexes is transitive:
$C_*\steqP \tC_*$ and $\tC_*\steqP \dtC_*$
implies $C_*\steq \dtC_*$.
%The composition of a constant
%number of polynomial-time reductions is a polynomial-time reduction.
\end{corol}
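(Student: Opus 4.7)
The plan is to revisit the proof of Lemma~\ref{l:tr-steq0} and verify that, when all the input reductions are polynomial-time and all chain complexes involved are locally polynomial-time, every object and map constructed along the way is again polynomial-time. Since the proof only uses a bounded (constant) number of composition steps and simple algebraic combinations of the given data, the composition rule for polynomial-time mappings will take care of the rest.

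Concretely, suppose we are given polynomial-time strong equivalences $C_*\lreduP A_*\reduP \tC_*$ and $\tC_*\lreduP A'_*\reduP \dtC_*$, with all five chain complexes locally polynomial-time (and sharing a common parameter set $\II$). First I would construct the double mapping cylinder $D_*$ of $A_*\xleftarrow{g}\tC_*\xrightarrow{g'}A'_*$ exactly as in the proof of Lemma~\ref{l:tr-steq0}, taking as distinguished basis of $D_\thedim$ the disjoint union of the distinguished bases of $A_\thedim$, $\tC_{\thedim-1}$, and $A'_\thedim$. Since the differentials of $A_*$, $\tC_*$, $A'_*$ are polynomial-time and the chain maps $g$, $g'$ are polynomial-time, the formula $\diff^D(a,c,a')=(\diff a - g(c),-\tilde\diff c,\diff' a'+g'(c))$ defines a polynomial-time differential, so $D_*$ is locally polynomial-time (parameterized by $\II$).

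Next I would write down the two reductions $(F,G,H)\:D_*\redu A_*$ and $(F',G',H')\:D_*\redu A'_*$ exactly as in the proof of Lemma~\ref{l:tr-steq0}. Each of their components is a fixed constant-depth expression built from the given polynomial-time data $f,g,h,f',g',h'$ and the identity, so each of $F,G,H,F',G',H'$ is polynomial-time. Thus we obtain polynomial-time reductions $D_*\reduP A_*$ and $D_*\reduP A'_*$.

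Finally, I would compose: Lemma~\ref{l:comp-r0} (together with the observation that composing finitely many polynomial-time reductions yields a polynomial-time reduction) gives polynomial-time reductions $D_*\reduP A_*\reduP C_*$ and $D_*\reduP A'_*\reduP \dtC_*$, which by definition constitute a polynomial-time strong equivalence $C_*\steqP\dtC_*$. The only thing to check carefully is that each intermediate object (notably $D_*$) admits the kind of bounded-size basis encoding required by Definition~\ref{d:lpcc}; the main potential obstacle would be a combinatorial blow-up in the representation of chains, but here the basis of $D_\thedim$ is a disjoint union of three existing bases and every constructed map sends a basis element to a sum of boundedly many basis elements of the target, so no blow-up occurs and the polynomial-time bounds propagate through unchanged.
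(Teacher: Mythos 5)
Your proof is correct and takes essentially the same approach the paper intends: the paper's proof of this corollary is just the one-line observation that the double-mapping-cylinder argument of Lemma~\ref{l:tr-steq0} goes through with a constant number of polynomial-time compositions, and you have simply spelled that out, including the check that $D_*$ is locally polynomial-time and that each of $F,G,H,F',G',H'$ is a bounded-depth expression in the given polynomial-time data. (Note that you in fact establish the slightly stronger conclusion $C_*\steqP\dtC_*$, which is evidently what the statement intends.)
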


\heading{Polynomial-time homology.} With the notions
of polynomial-time strong equivalence and globally polynomial-time
chain complex, the definition of polynomial-time homology
is now straightforward.

\begin{defn}[Chain complexes and simplicial sets with polynomial-time homology]
We say that a parameterized chain complex $C_*$ is equipped with
\emph{polynomial-time homology} if $C_*$ is locally polynomial-time
and there are a globally polynomial-time chain complex $\EC_*$
and a polynomial-time strong equivalence $C_*\steqP\EC_*$.

A parameterized simplicial set $X$ is equipped with polynomial-time homology
if $X$ is locally polynomial-time and its normalized chain
complex $C_*(X)$ is equipped with polynomial-time homology.
\end{defn}

We should perhaps stress that equipping a parameterized simplicial set $X$
with polynomial-time homology does \emph{not} mean only
the ability of computing the homology groups of $X(I)$ in time polynomial
in $\size(I)$ (for every fixed dimension); this ability is a consequence
of polynomial-time homology, but in itself it would not be sufficient.

For one thing, if $X$ is equipped with polynomial-time homology,
then for $C_*(X)$
we can do all of the computations mentioned
after Definition~\ref{d:glob-ptime}: finding a representative
of a given homology class, the
homology class of a given chain,
and a witness for being a boundary.

Moreover, the definition of polynomial-time homology, following
the earlier notion of effective homology by Sergeraert et al., is designed
so that it has the following meta-property: if $X^{(1)},\ldots,X^{(t)}$
are simplicial sets equipped with polynomial-time homology
and $\Phi$ is a ``reasonable'' way of constructing a new
simplicial set from $t$ old ones, then the simplicial set
$\Phi(X^{(1)},\ldots,X^{(t)})$
can also be equipped with polynomial-time homology
(some of the constructions also involve polynomial-time simplicial
maps, polynomial-time chain maps, etc.).
Of course, this is only a guiding principle, and for every specific
construction $\Phi$ used in our algorithm, we need a corresponding
result about preserving polynomial-time homology by~$\Phi$.
The next section is devoted to such results.

The reader may also wonder what are homology computations good for
in algorithms for computing homotopy groups and Postnikov systems.
The connection is via the \emph{Hurewicz isomorphism}, which in
its simplest form asserts that, for a 1-connected space $Y$,
the first nonzero homotopy group of $Y$ occurs in the same dimension
as the first nonzero homology group, and these two groups
are isomorphic. Thus, roughly speaking,
to find $\pi_\thedim(Y)$, the Postnikov system
algorithm ``kills'' the first $\thedim-1$ homotopy groups
of $Y$ by constructing the mapping cone of $\varphi_{\thedim-1}\:Y\to
P_{\thedim-1}$ with polynomial-time homology,
and then it computes the appropriate homology group of this cone.

Let us remark that in \cite{pKZ1}, polynomial-time homology was defined
using only reductions, rather than strong equivalences (since strong
equivalences were not needed there). Of course, a reduction is
a special case of strong equivalence, so the definition here
is more permissive.

\section{A toolbox of
operations for polynomial-time homology}\label{s:ops}

In this longish section we will build a repertoire of algorithmic
operations on simplicial sets and chain complexes, in such a way that
if the input objects come with polynomial-time homology, the output
object is also equipped with polynomial-time homology.

As was mentioned in the introduction, we mostly review known
methods, developed for effective homology and based on much older
work by algebraic topologists. We try to make the presentation
streamlined and mostly self-contained, and in particular,
we describe the algorithms in full, sometimes referring
to the literature for details of proofs. Moreover, there are places
where polynomiality requires extra analysis or assumptions;
most notably, Section~\ref{s:prod} (products of many factors)
and Section~\ref{s:EML2} (polynomial-time homology for $K(\Z/m,1)$)
contain some new material.

For the rest of the paper, we will use only three specific results
of this section: Proposition~\ref{p:mcone} (mapping cone),
%Proposition~\ref{p:twistprod} (twisted product), and
Corollary~\ref{c:pullback} (a certain pullback operation), and
Theorem~\ref{t:EML} (polynomial-time homology for Eilenberg--MacLane
spaces). But we will also need some of the notions and simple
facts introduced here.

Let us remark that some of the operations can be implemented
in several different ways. For example, polynomial-time homology
for $K(\Z/m,1)$ can most likely be obtained directly
by modifying the method of \cite{pKZ1} used for $K(\Z,1)$,
and for the passage from $K(\pi,\thedim)$ to $K(\pi,\thedim+1)$,
one could also use the method in \cite[Chap.~4]{RealThesis}
(also see \cite{AlvarezArmarioFrauReal}).
Our main criterion for selecting
among the various possibilities was simplicity of presentation
and general applicability of the tools. 

Moreover, the chain complexes that appear naturally during our construction of Postnikov systems can often be equipped with an additional algebraic structure. For instance the chain complex $C_*(K(\pi,\thedim))$ has a structure of the so-called Hopf-algebra; that is, $C_*(K(\pi,\thedim))$ is endowed with an algebra and a coalgebra structures that are compatible in some strong sense.
The structure is often ``transferred'' through the chain equivalences to the globally polynomial-time counterparts. 

As was suggested by a referee, it is possible that using this additional structure might lead to an algorithm more efficient in practice. 
The polynomial running-time bounds might also improve,  and so
investigating the algorithmic use of these additional
structures is a worthwhile research direction. On the other hand,
in view of the W[1]-hardness result \cite{Mat-homotopyW1} mentioned above,
such improvements cannot remove the dependence of the degree of the
polynomial on $\thedim$.
Thus, since our goal at this stage is to get polynomial-time
algorithms, and in order to keep the presentation simple, we do not
discuss these additional algebraic structures in this paper.

\subsection{Products}\label{s:prod}

We recall that the \emph{product} $X\times Y$ of simplicial
sets $X$ and $Y$ is the simplicial set whose
$\thedim$-simplices are ordered pairs $(\sigma,\tau)$,
where $\sigma\in X_\thedim$ and $\tau\in Y_\thedim$. The face and
degeneracy operators are applied to such pairs componentwise.
We have $|X\times Y|\cong|X|\times |Y|$ for geometric
realizations.\footnote{To be precise,
the product of topological spaces on the right-hand side
should be taken in the category of $k$-spaces; but for the
spaces we encounter, it is the same as the usual product
of topological spaces.} The definition of the product is deceptively
simple, but actually it hides a sophisticated way of triangulating
the product (and degenerate simplices play a crucial role)---see
\cite{SergerGenova} or~\cite{Friedm08} for an explanation.

As shown by Sergeraert et al.\ as
one of the first steps in the theory of effective homology,
if effective homology is available for  $X$ and $Y$, then
it can also be obtained for $X\times Y$. The core of this
result is the Eilenberg--Zilber theorem (see, e.g.,
\cite[Theorem~123]{SergerGenova}), which provides a reduction
of $C_*(X\times Y)$ to the tensor product $C_*(X)\otimes C_*(Y)$,
and which goes back to Eilenberg and Mac~Lane
\cite{EilenbergMacLane:GroupsHPin1-1953,EilenbergMacLane:GroupsHPin2-1954}.
The proof immediately shows that polynomial-time homology
for $X,Y$ yields polynomial-time homology for $X\times Y$.

However, this works directly only
for products of two, or constantly many, factors,
while we need to deal with products
$X^{(1)}\times\cdots \times X^{(n)}$ of arbitrarily many factors.
There the situation with polynomiality
is somewhat more subtle, and we will
actually need an additional condition on the $X^{(i)}$'s
in order to obtain polynomial-time homology.
We begin with defining the notion needed for the extra condition.

\begin{defn}[$\thedim$-reduced] A simplicial set $X$ is
\emph{$\thedim$-reduced}, where $\thedim\ge 0$ is an integer,
if $X$ has a single $0$-simplex (vertex) and no nondegenerate
simplices of dimensions $1$ through~$k$.
We call a chain complex $C_*$ \emph{$\thedim$-reduced} if
$C_0\cong \Z$ and $C_i=0$ for $1\le i\le\thedim$.
\end{defn}

We remark that $\thedim$-reducedness is a very useful property
of simplicial sets, which has no analog for simplicial complexes.
For example, being $\thedim$-reduced is an easily checkable certificate
for $\thedim$-connectedness.

\begin{prop}[Product with many factors]\label{p:bigprod}
Let $(X(I):I\in\II)$ be a simplicial set with polynomial-time
homology. Let us form a new parameter set
$\JJ=\bigcup_{n=1}^\infty \II^n$, where $\II^n$ is the $n$-fold
Cartesian product, and let $(W(J):J\in\JJ)$ be the parameterized
simplicial set of products, with $W(I_1,I_2,\ldots,I_n):=
X(I_1)\times \cdots\times X(I_n)$. For $J=(I_1,\ldots,I_n)\in\JJ$,
let $\size(J)=\sum_{i=1}^n\size(I_i)$, and for a simplex
$\sigma=(\sigma_1,\ldots,\sigma_n)\in W(J)$, let
$\size(\sigma)=\sum_{i=1}^n\size(\sigma_i)$.
Let us also assume that
all the $X(I)$ and all the chain complexes witnessing
polynomial-time homology for $X$ are $0$-reduced.
Then $W$ can be equipped with polynomial-time homology.
\end{prop}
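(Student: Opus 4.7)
The plan is to construct a polynomial-time strong equivalence
\[
C_*(W(J)) \steqP \EC^\otimes_*, \qquad
\EC^\otimes_* := \EC^{(1)}_* \otimes \cdots \otimes \EC^{(n)}_*,
\]
where $\EC^{(i)}_*$ is the globally polynomial-time effective chain complex provided by the polynomial-time homology of $X(I_i)$. The construction splits into two parts glued by transitivity of strong equivalence (Corollary~\ref{c:steq}): an iterated Eilenberg--Zilber reduction $C_*(W(J)) \reduP \bigotimes_i C_*(X(I_i))$, followed by a tensored strong equivalence $\bigotimes_i C_*(X(I_i)) \steqP \EC^\otimes_*$ obtained by tensoring the $n$ individual strong equivalences $C_*(X(I_i)) \steqP \EC^{(i)}_*$ factor by factor.

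First we verify that $\EC^\otimes_*$ is globally polynomial-time. A basis element of $\EC^\otimes_\otherdim$ is a tensor $b_1 \otimes \cdots \otimes b_n$ with $\sum_i \deg(b_i) = \otherdim$. The key use of the $0$-reducedness hypothesis is that each $\EC^{(i)}_0 \cong \Z$ has a single generator, so every tensor factor of degree zero contributes exactly one choice. Consequently at most $\otherdim$ indices $i$ can have $\deg(b_i) > 0$, and a basis element is specified by a choice of support $S \subseteq \{1,\dots,n\}$ of size at most $\otherdim$ (at most $\sum_{k \le \otherdim}\binom{n}{k}$ options, polynomial in $n$ for fixed $\otherdim$), a composition of $\otherdim$ into $|S|$ positive parts (constantly many), and a choice of basis element in each positive-degree factor (polynomially many by the global polynomial-time property). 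The differential is computed by the Leibniz rule from those of the factors in polynomial time.

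Second, we realize the iterated Eilenberg--Zilber reduction via the standard explicit formulas. The chain map $f$ (iterated Alexander--Whitney) sends a product simplex of dimension $\otherdim$ to a sum indexed by compositions of $\otherdim$ into $n$ nonnegative parts, of which there are $\binom{n+\otherdim-1}{\otherdim}$, polynomial in $n$ for fixed $\otherdim$. The chain map $g$ (iterated shuffle) sends a tensor basis element to a sum over multi-shuffles; by $0$-reducedness only the positive-degree tensor factors genuinely enter, and for fixed total degree $\otherdim$ the number of nontrivial multi-shuffles is at most $\otherdim!$, a constant. The homotopy $h$ admits a completely analogous explicit formula as a sum of partial homotopy terms built from truncated AW, a single copy of the two-factor EZ-homotopy, and a shuffle, still polynomial in $n$ for each fixed $\otherdim$. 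For the tensored strong equivalence, the standard tensor-product construction of reductions applies componentwise: $\bigotimes_i f_i$, $\bigotimes_i g_i$, and a K\"unneth-type formula for the homotopy produce reductions from $\bigotimes_i \widetilde D^{(i)}_*$ to $\bigotimes_i C_*(X(I_i))$ and to $\EC^\otimes_*$ respectively. By $0$-reducedness again, evaluating any of these tensored maps on a basis element of total degree $\otherdim$ requires nontrivial work only on the at most $\otherdim$ positive-degree factors, while the degree-$0$ factors map generator-to-generator in constant time, so every such evaluation is polynomial in $\size(J)+\size(\sigma)$.

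The main obstacle, and the reason for the $0$-reducedness assumption, is controlling the combinatorial blowup when $n$ is part of the input rather than a constant. Without $0$-reducedness, every tensor factor would independently contribute to dimension $\otherdim$ via its degree-$0$ part, so the rank of $\EC^\otimes_\otherdim$ and the number of terms in each of the formulas above would grow roughly as a product over all $n$ factors, hence exponentially in $n$. The $0$-reducedness is precisely what collapses this product onto the at most $\otherdim$ genuinely contributing factors, keeping every required operation polynomial in $\size(J) = \sum_i \size(I_i)$ for each fixed $\otherdim$, and thereby yielding polynomial-time homology on the parameterized product $W$.
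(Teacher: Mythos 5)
Your proposal follows essentially the same route as the paper's proof: you decompose the strong equivalence $C_*(W(J)) \steqP \EC^\otimes_*$ into an iterated Eilenberg--Zilber reduction $C_*(W(J)) \reduP \bigotimes_i C_*(X(I_i))$ followed by a factor-by-factor tensoring of the individual strong equivalences, verify that $\EC^\otimes_*$ is globally polynomial-time via the same support-of-size-at-most-$\otherdim$ argument, and identify $0$-reducedness as the mechanism that keeps the term counts in AW/shuffle/homotopy formulas polynomial in $n$ for fixed degree. The paper packages these exact steps as Lemma~\ref{l:tenspr}, Lemma~\ref{l:tens-steq}, and Lemma~\ref{l:EZ} and glues them with Corollary~\ref{c:steq}, so the match is close; the only small differences are expository (the paper iterates explicitly through intermediate chain complexes $D_*^{(i)}$ and uses the composition formula of Lemma~\ref{l:comp-r0} to control the homotopy, where you gesture at a ``K\"unneth-type'' formula, and the paper bounds the shuffle count by $2^{\thedim^2}$ where you use $\otherdim!$ --- both constants for fixed degree).
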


For reasons of ``uniform polynomiality'', we needed to assume
that the factors in the considered products are all instances
of a single parameterized simplicial set. However, as we
remarked above, the product of a \emph{constant} number
of arbitrary, possibly different, simplicial sets with polynomial-time
homology can be equipped with polynomial-time homology.
This allows us to obtain polynomial-time
homology for products where all but a constant
number of factors are $0$-reduced and come from the same
parameterized simplicial set, while the remaining factors
are arbitrary.

In the forthcoming proof, for brevity, we are going to write
$X^{(i)}$ instead of $X(I_i)$, and use similar abbreviations
for chain complexes.

\heading{Tensor products. } Before discussing the proof,
we need some preparations concerning tensor products.
Let $C_*^{(1)}$ and $C_*^{(2)}$ be chain complexes, and suppose,
as we do for locally polynomial-time chain complexes, that
each chain group $C_\thedim^{(i)}$ has a distinguished basis
$\Bas_\thedim^{(i)}$. Then the \emph{tensor product}
$T_*:=C^{(1)}_*\otimes C^{(2)}_*$ can be defined as the
chain complex in which $T_\thedim$ is the free Abelian
group over the distinguished basis
\[
\Bas_\thedim:=\{b_1\otimes b_2:
b_1\in \Bas_{\thedim_1}^{(1)},
b_2\in \Bas_{\thedim_1}^{(2)}, \thedim_1+\thedim_2=\thedim\}.
\]

Here we may regard $b_1\otimes b_2$ just as a formal symbol.
For arbitrary chains $c_1\in C_{\thedim_1}^{(1)}$,
$c_2\in C_{\thedim_2}^{(2)}$, $\thedim_1+\thedim_2=\thedim$,
the $\thedim$-chain
$c_1\otimes c_2$ is then defined using linearity of $\otimes$
in both operands, as the appropriate linear combination
of the elements of~$\Bas_{\thedim}$.

The differential in $T_*$ is given on the elements of $\Bas_\thedim$ by
\begin{equation}\label{e:difffmla}
\diff_\thedim (b_1\otimes b_2):=\diff_{\thedim_1}^{(1)}(b_1)\otimes
b_2+(-1)^{\thedim_1}b_1\otimes\diff_{\thedim_2}^{(2)}(b_2),
\end{equation}
where as above, $\thedim_i=\deg(b_i)$.

Next, let us consider the tensor product
$T_*:=C^{(1)}_*\otimes\cdots\otimes C^{(n)}_*$  of many factors.
The distinguished basis $\Bas_\thedim$ now consists of
elements $b_1\otimes\cdots\otimes b_n$, with each $b_i$ an element
of a distinguished basis in $C_*^{(i)}$,  $\sum_{i=1}^n\deg(b_i)=\thedim$.
Hence the rank of $T_\thedim$ equals
\begin{equation}\label{e:rank-tens}
\rank (T_\thedim)=\sum_{\thedim_1+\cdots+\thedim_n=\thedim}\ \  \prod_{i=1}^n
\rank (C^{(i)}_{\thedim_i}).
\end{equation}
%(assuming that the ranks of all the $C^{(i)}$ are finite).
Thus, if many of the $C^{(i)}$ are not $0$-reduced,
already $\rank(T_0)$ is exponentially large; for example, if each
$C^{(i)}_0$ is $\Z\oplus\Z$, then $\rank(T_0)=2^n$.
This is the basic reason
why we need the 0-reducedness conditions in Proposition~\ref{p:bigprod}.
If, on the other hand, all the $C_*^{(i)}$'s are 0-reduced, then so is~$T_*$.

The key to the polynomial-time bounds we need is the following lemma.

\begin{lemma} \label{l:tenspr}
Let $(C(I)_*:I\in\II)$ be a locally polynomial-time chain complex,
with all the $C(I)_*$ $0$-reduced,
let $\JJ$ be the parameter set as in Proposition~\ref{p:bigprod},
and let $(T(J)_*:J\in\JJ)$ be the parameterized set of tensor products,
with $T(I_1,\ldots,I_n)_*=C_*^{(1)}\otimes\cdots \otimes C_*^{(n)}$
(where $C_*^{(i)}$ abbreviates $C(I_i)_*$),
and with the same definitions of encoding sizes as in
Proposition~\ref{p:bigprod}.
Then $T_*$ is also $0$-reduced and locally polynomial-time,
and given chains $c_i\in C^{(i)}_{\thedim_i}$ with
$\sum_{i=1}^n\thedim_i=\thedim$, the $\thedim$-chain
$c_1\otimes\cdots\otimes c_n$ can be computed (i.e., expressed
in the distinguished basis of $T_\thedim(J)$) in time polynomial
in $\size(J)+\sum_{i=1}^n\size(c_i)$,
assuming $\thedim$ fixed.
\end{lemma}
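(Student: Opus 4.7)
The plan is to exploit $0$-reducedness to tame the combinatorial blow-up inherent in a tensor product of many factors. Let $1^{(i)}$ denote the unique distinguished basis element of $C^{(i)}_0 \cong \Z$. Then any distinguished basis element $b_1 \otimes \cdots \otimes b_n$ of $T_\thedim$ must satisfy $b_i = 1^{(i)}$ whenever $\deg b_i = 0$; since the positive degrees sum to $\thedim$, at most $\thedim$ of the factors have positive degree. In particular, $T_0$ is freely generated by $1^{(1)} \otimes \cdots \otimes 1^{(n)}$, which gives $T_0 \cong \Z$ and hence $0$-reducedness of $T_*$.

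For the encoding of a $T_\thedim$-basis element I would store only the (at most $\thedim$) positions $i$ with $\deg b_i > 0$ together with the corresponding $b_i \in \Bas(I_i)_{\deg b_i}$; since $\thedim$ is fixed, each index $i \leq n \leq \size(J)$ takes $O(\log \size(J))$ bits, and the positive-degree $b_i$'s have polynomial encoding size by the local polynomial-time assumption on $C^{(i)}_*$. The differential of such a basis element is obtained from the iterated Leibniz rule as a signed sum over $i = 1, \ldots, n$ of $b_1 \otimes \cdots \otimes \diff^{(i)}(b_i) \otimes \cdots \otimes b_n$, but since $\diff^{(i)}(1^{(i)}) = 0$ (as $C^{(i)}_{-1} = 0$), only the at most $\thedim$ terms with $\deg b_i > 0$ contribute. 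Each such term is computable in polynomial time by invoking the polynomial-time differential of $C^{(i)}_*$ and re-assembling the result in the $T$-basis encoding; extension to arbitrary chains in $T_\thedim$ is by linearity, so $T_*$ is locally polynomial-time.

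For the tensor product computation, let $P := \{ i : \thedim_i > 0\}$. For $i \notin P$ the chain $c_i$ is a scalar multiple $\alpha_i 1^{(i)}$ and contributes only the scalar $\alpha_i$. For $i \in P$ I would expand $c_i = \sum_j \beta_{i,j} b_{i,j}$ in the distinguished basis of $C^{(i)}_{\thedim_i}$. Then
\[
c_1 \otimes \cdots \otimes c_n
=\Bigl(\prod_{i \notin P} \alpha_i\Bigr)
  \sum_{(j_i)_{i \in P}} \Bigl(\prod_{i \in P} \beta_{i,j_i}\Bigr)\,
  \bigl(b'_1 \otimes \cdots \otimes b'_n\bigr),
\]
where $b'_i = b_{i,j_i}$ for $i \in P$ and $b'_i = 1^{(i)}$ otherwise. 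The key point is that $|P| \leq \thedim$ (since each $\thedim_i \geq 1$ for $i \in P$ and they sum to $\thedim$), so the number of summands is at most $\prod_{i \in P} |\{j : \beta_{i,j}\neq 0\}| \leq \bigl(\sum_i \size(c_i)\bigr)^{\thedim}$, which is polynomial in the input size for fixed $\thedim$. Each summand is assembled in polynomial time, and combining identical $T$-basis tokens adds only polynomial overhead.

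The one place where polynomiality is in genuine danger is precisely the cardinality of $P$: without the $0$-reducedness hypothesis, the free choice of a degree-$0$ basis element in each of the $n$ factors would already make $\rank T_0$ exponential in $n$ (as in equation~(\ref{e:rank-tens})), and both the basis encoding and the tensor product expansion would blow up. The $0$-reducedness assumption forces all but at most $\thedim$ factors of any distinguished basis element (or non-trivial tensor product summand) to be the canonical $1^{(i)}$, which is the whole reason the estimates above are polynomial rather than exponential in~$n$.
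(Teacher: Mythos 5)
Your proposal is correct and follows essentially the same route as the paper: the crucial observation in both is that $0$-reducedness forces all but at most $\thedim$ factors of any degree-$\thedim$ basis element or tensor term to be the canonical unit $1^{(i)}$, which tames the a~priori exponential blow-up from equation~(\ref{e:rank-tens}). Your two small refinements --- compressing the encoding to record only the at most $\thedim$ nontrivial positions, and noting that at most $\thedim$ of the $n$ Leibniz terms in the differential survive because $\diff(1^{(i)})=0$ --- are harmless variations on what the paper does (the paper simply evaluates all $n$ terms, which is still polynomial since $n\le\size(J)$).
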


\begin{proof} To show that the differential $\diff_\thedim$ of $T_*$ is
a polynomial-time map, it is enough to consider
computing it on elements $b_1\otimes\cdots\otimes b_n$
of the standard basis. By iterating the differential formula
(\ref{e:difffmla}), we can express $\diff_\thedim (b_1\otimes\cdots\otimes b_n)$
as a sum of $n$ terms of the form $\pm c_1\otimes\cdots\otimes c_n$,
where each $c_i$ is either $b_i$ or $\diff_{\thedim_i}(b_i)$.
For evaluating this sum it is thus sufficient to be able
to evaluate $c_1\otimes\cdots\otimes c_n$ in polynomial time,
as in the second claim of the lemma.

As for this second claim, we use the observation that
if $\deg(c_1\otimes\cdots\otimes c_n)=\thedim$, then all but
at most $\thedim$ of the $c_i$'s have degree $0$. Suppose that
only $c_1,\ldots,c_\thedim$ have nonzero degrees. Then we can
compute $c_1\otimes\cdots\otimes c_\thedim$ in a straightforward
way (at most $\prod_{i=1}^\thedim \size(c_i)$
basis elements are involved, which is polynomially bounded
for fixed $\thedim$). Then the tensor product of the result with
$c_{\thedim+1}\otimes\cdots\otimes c_n$ amounts just to
multiplying all coefficients by a number (since
$C_0^{(\thedim+1)}\cong\cdots\cong C_0^{(n)}\cong \Z$
by the $0$-reducedness assumption) and renaming the basis
elements appropriately.
\end{proof}

\heading{Proof of Proposition~\ref{p:bigprod}. }
We basically follow a proof for the
case of effective homology (where it is enough to
deal with two factors). There are two main steps,
encapsulated in the following two lemmas, which together imply
the proposition via Corollary~\ref{c:steq} (composing strong
equivalences).

\begin{lemma}[Tensor product of strong equivalences]\label{l:tens-steq}
Let $(C(I)_*:I\in\II)$ and $(\hC(I)_*:I\in\II)$ be a locally
polynomial-time chain complexes, let $(\EC(I)_*:I\in\II)$
be a globally polynomial-time chain complex, and suppose that
a strong equivalence $C_*\lreduP \hC_*\reduP \EC_*$ is given,
with all the chain complexes involved $0$-reduced.
As in Lemma~\ref{l:tenspr}, let $T_*$, $\hT_*$, $\ET_*$ be the
parameterized chain complexes of tensor products with factors
from $C_*$, $\hC_*$, and $\EC_*$, respectively. Then
$\ET_*$ is globally polynomial-time and there is
a strong equivalence $T_*\lreduP \hT_*\reduP \ET_*$.
\end{lemma}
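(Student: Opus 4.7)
The plan is to first establish that $\ET_*$ is globally polynomial-time, then to construct the two required polynomial-time reductions via the classical ``tensor product of reductions'' construction from homological perturbation theory, and finally to verify that everything can be evaluated in polynomial time using Lemma~\ref{l:tenspr}.

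For the global polynomiality of $\ET_*$: a distinguished basis element of $\ET(J)_\thedim$ is a tensor $e_1 \otimes \cdots \otimes e_n$ with $e_i \in \Bas(\EC^{(i)}_{\thedim_i})$ and $\sum_i \thedim_i = \thedim$. Since each $\EC^{(i)}_*$ is $0$-reduced, $\EC^{(i)}_0 \cong \Z$ has a unique basis element, so all but at most $\thedim$ of the $e_i$ must equal that canonical generator. Enumeration thus amounts to choosing the (at most $\thedim$) positions with positive degree (giving $O(n^\thedim)$ possibilities for fixed $\thedim$) and, at each such position, a basis element of the corresponding $\EC^{(i)}_{\thedim_i}$; all of this is obtainable in polynomial time by the global polynomiality of $\EC_*$. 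The differential on $\ET_*$ is polynomial-time by the same argument as in Lemma~\ref{l:tenspr}.

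For the reductions, the standard construction (see, e.g., \cite{SergerGenova}) turns reductions $\rho^{(i)} = (f^{(i)}, g^{(i)}, h^{(i)}) \: \hC^{(i)}_* \redu D^{(i)}_*$ into a reduction $(F, G, H)$ of $\hC^{(1)}_* \otimes \cdots \otimes \hC^{(n)}_*$ to $D^{(1)}_* \otimes \cdots \otimes D^{(n)}_*$ with $F = f^{(1)} \otimes \cdots \otimes f^{(n)}$, $G = g^{(1)} \otimes \cdots \otimes g^{(n)}$, and a telescoping chain homotopy
\[
H \;=\; \sum_{i=1}^n \pm\, (g^{(1)} f^{(1)}) \otimes \cdots \otimes (g^{(i-1)} f^{(i-1)}) \otimes h^{(i)} \otimes \id \otimes \cdots \otimes \id
\]
with appropriate Koszul signs. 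I would apply this construction once to $n$ copies of the reduction $\hC_* \redu C_*$ and once to $n$ copies of $\hC_* \redu \EC_*$, obtaining reductions $\hT_* \redu T_*$ and $\hT_* \redu \ET_*$, which together form the sought strong equivalence.

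Polynomial-time evaluation: for a basis element $\hat{b}_1 \otimes \cdots \otimes \hat{b}_n$, the maps $F$ and $G$ are evaluated componentwise by the polynomial-time chain maps of the constituent reductions and then tensored via Lemma~\ref{l:tenspr}. The chain homotopy $H$ is a sum of $n \le \size(J)$ terms, each a tensor product of maps applied to the $\hat{b}_i$; each summand is evaluated in polynomial time by the same lemma, since at most $\thedim$ of the $\hat{b}_i$ have positive degree and $0$-reducedness of $\hC_*$ forces all but boundedly many tensor factors to be degree-$0$ generators, which prevents combinatorial blowup. Summing the $n$ summands preserves polynomiality, and Corollary~\ref{c:steq} then assembles the two reductions into the desired $T_* \steqP \ET_*$. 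The main point requiring care is exactly this telescoping: a naive tensor product of three reductions each of length $n$ could in principle produce exponentially many terms, and it is the $0$-reducedness assumption that keeps each summand manageable; the verification that $(F, G, H)$ satisfies the reduction identities itself is a standard calculation in homological perturbation theory which I would cite rather than reproduce.
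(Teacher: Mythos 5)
Your proposal is correct and takes essentially the same approach as the paper: the telescoping chain-homotopy formula you write down is exactly what the paper obtains by composing the $n-1$ one-factor reductions $\hC^{(1)}_*\otimes\cdots\otimes\hC^{(i-1)}_*\otimes C^{(i)}_*\otimes\cdots\otimes C^{(n)}_*$ via Lemma~\ref{l:comp-r0}, and both arguments ultimately invoke Lemma~\ref{l:tenspr} (with its $0$-reducedness hypothesis) to keep the evaluation of each tensor-product summand polynomial. The only difference is presentational --- you cite the standard ``tensor product of reductions'' formula directly, while the paper rederives it by iterated composition.
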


\begin{lemma}[Eilenberg--Zilber for many factors]
\label{l:EZ} Let $(X(I):I\in\II)$ be a $0$-reduced
locally polynomial-time simplicial set, let $(W(J):J\in\JJ)$
be the parameterized set of products as in Proposition~\ref{p:bigprod},
and let $(T(J)_*:J\in\JJ)$ be the parameterized chain complex
of the tensor products $C_*(X^{(1)})\otimes\cdots\otimes
C_*(X^{(n)})$ as in Lemma~\ref{l:tenspr}.
Then there is a polynomial-time reduction $C_*(W)\reduP T_*$.
\end{lemma}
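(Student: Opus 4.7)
The plan is to prove the lemma by induction on $n$, reducing the general case to the classical (two-factor) Eilenberg--Zilber theorem together with a ``tensor with identity'' operation on reductions, and then composing via Lemma~\ref{l:comp-r0}.

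The base case $n=1$ is trivial. For $n=2$, the classical Eilenberg--Zilber theorem supplies explicit formulas---the Alexander--Whitney map, the Eilenberg--MacLane shuffle map, and the Shih homotopy---forming a reduction $C_*(X^{(1)}\times X^{(2)}) \redu C_*(X^{(1)}) \otimes C_*(X^{(2)})$. On a simplex of dimension $\thedim$ each of these maps is a sum of polynomially many (in $\thedim$) terms, each term being a combination of $O(\thedim)$ face and degeneracy operators applied to $\sigma_1$ and $\sigma_2$; since $X$ is locally polynomial-time, this base reduction is polynomial-time, with output size polynomial in $\size(I_1)+\size(I_2)+\size(\sigma_1)+\size(\sigma_2)$.

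For the inductive step with $n \geq 3$, write $W = W' \times X^{(n)}$ where $W' := X^{(1)} \times \cdots \times X^{(n-1)}$, and correspondingly $T_* = T'_* \otimes C_*(X^{(n)})$ with $T'_* := C_*(X^{(1)}) \otimes \cdots \otimes C_*(X^{(n-1)})$. Because face and degeneracy operators on products act componentwise, $W'$ is itself a locally polynomial-time simplicial set parameterized by $\II^{n-1}$, and it is $0$-reduced since every factor is. Applying the two-factor case to the binary product $W' \times X^{(n)}$ yields a polynomial-time reduction $\rho'' : C_*(W) \reduP C_*(W') \otimes C_*(X^{(n)})$. By the inductive hypothesis there is a polynomial-time reduction $\rho' = (f',g',h') : C_*(W') \reduP T'_*$. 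Tensoring $\rho'$ with the identity on $C_*(X^{(n)})$ gives
\[
  (f' \otimes \id,\ g' \otimes \id,\ h' \otimes \id) : C_*(W') \otimes C_*(X^{(n)}) \redu T_*,
\]
which is polynomial-time by Lemma~\ref{l:tenspr}. Composing this with $\rho''$ via Lemma~\ref{l:comp-r0} gives the desired polynomial-time reduction $C_*(W) \reduP T_*$.

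The main obstacle is ensuring that unrolling the $n-1$ layers of induction does not blow up the running time or the intermediate output sizes. The $0$-reducedness assumption is crucial here: it forces $C_0(X^{(i)}) \cong \Z$ to be generated by a single vertex, so that in every basis element $b_1 \otimes \cdots \otimes b_n$ of $T_\thedim$ at most $\thedim$ of the factors carry positive degree. As in Lemma~\ref{l:tenspr}, this makes the number of relevant basis elements in any fixed dimension grow only polynomially with $n$, and it controls the work required to write down each intermediate chain. Combined with the fact that the two-factor Alexander--Whitney, shuffle, and Shih formulas on $0$-reduced inputs produce only polynomially many terms per evaluation, this propagates through the induction and yields an algorithm evaluating $f$, $g$, and $h$ in time polynomial in $\size(J) + \size(c)$ for any input chain $c$, as required.
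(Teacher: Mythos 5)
Your proposal iterates the binary Eilenberg--Zilber reduction exactly as the paper does; the only organizational difference is that you peel off $X^{(n)}$ from the right via $W=W'\times X^{(n)}$ and recurse, while the paper peels $X^{(1)}$ from the left, walking through the explicit intermediate complexes $D^{(i)}_*=C_*(X^{(1)})\otimes\cdots\otimes C_*(X^{(i-1)})\otimes C_*(X^{(i)}\times\cdots\times X^{(n)})$. These are equivalent up to associativity, and your decomposition-plus-tensor-with-identity step (tensoring the reduction $(f',g',h')$ with the trivial reduction $(\id,\id,0)$, and then composing via Lemma~\ref{l:comp-r0}) is sound.

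Where your write-up is genuinely weaker than the paper's is the polynomiality analysis, and I would call this a gap rather than a stylistic difference. The phrase ``propagates through the induction'' conceals exactly the danger the paper's proof is built to address: composing $n-1$ reductions, each of which can multiply the size of an intermediate chain by a factor $>1$, would naively give exponential blowup in $n$. The paper forecloses this by computing the \emph{composed} maps directly and counting terms: the iterated Alexander--Whitney map on a $\thedim$-simplex has exactly $\binom{n+\thedim-1}{\thedim}$ terms (a Hockey-stick count), and the iterated shuffle map on a $\thedim$-dimensional basis element has $\prod_{i}\binom{\thedim}{\thedim_i}<2^{\thedim^2}$ terms because $0$-reducedness forces all but at most $\thedim$ of the $\thedim_i$ to be zero; the composed $h$ is then a sum of $n-1$ terms of the form $g^{(1)}\cdots g^{(i-1)}h^{(i)}f^{(i-1)}\cdots f^{(1)}$, each polynomial-time. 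Your inductive formulation can be made to close---the recurrence coming from the AW step is $N(n,\thedim)=\sum_{j=0}^{\thedim}N(n-1,j)$, which indeed resolves to $\binom{n+\thedim-1}{\thedim}$---but you have to state and prove a uniform bound of this kind as the inductive invariant, rather than inherit an unspecified ``polynomial in $\size(J')+\size(c)$'' whose degree could silently grow with $n$. Without that explicit bookkeeping, the induction does not establish what you claim. (Also a small citation slip: the polynomiality of a tensor-with-identity is governed by Lemma~\ref{l:tenspr} for chains, but the fact that $(f'\otimes\id,g'\otimes\id,h'\otimes\id)$ is again a \emph{reduction} is closer to the content of Lemma~\ref{l:tens-steq}, and deserves a one-line verification of conditions (i)--(iii).)
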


\begin{proof}[Proof of Lemma~\ref{l:tens-steq}]
We know from Lemma~\ref{l:tenspr}
that $T_*$, $\hT_*$, and $\ET_*$ are locally polynomial-time.
To check that $\ET_*$ is globally polynomial-time,
let us consider the chain group $\ET(J)_\thedim$,
$J=(I_1,\ldots,I_n)$. Since $\EC_*$ is globally polynomial-time,
there is a polynomial $p$ such that  $\rank(\EC(I_i)_j)
\le p(\size(I_i))\le p(\size(J))$ for all $J$ and all $j\le\thedim$.
Setting $N:=p(\size(J))$, by the $0$-reducedness assumption and the rank
formula (\ref{e:rank-tens}) we get
$\rank(\ET(J)_\thedim)\le {n+\thedim-1\choose\thedim}N^\thedim$,
which is bounded by a polynomial in $\size(J)\ge n$.
Generating the distinguished basis of $\ET(J)_\thedim$ in
polynomial time is done by a straightforward combinatorial enumeration
algorithm. We conclude that $\ET_*$ is globally polynomial-time.

It remains to provide a polynomial-time reduction $\hT_*\reduP T_*$
(then $\hT_*\reduP \ET_*$ is obtained in the same way).
%We proceed as in \cite[Proposition~60]{SergerGenova}.
We consider $\hT_*(J)=\hC^{(1)}\otimes\cdots\otimes
\hC^{(n)}$, $J=(I_1,\ldots,I_n)$,
$\hC_*^{(i)}=\hC_*(I_i)$, and let
$\rho^{(i)}=(F^{(i)},G^{(i)},H^{(i)})$ be the reduction
$\hC_*^{(i)}\redu C_*^{(i)}$ obtained from the assumption
$\hC_*\reduP C_*$  (we use capital
letters to avoid conflict with the notation of Lemma~\ref{l:comp-r0}).
The desired reduction $\hT_*(J)\redu T_*(J)$ goes through the intermediate
chain complexes
$$
\hC^{(1)}_*\otimes\cdots\otimes \hC^{(i-1)}\otimes
C_*^{(i)}\otimes \cdots\otimes C_*^{(n)}, \ \ i=1,\ldots,n,
$$
and the $i$th of these chain complexes is reduced to the $(i+1)$st one
with the reduction that is the tensor product with $\rho_i$ as the
$i$th factor and the identities in all the other factors.

Specializing the formulas from Lemma~\ref{l:comp-r0}
for composing reductions, we obtain the reduction $(F_J,G_J,H_J):
\hC_*(J)\redu C_*(J)$ with
$F_J=F^{(1)}\otimes\cdots\otimes F^{(n)}$, $G_J=G^{(1)}\otimes\cdots\otimes G^{(n)}$, and
\begin{eqnarray*}
H_J&=&H^{(1)}\otimes \id \otimes\cdots
\otimes\id + G^{(1)}F^{(1)}\otimes H^{(2)}\otimes\id\otimes\cdots\otimes\id+
\cdots \\
&& \ \ \ {}+ G^{(1)}F^{(1)}\otimes \cdots\otimes G^{(n-1)}F^{(n-1)}
\otimes H^{(n)}.
\end{eqnarray*}
(Tensor products of chain maps are defined as expected,
via $(f\otimes g)(a\otimes b)=f(a)\otimes g(b)$; for
chain homotopies there is a sign convention
involved, with the signs obviously polynomial-time computable---see,
e.g., \cite[Definition~57]{SergerGenova}.)

These formulas define the desired reduction $(F_J,G_J,H_J)_{J\in\JJ}:
\hT_*\reduP T_*$; polynomial-time computability
of these maps follows from Lemma~\ref{l:tenspr}.
\end{proof}

\begin{proof}[Proof of Lemma~\ref{l:EZ}]
For the binary case, with simplicial sets $Y$ and $Z$,
there is the classical Eilenberg--Zilber
reduction $C_*(Y\times Z)\redu C_*(Y)\otimes C_*(Z)$,
which is denoted by $(\AW,\EML,\SHI)$ (these are acronyms
for Alexander--Whitney, Eilenberg--MacLane, and Shih\footnote{The
explicit formula for the operator $\SHI$ was found
by Rubio \cite{Rubio-thesis} and proved by Morace---see the appendix in
\cite{real-assoc}.}).
Explicit formulas for these maps
are available; see \cite[pp.~1212--1213]{GonRea05} (for
$\AW$ and $\EML$ we also provide the formulas below).
In particular, it is clear from these formulas that
the maps $\AW$, $\EML$, $\SHI$ are polynomial-time
for locally polynomial-time $Y$ and~$Z$.

To build the reduction $C_*(W(J))\redu T_*(J)$,
where as usual $J=(I_1,\ldots,I_n)$, $W(J)=X^{(1)}\times
\cdots\times X^{(n)}$, and $T_*(J)=C_*(X^{(1)})\otimes\cdots
\otimes C_*(X^{(n)})$,
we go through the intermediate chain complexes
$$
D^{(i)}_*:=C_*(X^{(1)})\otimes\cdots\otimes C_*(X^{(i-1)})\otimes
C_*(X^{(i)}\times\cdots\times X^{(n)}).
$$
Let $(f^{(i)},g^{(i)},h^{(i)})$ be the reduction $D_*^{(i)}\redu D_*^{(i+1)}$.
We have
$f^{(i)}=\id\otimes\cdots\otimes\id\otimes \AW^{(i)}$,
$g^{(i)}=\id\otimes\cdots\otimes\id\otimes \EML^{(i)}$, and
$h^{(i)}=\id\otimes\cdots\otimes\id\otimes \SHI^{(i)}$,
where $(\AW^{(i)},\EML^{(i)},\SHI^{(i)})$ is the Eilenberg--Zilber reduction
$C_*(X^{(i)}\times Z^{(i)})\redu C_*(X^{(i)})\otimes
C_*(Z^{(i)})$, with $Z^{(i)}:=X^{(i+1)}\times\cdots\times X^{(n)}$.

Now $f^{(i)},g^{(i)},h^{(i)}$ are polynomial-time by Lemma~\ref{l:tenspr},
and so in order to verify the polynomiality of the composed
reduction, using the formula in Lemma~\ref{l:comp-r0},
it suffices to check polynomiality of the compositions
$f^{(i)}f^{(i-1)}\cdots f^{(1)}$ and
$g^{(1)}g^{(2)}\cdots g^{(i)}$, $i=1,2,\ldots,n-1$.
For simpler notation, we will discuss only the case $i=n-1$, but
the case of arbitrary $i$ is the same.

Let $(\sigma_i,\tau_i)$ be a $\thedim$-dimensional simplex of
$X^{(i)}\times Z^{(i)}$, which we also consider as a generator
of $C_*(X^{(i)}\times Z^{(i)})$. According to
\cite{GonRea05}, we have
$$
\AW^{(i)}(\sigma_i,\tau_i)=\sum_{j=0}^\thedim \partial_{j+1}\cdots
\partial_\thedim \sigma_i\otimes \partial_0\cdots\partial_{j-1}\tau_i.
$$
Composing $f^{(2)}$ and $f^{(1)}$ thus yields
\begin{eqnarray*}
f^{(2)}f^{(1)}(\sigma_1,\sigma_2,\tau_3)&=&
\sum_{0\le j_1+j_2\le\thedim} \Big(
\partial_{j_1+1}\cdots\partial_{\thedim}\sigma_1
\otimes \partial_0\cdots\partial_{j_1-1}\partial_{j_2+1}\cdots
\partial_{\thedim-j_1}
\sigma_2\\
&&\ \ \ \ \ \ \ \ \ \ \ \ \ \ \ \
{}\otimes \partial_0\cdots\partial_{j_1-1}\partial_0\cdots\partial_{j_2-1}\tau_3\Big).
\end{eqnarray*}
Continuing in a similar manner, we obtain
$f^{(n-1)}\cdots f^{(1)}(\sigma_1,\ldots,\sigma_n)$ as the sum
$$
\sum_{0\le j_1+\cdots+ j_{n-1}\le\thedim}\sigma'_1\otimes\cdots\otimes\sigma'_n,$$
where each $\sigma'_i$ is the result of applying some
number (at most $\thedim$)
of face operators to $\sigma_i$. The number of terms in this sum is
$n+\thedim-1\choose\thedim$, which is polynomially bounded for $\thedim$
fixed, and each term is polynomial-time computable. Thus,
the compositions $f^{(i)}\cdots f^{(1)}$ are polynomial-time computable.

Concerning the $g^{(i)}$'s, for the mapping $\EML^{(i)}$ we have, again
following \cite{GonRea05}, for a $p$-simplex $\sigma$ and a $q$-simplex
$\tau$, $p+q=\thedim$,
$$
\EML^{(i)}(\sigma\otimes\tau)=\sum_{\alpha,\beta:\alpha\cup\beta=\{0,1,\ldots,
\thedim-1\}\atop |\alpha|=q,|\beta|=p,\alpha\cap\beta=\emptyset} \pm (s_\alpha\sigma,s_\beta\tau),
$$
where, writing $\alpha=\{j_1,j_2,\ldots,j_q\}$, $j_1<j_2<\cdots<j_q$,
$s_\alpha$  denotes the composition $s_{j_q}s_{j_{q-1}}\cdots s_{j_1}$
of degeneracy operators, and similarly for $s_\beta$.
The sign $\pm$ depends on $\alpha$ and~$\beta$
in a simple way, and we do not want to bother the reader with specifying it
(see \cite{GonRea05}).

By iterating this formula, we find that,
for a $\thedim$-simplex $\sigma_1\otimes\cdots\otimes\sigma_n$, where
$\dim\sigma_i=\thedim_i$, $\thedim_1+\cdots+\thedim_n=\thedim$,
$$
g^{(1)}g^{(2)}\cdots g^{(n-1)}(\sigma_1\otimes\cdots\otimes\sigma_n)=
\sum_{\alpha_1,\ldots,\alpha_n} \pm (s_{\alpha_1}\sigma_1,s_{\alpha_2}\sigma_2,
\ldots,s_{\alpha_n}\sigma_n),
$$
where the sum is over certain choices of index sets
$\alpha_1,\ldots,\alpha_n\subseteq \{0,1,\ldots,k-1\}$.
We need not specify these choices precisely here (it suffices
to know that there is a polynomial-time algorithm for
generating them); we just note that
$|\alpha_i|=\thedim-\thedim_i$, since each of the simplices
$s_{\alpha_i}\sigma_i$ must have dimension $\thedim$.
Therefore, the number of terms in the sum is bounded above by
$$
\prod_{i=1}^n {\thedim\choose\thedim-\thedim_i}=
\prod_{i=1}^n {\thedim\choose\thedim_i}< 2^{\thedim^2},
$$
since there are at most $\thedim$ nonzero $\thedim_i$'s,
and ${\thedim\choose\thedim_i}<2^\thedim$ always. (A more refined
estimate gives a better bound, but still exponential in $\thedim$.)
So the number of terms depends only on $\thedim$, and thus it is a constant
in our setting.

This concludes the proof.
\end{proof}

\subsection{The basic perturbation lemma}\label{s:perturb}

The following situation often occurs in the theory of effective
homology. Suppose that we have already managed to obtain
a reduction $C_*\redu \tC_*$ for some chain
complexes $C_*$ and $\tC_*$. Now we want a reduction from
$C'_*$ to some $\tC'_*$, where $C'_*$ is a chain complex
that is ``similar'' to $C_*$, in the following way:
the chain groups of $C_*$ and of $C'_*$ are the same,
i.e., $C_\thedim=C'_\thedim$ for all $\thedim$, and the differential
$\diff'$ of $C'_*$ is of the form $\diff'=\diff+\delta$,
where $\diff$ is the differential in $C_*$, and
$\delta$ is a map that is ``small'' in a sense to be specified
in Theorem~\ref{t:BPL}
below. Thus, we regard $\diff'$ as a \emph{perturbation}
of~$\diff$.

In this setting, we would like to modify the differential
$\tilde\diff$ in $\tC_*$ to a suitable $\tilde\diff'$, obtaining
a new chain complex $\tC'_*$ and a reduction $C'_*\redu \tC'_*$.
If, for example, $\tC_*$ was globally polynomial-time,
and the original reduction $C_*\redu \tC_*$ provided
polynomial-time homology for $C_*$, we would like the
new reduction $C'_*\redu\tC'_*$ to give polynomial-time
homology for $C'_*$.

A tool for that is the \emph{basic perturbation lemma},
originally discovered by Shih.\footnote{Let us remark that
there are many variants, extensions, and generalizations
of the basic perturbation lemma in the literature, whose
usefulness is by far not restricted to an algorithmic context.}
For our purposes, we formulate a version of the basic
perturbation lemma which yields polynomial-time reductions.

To state it, we need a definition. Let $f\:C_*\to C_*$ be
a chain map of a chain complex into itself. We say that $f$
is \emph{nilpotent} if for every $c\in C_\thedim$, $k\in\Z$,
there is some $n$ such that $(f_\thedim)^{n}(c)=0$, where $(f_\thedim)^{n}$
is the $n$-fold composition of $f_\thedim$ with itself.
Now if $C_*$ is a parameterized chain complex,
we say that $f$ has \emph{constant nilpotency bounds} if
for every $\thedim$ there exists $N=N_\thedim$,
depending on $\thedim$ but not on the value of the parameter, such that
$(f_\thedim)^{N}$ is the zero map.

\begin{theorem}[Basic perturbation lemma]\label{t:BPL}
Let $(f,g,h)$ be a reduction $C_*\redu \tC_*$,  let
$C'_*$ be a chain complex with $C'_\thedim=C_\thedim$
for all $\thedim$ and with differential $\diff'$,
and let us set $\delta:=\diff'-\diff$. If the composed
map $h\delta$ is nilpotent, then there is a chain
complex $\tC'_*$ with the same chain groups as
$\tC_*$ and with a modified differential $\tilde\diff'$,
and a reduction $C'_*\redu \tC'_*$.

If $C_*$ and $\tC_*$ are locally polynomial-time
chain complexes, $(f,g,h)$ is a polynomial-time reduction, $\delta$
is  a polynomial-time map, and the composition $h\delta$ has constant
nilpotency bounds, then $\tilde\diff'$ is
polynomial-time and $C'_*\reduP \tC'_*$.
\end{theorem}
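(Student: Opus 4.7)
The plan is to use the classical construction due to Shih (also known as the homological perturbation lemma). The core idea is that, since $h\diff'+\diff'h$ differs from $h\diff+\diff h=\id_{C_*}-gf$ by the error term $h\delta+\delta h$, one can correct everything by inverting the operator $\id+h\delta$ via a formal geometric series. Concretely, I would introduce the map
\[
\phi:=\sum_{n\geq 0}(-1)^n (h\delta)^n,
\]
which is well-defined on each individual chain $c\in C_\thedim$ because $h\delta$ is nilpotent on $c$, so only finitely many summands contribute. Then the perturbed reduction data are given by the standard formulas
\[
\tilde\diff':=\tilde\diff+f\delta\phi g,\quad f':=f(\id-\delta\phi h),\quad g':=\phi g,\quad h':=\phi h,
\]
and my first task would be to verify that $(f',g',h')$ indeed satisfies the reduction axioms (i)--(iii) of Definition~\ref{d:redu} with respect to $\diff'$ on the domain side and $\tilde\diff'$ on the codomain side. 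This verification is a direct, if somewhat lengthy, algebraic manipulation using the identities $\diff h+h\diff=\id-gf$, $fh=0$, $hg=0$, $hh=0$, and the fact that $\diff'=\diff+\delta$ is itself a differential (so $\diff\delta+\delta\diff+\delta\delta=0$). I would not reproduce the calculation in detail but refer to \cite[Theorem~50]{SergerGenova} or the original source.

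For the polynomiality claim, the key observation is that the constant nilpotency bound $N_\thedim$ for $h\delta$ truncates the infinite sum $\phi$ to a fixed number of terms in each degree: on any $c\in C_\thedim$,
\[
\phi(c)=\sum_{n=0}^{N_\thedim-1}(-1)^n(h\delta)^n(c).
\]
Hence on each fixed degree, $\phi$ is the sum of a constant number of compositions of polynomial-time maps, and the composition of a constant number of polynomial-time maps is again polynomial-time (as was noted after the definition of polynomial-time mappings in Section~\ref{s:encsize}). Consequently each of $\tilde\diff'$, $f'$, $g'$, $h'$ is a polynomial-time map, and the triple $(f',g',h')$ gives the required polynomial-time reduction $C'_*\reduP\tC'_*$.

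The main technical obstacle I anticipate is controlling the \emph{encoding sizes} of intermediate chains under iterated application of $h\delta$, not just the \emph{number} of operations. After applying $h\delta$ once, the output chain $h\delta(c)$ may have size polynomial in $\size(I)+\size(c)$; after a second application, polynomial in $\size(I)+\poly(\size(I)+\size(c))$, and so on. The constant nilpotency bound $N_\thedim$ is crucial here, because it keeps the number of compositions bounded by a constant depending only on $\thedim$, so the nested polynomial remains a polynomial of constant degree in $\size(I)+\size(c)$. Without this constant-bound assumption (e.g.\ if $N_\thedim$ were allowed to grow with the parameter $I$), we could not conclude polynomial-time behaviour even if each single application of $h\delta$ were polynomial-time. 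Once this size-control point is made precise, the rest of the verification is essentially bookkeeping.
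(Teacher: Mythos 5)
Your proposal is correct and follows essentially the same route as the paper, which also invokes Shih's formulas (citing \cite[Theorem~50]{SergerGenova}) and derives polynomiality from the constant nilpotency bound truncating the series. The only cosmetic difference is that the paper introduces both $\varphi=\sum(-1)^i(h\delta)^i$ and $\psi=\sum(-1)^i(\delta h)^i$, writing $f'=f\psi$ and $\tilde\diff'=\tilde\diff+f\psi\delta g$, whereas you use just $\phi=\varphi$ together with the identities $\psi=\id-\delta\phi h$ and $\psi\delta=\delta\phi$, which is equivalent; your explicit remark about controlling encoding sizes across the bounded number of iterated compositions makes precise what the paper's terse ``the claim about polynomiality follows'' leaves to the reader, and it correctly identifies why the nilpotency bound must be independent of the parameter.
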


\begin{proof} The proof of the existence statement, presented,
e.g., in \cite[Theorem~50]{SergerGenova}, provides
explicit formulas for $\tilde\diff'$ and for
the desired reduction $(f',g',h'):C'_*\redu \tC'_*$.
Namely, using auxiliary chain maps $\varphi$ and $\psi$
defined by
\[
\varphi:=\sum_{i=0}^\infty (-1)^i(h\delta)^i,\ \ \ \
\psi:=\sum_{i=0}^\infty (-1)^i(\delta h)^i,
\]
we have $\tilde\diff':=\tilde\diff+f\psi\delta g$, $f':=f\psi$,
$g'=\varphi g$, and $h':=\varphi h$. If $h\delta$
has constant nilpotency bounds, then so has $\delta h$,
and for each fixed $\thedim$, the number of nonzero
term in the sums defining $\varphi(c)$ and
$\psi(c)$, with $c\in C_\thedim$,
is bounded by a constant depending only on $\thedim$ but not on~$c$.
The claim about polynomiality follows.
\end{proof}

The basic perturbation lemma propagates the perturbation of the differential
in the direction of the reduction arrow. If we have a strong equivalence
$C_*\lredu \dtC_*\redu \tC_*$ and we want to perturb the differential
of $C_*$, we first need to propagate the perturbation to $\dtC_*$,
i.e., against the direction of the reduction. The next lemma
tells us that this can always be done; actually, only the differential
in $\dtC_*$ needs to be modified, the reduction stays the same.
We omit the easy proof---see \cite[Proposition~49]{SergerGenova}.

\begin{lemma}%[Easy perturbation lemma]
\label{l:EPL}
Let $(f,g,h): C_*\redu \tC_*$ be a reduction,
and let $\tC'_*$ be obtained from $\tC_*$
by perturbing the original differential $\tilde\diff$
to $\tilde\diff'=\tilde\diff+\tilde\delta$.
Then $(f,g,h)$ is a reduction $C'_*\redu \tC'_*$,
where $C'_*$ is obtained from $C_*$ by perturbing
the original differential $\diff$ to
$\diff':=\diff+g\tilde\delta f$.
\end{lemma}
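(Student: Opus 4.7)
The plan is to verify directly that the triple $(f,g,h)$ satisfies the three axioms of Definition~\ref{d:redu} as a reduction between the perturbed complexes $C'_*$ and $\tC'_*$. Since the maps $f$, $g$, and $h$ themselves are unchanged and only the differentials have been altered, the purely set-theoretic side conditions $fg=\id_{\tC_*}$, $fh=0$, $hg=0$, and $hh=0$ from axioms (ii)--(iii) carry over for free. The formula $\diff' = \diff + g\tilde\delta f$ is essentially forced: for $f$ to remain a chain map one needs $f\diff' = \tilde\diff' f = f\diff + \tilde\delta f$, and pulling $\tilde\delta$ back through $g$ using $fg=\id$ leads precisely to the stated expression.

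The main step is to check that $C'_*$ is indeed a chain complex, i.e.\ $\diff'\diff' = 0$. Expanding $(\diff + g\tilde\delta f)(\diff + g\tilde\delta f)$ and applying $\diff^2=0$, the chain-map identities $f\diff = \tilde\diff f$ and $\diff g = g\tilde\diff$, and the relation $fg=\id$, the four terms collect into
\[
\diff'\diff' \;=\; g\bigl(\tilde\diff\,\tilde\delta + \tilde\delta\,\tilde\diff + \tilde\delta^{2}\bigr) f.
\]
The parenthesized expression is exactly the obstruction for $\tilde\diff' = \tilde\diff + \tilde\delta$ to square to zero, and it vanishes by the hypothesis that $\tC'_*$ is a chain complex; hence $\diff'^{2}=0$.

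The remaining axioms are quick bookkeeping. For the chain-map property of $f$ one has $f\diff' = f\diff + fg\tilde\delta f = \tilde\diff f + \tilde\delta f = \tilde\diff' f$; symmetrically, $\diff' g = \diff g + g\tilde\delta f g = g\tilde\diff + g\tilde\delta = g\tilde\diff'$. For the chain homotopy $\id_{C_*} - gf = \diff' h + h\diff'$, expansion produces the original $\diff h + h\diff$ plus the two extra summands $g\tilde\delta fh$ and $hg\tilde\delta f$, both of which vanish by axiom~(iii). I anticipate no real obstacle beyond the bookkeeping in the first step; notably, the argument uses no nilpotency or smallness assumption on $\tilde\delta$, which is what distinguishes this \emph{easy perturbation} statement from the basic perturbation lemma (Theorem~\ref{t:BPL}).
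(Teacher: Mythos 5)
Your proof is correct. The paper actually omits the proof of this lemma entirely (citing \cite[Proposition~49]{SergerGenova} as "easy"), and your direct verification of the reduction axioms is exactly the straightforward argument that was omitted; all the algebraic steps check out, and your remark that no nilpotency hypothesis is needed correctly captures why this is the "easy" direction as opposed to the basic perturbation lemma.
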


Thus, under favorable circumstances, if a parameterized chain complex
$C_*$ is equipped with polynomial-time homology, the
combination of the basic perturbation lemma and Lemma~\ref{l:EPL}
allows us to obtain polynomial-time homology for the
perturbed chain complex~$C'_*$.

%However, in order to analyze the nilpotency required
%in the basic perturbation lemma, we need a sufficient
%understanding of the mappings in the original strong equivalence.
%Thus, if the polynomial-time homology for $C_*$ has been constructed
%by composing several strong equivalences, for example,
%writing the relevant maps down explicitly may become unmanageable.
%
% In the next section, though, we describe a situation where
%the use of both perturbation lemmas is straightforward.

\subsection{Mapping cone}\label{s:mcon}

Here we consider the mapping cone operation
for chain complexes, as introduced in Section~\ref{s:ssetchain}.

\begin{prop}[Algebraic mapping cone]\label{p:mcone}
If $C_*,\tC_*$ are (parameterized) chain complexes with polynomial-time homology
and $\varphi\:C_*\to \tC_*$ is a polynomial-time chain map,
then the cone  $\MCone_*(\varphi)$
can be equipped with polynomial-time homology.
\end{prop}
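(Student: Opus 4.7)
The plan is to view $\MCone_*(\varphi)$ as a \emph{perturbation} of the trivial mapping cone (the one for the zero map) and then transport polynomial-time homology through this perturbation using the basic perturbation lemma together with Lemma~\ref{l:EPL}.

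First I would verify a simple ``direct sum'' principle: if $C_* \steqP \EC_*$ via a middle $A_*$ and $\tC_* \steqP \tilde{A}_* \tilde{\EC}_*$ via a middle $\tilde{A}_*$, then, taking componentwise direct sums of the three maps in each reduction, one obtains polynomial-time reductions
\[
C_*[-1]\oplus\tC_* \;\lreduP\; A_*[-1]\oplus\tilde{A}_* \;\reduP\; \EC_*[-1]\oplus\tilde{\EC}_*,
\]
and the target on the right is globally polynomial-time (its distinguished basis in each degree is the disjoint union of two shifted pieces whose sizes are polynomial in the parameter). Thus the ``trivial cone'' $C_*[-1]\oplus\tC_*$ carries polynomial-time homology. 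Shifting and direct sums of encodings raise no polynomiality issues.

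Now observe that the differential of $\MCone_*(\varphi)$ differs from that of the direct sum $C_*[-1]\oplus\tC_*$ only by the map
\[
\delta(a,\tilde{b})=(0,\varphi_*(a)),
\]
and $\delta$ is polynomial-time because $\varphi$ is. I would first push this perturbation \emph{into} the middle $M_*:=A_*[-1]\oplus\tilde{A}_*$ using Lemma~\ref{l:EPL}: the reduction $M_*\reduP C_*[-1]\oplus\tC_*$, say with triple $(f,g,h)$, yields a perturbed middle $M'_*$ with differential $d_{M_*}+g\delta f$ and the same triple $(f,g,h)$ is a polynomial-time reduction $M'_*\reduP\MCone_*(\varphi)$. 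Then I would push the perturbation from $M'_*$ \emph{onwards} to $\EC_*[-1]\oplus\tilde{\EC}_*$ using the basic perturbation lemma (Theorem~\ref{t:BPL}) applied to the reduction $M_*\reduP \EC_*[-1]\oplus\tilde{\EC}_*$ with triple $(f',g',h')$ and the perturbation $\delta':=g\delta f$ on $M_*$. This produces a new, globally polynomial-time chain complex $E'_*$ (same generators as $\EC_*[-1]\oplus\tilde{\EC}_*$, just a perturbed differential computable by the explicit BPL formulas) together with a polynomial-time reduction $M'_*\reduP E'_*$. Composing the two arrows then gives the desired polynomial-time strong equivalence $\MCone_*(\varphi)\steqP E'_*$.

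The main thing to check, and the only nontrivial analytic point, is the hypothesis of the basic perturbation lemma: that $h'\delta'$ is nilpotent with a constant bound, as required by Theorem~\ref{t:BPL}. Here the bigraded structure of the direct sum comes to the rescue. Since $f$, $g$, $h$, and $h'$ are all direct sums of reduction triples, they preserve the decomposition into ``first coordinate'' (coming from $A_*[-1]$) and ``second coordinate'' (coming from $\tilde{A}_*$). A short calculation with $\delta'(a,\tilde{a})=\bigl(0,\,g^{(\tilde{A})}\varphi f^{(A)}(a)\bigr)$ shows that $\delta'$ \emph{kills everything in the first coordinate of its output}, so any application of $\delta'$ after one already-applied $\delta'$ (interleaved with $h'$, which acts componentwise) produces zero. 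Thus $(h'\delta')^2=0$, uniformly in the parameter, which gives the required constant nilpotency bound $N=2$. Polynomial-time computability of the resulting data follows from the explicit BPL formulas combined with the polynomial-time assumptions, and the composition with Lemma~\ref{l:EPL} involves only a constant number of polynomial-time maps, so Corollary~\ref{c:steq} delivers the strong equivalence.
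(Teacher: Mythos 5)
Your proof is correct and takes essentially the same approach as the paper's sketch: build the strong equivalence for the trivial cone by direct sums, view $\MCone_*(\varphi)$ as a perturbation, and propagate via Lemma~\ref{l:EPL} together with the basic perturbation lemma. The one thing you add is the concrete reason for the nilpotency bound $N=2$ — namely that $\delta'$ outputs into the second summand and vanishes on it, so $(h'\delta')^2=0$ — which the paper leaves to a citation; this part checks out.
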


\begin{proof}[Proof (sketch)]
This is essentially \cite[Theorem~79]{SergerGenova}.
We sketch the proof since it is a simple and
instructive use of the perturbation lemma.

Given strong equivalences
$C_*\steqP \EC_*$ and $\tC_*\steqP \EtC_*$,
we want to construct a polynomial-time strong
equivalence of $\MCone_*(\varphi)$ with a suitable
globally polynomial-time chain complex~$\EM_*$.

We observe that, by definition, the chain groups
of $\MCone_*(\varphi)$ depend only on $C_*,\tC_*$ but not on $\varphi$
(only the differential depends on $\varphi$). We thus first
consider $\MCone_*(0_{C_*\to \tC_*})$, where $0_{C_*\to \tC_*}$
is the zero chain map of the indicated chain complexes.
Given the strong equivalences for $C_*$ and $\tC_*$ as
above, it is straightforward to construct a strong equivalence
\[
\MCone_*(0_{C_*\to \tC_*})\steqP \MCone_*(0_{\EC_*\to\EtC_*});
\]
this is just a direct sum construction.

Next, we regard $\MCone_*(\varphi)$
as a perturbation of $\MCone_*(0_{C_*\to \tC_*})$.
Then we propagate the perturbation through the strong equivalence;
in the application of the basic perturbation lemma,
it turns out that the nilpotency of the relevant maps
is bounded by~$2$ (independent of~$\thedim$). We refer to
\cite[Theorems~61,79]{SergerGenova} for details.
\end{proof}

We remark that the strong equivalence $\MCone_*(\varphi)\steqP\EM_*$
produced in the proposition restricts to the original strong equivalence
$\tC_*\steqP \EtC_*$. This follows at once from the explicit formulas in
the basic perturbation lemma and Lemma~\ref{l:EPL} and the fact that the involved perturbation is zero on~$\tC_*$.

\subsection{Twisted product}\label{s:twist}

\heading{On fiber bundles. } Our main goal is the computation
of a Postnikov system for a given space $Y$. As we have mentioned,
the $\thedim$th stage of a Postnikov system can be thought of as
an approximation of $Y$, in a homotopy-theoretic sense, made of
simple building blocks, which are called Eilenberg--MacLane spaces.
These building blocks will be discussed in Section~\ref{s:EML1} below,
but here we will consider the operation used to paste the building
blocks together.

To convey some intuition, we begin with the topological notion of
\emph{fiber bundle}\footnote{In the literature on simplicial sets,
effective homology and such, one usually speaks about a \emph{fibration},
which is a notion more general than a fiber bundle; roughly
speaking, a fibration can be regarded as a ``fiber bundle up to homotopy.''
}
(a \emph{vector bundle} is a special
case of a fiber bundle). Let $B$, the \emph{base space}, and
$F$, the \emph{fiber space}, be two spaces. The Cartesian product $F\times B$
can be thought of as a copy of $F$ sitting above each point of $B$;
for $B$ the unit circle $S^1$ and $F$ a segment this is indicated
in the left picture:
\immfig{fiberb} % uc/akts/EPS/mobi.eps
The product $F\times B$ is a \emph{trivial} fiber bundle, while
the right picture shows a nontrivial fiber bundle (a M\"obius band
in this case). Above every point $b\in B$, we still have a copy
of $F$, and moreover, each such $b$ has a small neighborhood
$U$ such that the union of all fibers sitting above $U$
is homeomorphic to the product $F\times U$, a rectangle in the picture.
However, globally, the union of the fibers above all of $B$
forms a space $E$, the \emph{total space} of the fiber bundle,
that is in general different from $F\times B$.

More precisely, a fiber bundle is given as $p\: E\to B$,
where $E,B$ are spaces and $p$ is a surjective map,
such that for every $b\in B$ there are a neighborhood $U$ of $b$ and
a homeomorphism  $h\:p^{-1}(U)\to F\times U$ fixing
the second component, i.e.,  with $h(x)_2=p(x)$ for every $x\in E$.
(Other famous examples of nontrivial fiber bundles involve the
the Klein bottle with $B=F=S^1$ or the Hopf fibration $S^3\to S^2$.)

For our purposes, we will deal with fiber bundles where the fiber $F$
has ``enough symmetries,'' meaning that there is a group $G$
acting on the fiber $F$, and this helps in specifying the
total space $E$ in terms of $B$, $F$, and some additional data
which, informally speaking, tell us how $E$ is ``twisted'' compared
to the product $F\times B$.

\heading{Simplicial groups. } In order to define the appropriate
simplicial notions, we first need to recall that a \emph{simplicial
group} is a simplicial set $G$ such that, for each $\thedim\ge 0$,
the set $G_\thedim$ of $\thedim$-dimensional simplices forms
a group, and moreover, the face and degeneracy operators are
group homomorphisms.

A \emph{parameterized simplicial group} and a \emph{locally polynomial-time
simplicial group} are defined in an obvious analogy with the
corresponding notions for simplicial sets and chain complexes.

A basic example of a simplicial group is the standard simplicial
model of an Eilenberg--MacLane space; see Section~\ref{s:EML1} below.
Actually, it is known that every Abelian simplicial group
is homotopy equivalent to a product of  Eilenberg--MacLane spaces
(see \cite[Chap.~V]{May:SimplicialObjects-1992}), and we will be interested only
in the Abelian case. Every simplicial group $G$ is a Kan simplicial
set \cite[Theorem~17.1]{May:SimplicialObjects-1992}, and so continuous maps into $|G|$
have a simplicial representation up to homotopy.

\heading{A simplicial setting: twisted products. }
For our purposes, we will deal with fiber bundles where $F$, $B$, and $E$
are simplicial sets, and a simplicial group $G$ acts (simplicially) on~$F$.
The corresponding simplicial notion is called a
\emph{twisted Cartesian product} (a more general simplicial
notion, a counterpart of a fibration, is a \emph{Kan fibration};
see, e.g., \cite[Chap.~I,II]{May:SimplicialObjects-1992}).

\begin{defn}[Twisted Cartesian product]\label{d:twistprod}
Let $B$ and $F$ be simplicial sets, and let an action of a simplicial
group $G$ on $F$ be given, i.e., a simplicial map $F\times G\to F$
satisfying the usual conditions for a (right) action of a group on a set;
that is, $\phi(\gamma\gamma')=(\phi\gamma)\gamma'$
and $\phi e_\thedim=\phi$ ($\phi\in F_\thedim$, $\gamma,\gamma'\in G_\thedim$,
$e_\thedim$ the unit element of $G_\thedim$). Moreover, let
$\tau=(\tau_\thedim)_{\thedim=1}^\infty$ be a \emph{twisting operator},
where $\tau_\thedim\:B_\thedim\to G_{\thedim-1}$ are mappings satisfying
the following conditions (we omit the dimension indices for simplicity):
\begin{enumerate}
\setlength{\itemsep}{1pt}
\setlength{\parskip}{0pt}
\setlength{\parsep}{0pt}
\item[\rm(i)] $\partial_0\tau(\beta)=\tau(\partial_1\beta)\tau(\partial_0\beta)^{-1}$;
\item[\rm(ii)] $\partial_i\tau(\beta)=\tau(\partial_{i+1}\beta)$
for $i\ge 1$;
\item[\rm(iii)] $s_i\tau(\beta)=\tau(s_{i+1}\beta)$ for all $i$; and
\item[\rm(iv)] $\tau(s_0\beta)=e_{\thedim}$ for all $\beta\in B_\thedim$,
where $e_\thedim$ is the unit element of $G_\thedim$.
\end{enumerate}

Then the \emph{twisted Cartesian product} $F\times_\tau B$ is a simplicial set
$E$ with $E_\thedim=F_\thedim\times B_\thedim$, i.e., the $\thedim$-simplices
are as in the Cartesian product $F\times B$, and the face and degeneracy
operators are also as in the Cartesian product (see Section~\ref{s:prod}),
with the sole exception of $\partial_0$, which is given by
\[
\partial_0(\phi,\beta):=(\partial_0(\phi)\tau(\beta),\partial_0 \beta),
\ \ \ \ (\phi,\beta)\in F_\thedim\times B_\thedim.
\]

A twisted Cartesian product $F\times_\tau B$ is called \emph{principal}
if $F=G$ and the considered right action of $G$ on itself is
by (right) multiplication.
\end{defn}

Thus, the only way in which $F\times_\tau B$ differs from the ordinary
Cartesian product $F\times B$ is in the $0$th face operator.
It is definitely not easy to see why this should be the right way of
representing fiber bundles simplicially, but for us, it is only important
that it works, and we will have explicit formulas available for the
twisting operator for all the specific applications. Actually,
we will use solely principal twisted Cartesian products.

Let  $F,B$ be locally polynomial-time
simplicial sets, let $G$ be a locally polynomial-time simplicial
group, and let the action of $G$ on $F$ and the twisting operator $\tau$ be
polynomial-time maps (again in a sense precisely analogous
to polynomial-time simplicial maps or chain maps); we assume that
all of these objects are parameterized by the same parameter set $\II$.
It is easy to see that then
the simplicial set $F\times_\tau B$, again parameterized
by $\II$, is locally polynomial-time.

We will need that under certain reducedness assumptions,
twisted products preserve polynomial-time homology.

\begin{prop}[Twisted product]\label{p:twistprod} Let $F$ and $B$ be
simplicial sets with polynomial-time homology, let $G$ be a
locally polynomial-time simplicial group with a
polynomial-time simplicial action on $F$, and let $\tau$ be a
polynomial-time twisting operator. Moreover,  suppose that $G$ is
\emph{$0$-reduced} (a single vertex) or that $B$ is
\emph{$1$-reduced} (a single vertex, no edges). Then $E:=F\times_\tau
B$ can be equipped with polynomial-time homology.
\end{prop}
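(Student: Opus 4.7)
My plan is to treat $C_*(F\times_\tau B)$ as a perturbation of $C_*(F\times B)$—altering only the $\partial_0$ component of the differential—and to apply the basic perturbation lemma (Theorem~\ref{t:BPL}) to the two-factor Eilenberg--Zilber reduction, then to propagate the resulting perturbation through the tensor-product strong equivalences coming from the polynomial-time homology for $F$ and $B$, ending at a globally polynomial-time witness.

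Concretely, $C_*(F\times_\tau B)$ and $C_*(F\times B)$ have the same chain groups; the difference of their differentials is a polynomial-time computable degree-$(-1)$ map $\delta$ determined by $\delta(\phi,\beta) = (\partial_0\phi\cdot\tau(\beta),\partial_0\beta) - (\partial_0\phi,\partial_0\beta)$ on generators (degenerate contributions being discarded in the normalized complex). The two-factor case of Lemma~\ref{l:EZ} supplies a polynomial-time reduction $\rho_0=(\AW,\EML,\SHI)\:C_*(F\times B)\reduP C_*(F)\otimes C_*(B)$, and I would apply BPL to the pair $(\rho_0,\delta)$.

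The main obstacle is verifying that $\SHI\circ\delta$ has \emph{constant} nilpotency bounds, depending only on the chain degree and not on the parameter. This is precisely where the reducedness hypothesis is used: each iteration of $\SHI\delta$ invokes $\tau(\beta)\in G_{\dim\beta-1}$, and, via the explicit formulas for $\SHI$, either pushes a $G$-factor toward $G_0$ (trivial if $G$ is $0$-reduced) or drives the $B$-coordinate toward low dimensions (where $B$ has only degenerate simplices if $B$ is $1$-reduced). In either case the iteration vanishes in the normalized chain complex after $O(k)$ steps on $k$-chains. BPL then yields a polynomial-time reduction $C_*(F\times_\tau B)\reduP (C_*(F)\otimes C_*(B))_t$, where the subscript $t$ denotes the perturbed ``twisted tensor product'' differential.

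To finish, I would combine the strong equivalences $C_*(F)\steqP\EC^F_*$ and $C_*(B)\steqP\EC^B_*$ via the two-factor case of Lemma~\ref{l:tens-steq}, which does not need $0$-reducedness since for two factors the rank of $\EC^F_*\otimes\EC^B_*$ in each fixed degree is polynomial in the parameter size. This gives $C_*(F)\otimes C_*(B)\steqP \EC^F_*\otimes\EC^B_*$. I would then transport the twisted-tensor perturbation across this strong equivalence: pull it back to the middle chain complex by Lemma~\ref{l:EPL}, then push it forward by another BPL application, again verifying constant nilpotency bounds by an analogous argument (the tensor-product homotopies from Lemma~\ref{l:tens-steq} inherit the necessary degree-reducing behavior from the $\SHI$-based analysis above). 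Composing the resulting reductions via Corollary~\ref{c:steq} produces a polynomial-time strong equivalence $C_*(F\times_\tau B)\steqP (\EC^F_*\otimes\EC^B_*)_t$ whose right-hand side is globally polynomial-time (same chain groups as $\EC^F_*\otimes\EC^B_*$, differential perturbed by a polynomial-time map), exhibiting polynomial-time homology for the twisted product.
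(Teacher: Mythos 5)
Your overall strategy—perturb the two-factor Eilenberg--Zilber reduction by $\delta$ via BPL, then propagate the resulting twisted-tensor-product perturbation through the strong equivalence $C_*(F)\otimes C_*(B)\steqP \EC^F_*\otimes\EC^B_*$—is exactly the paper's decomposition. However, you have misplaced where the reducedness hypothesis does its work, and the second transport step as you describe it would not go through. For the first BPL application, no reducedness is needed at all: the filtration degree (geometric dimension of the $B$-coordinate) is not increased by $\SHI$ and is strictly decreased by $\delta$, giving the nilpotency bound $N_\thedim = \thedim+1$ for $\SHI\circ\delta$ unconditionally. Your claim that iterating $\SHI\delta$ ``pushes a $G$-factor toward $G_0$'' or ``drives the $B$-coordinate toward low dimensions'' and that this is where the reducedness enters is incorrect for this step.

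The genuine obstacle is the second BPL application, on the reduction $\hat T_*\reduP\ET_*$ coming from Lemma~\ref{l:tens-steq}. The chain homotopy there has the form $h = h^{(1)}\otimes\id + g^{(1)}f^{(1)}\otimes h^{(2)}$, and the term $g^{(1)}f^{(1)}\otimes h^{(2)}$ can \emph{raise} the filtration degree by one, so it does not ``inherit the degree-reducing behavior from the $\SHI$-based analysis'' as you assert; that reasoning is circular. To restore nilpotency one must show that the transported perturbation $\hat\delta^T$ \emph{drops} the filtration degree by at least two, so the composite $h\hat\delta^T$ is still strictly filtration-decreasing. This is where the hypothesis that $G$ is $0$-reduced or $B$ is $1$-reduced is actually used: the paper invokes \cite[Corollary~9 and 11]{Filakovsky-tensor} (equivalently, the $t_0=t_1=0$ vanishing of the twisting cochain proved in Lemma~\ref{l:diff-stays} for the $0$-reduced case) to obtain the degree-$(-2)$ drop. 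Without this input your argument for the second nilpotency bound is a gap, and the proof does not close.
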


The effective-homology analogs of this result are due to Rubio and
Sergeraert \cite[Theorem~132]{SergerGenova} when $B$ is $1$-reduced and
due to Filakovsk\'y \cite[Corollary~12]{Filakovsky-tensor} when $G$ is $0$-reduced.

\begin{proof}[Proof (sketch)] Let the polynomial-time homology
of $F$ and $B$ be given by strong equivalences $C_*(F)\steqP \EF_*$
and $C_*(B)\steqP \EB_*$, respectively.

We begin with the ordinary Cartesian product $F\times B$.
By the Eilenberg--Zilber theorem (Lemma~\ref{l:EZ} for two factors,
where we do not need to assume $0$-reducedness), there
is a reduction $(\AW,\EML,\SHI):C_*(F\times B)\reduP T_*$, where $T_*$
is the tensor product $C_*(F)\otimes C_*(B)$. Further, by
Lemma~\ref{l:tens-steq} for two factors, we have
$T_*\steqP \ET_*:=\EF_*\otimes \EB_*$. So altogether
\begin{equation}\label{e:redchain}
C_*(F\times B)\reduP T_* \steqP \ET_*.
\end{equation}

Next, by the definition of the twisted product,
the chain complex $C_*(F\times_\tau B)$ has the same chain groups as
$C_*(F\times B)$, but the differential is modified. Writing $\delta$
for the difference of the two differentials, on elements $(\phi,\beta)$
the standard basis of $C_\thedim(F\times B)$ we get
$\delta(\phi,\beta)= (\partial_0(\phi)\tau(\beta),\partial_0\beta)-(\partial_0 \phi,\partial_0\beta)$.

We recall that in any simplicial set $X$, every simplex $\sigma$
can be obtained from a unique \emph{nondegenerate} simplex $\tau$
by an application of degeneracy operators. Let us refer to the dimension
of $\tau$ as the \emph{geometric dimension} of $\sigma$.
Given a simplex $(\phi,\beta)$ of $F\times B$, its \emph{filtration degree}
is defined as the geometric  dimension of~$\beta$.

In the present proof, the filtration
degree serves as a potential function for controlling nilpotency
of the appropriate maps. First, it can be checked that the
chain homotopy $\SHI$ does not increase the filtration degree, and
a simple argument shows that $\delta$ decreases it at least by~$1$
(see, e.g., \cite[Theorem~130]{SergerGenova}, for details).
It follows that the composition $\SHI\circ \delta$
has constant nilpotency bounds, namely, $N_\thedim=\thedim+1$.
Therefore, the basic perturbation
lemma (Theorem~\ref{t:BPL}) shows that $C_*(F\times_\tau B)\reduP
T'_*$, where $T'_*$ is a perturbation of the tensor product complex~$T_*$.

Next, we would like to propagate the perturbation from $T_*$
through the next strong equivalence in (\ref{e:redchain}),
which we write more explicitly as
\[
T_*\lreduP \hat T_* \reduP \ET_*.
\]
Let $\delta^T$ be the difference of the differential in $T'_*$
and in $T_*$. By Lemma~\ref{l:EPL},
we get a perturbed version $\hat T'_*$ of the middle complex
$\hat T_*$, and the difference of its differential minus
the differential of $\hat T_*$ is $\hat\delta^T=g\delta^T f$,
for some chain maps $f,g$ from the reduction $T_*\lreduP \hat T_*$.

We now recall from the proof of Lemma~\ref{l:tens-steq}
that the chain complex $\hat T_*$ is constructed
as a tensor product of two chain complexes,
and that the chain homotopy $h$ in the reduction
$\hat T_* \reduP \ET_*$ has the form
\begin{equation}\label{e:thehat}
h =h^{(1)}\otimes \id + g^{(1)}f^{(1)}\otimes h^{(2)},
\end{equation}
for some chain
maps $f^{(1)},g^{(1)}$ and chain homotopies $h^{(1)},h^{(2)}$.

In order to apply the basic perturbation lemma to the just mentioned
reduction $\hat T_* \reduP \ET_*$, we  need to show that
$h\hat\delta^T$  has constant nilpotency bounds for every chain
homotopy~$h$ of the form (\ref{e:thehat}). This follows from the
obvious fact that such a chain homotopy never increases the filtration
degree\footnote{For a basis element $\hat a\otimes \hat b$ of the
tensor product $\hat T$, the filtration degree is defined simply as the
degree of~$\hat b$.} by more than~$1$, plus a result showing
that if $G$ is $0$-reduced or $B$ is $1$-reduced, then
$\hat\delta^T$ \emph{decreases} the filtration degree at least by~$2$.
We refer to \cite[Corollary~9 and 11]{Filakovsky-tensor} for a proof of
the latter result (also see the proof of Lemma~\ref{l:diff-stays} below,
where a very similar situation is discussed). Then a constant nilpotency bound
with $N_\thedim\le k+1$ follows, and the proposition is proved.
\end{proof}

\subsection{The bar construction}

The bar construction, originating in Eilenberg and Mac~Lane
\cite{EilenbergMacLane:GroupsHPin1-1953}, is an algebraic construction
with many uses and generalizations. For us, it provides a way
of constructing auxiliary chain complexes for certain reductions
and strong equivalences; we will thus introduce it only in the setting
of chain complexes. The definition below is somewhat complicated,
but most of the details will be irrelevant in the sequel---the important
properties will be encapsulated in a couple of lemmas below.
We essentially follow \cite[Chap.~3]{RealThesis}, with some minor
technical differences.

A \emph{differential graded algebra}
is a chain complex $A_*$ together with an associative
multiplication $A_*\otimes A_*\ra A_*$ with a unit $1_{A_*}$.
We denote the image of  $a\otimes b$ simply by $a\cdot b$.
This multiplication is assumed to be a chain map;
in particular, for $a\in A_\thedim$ and $b\in A_\otherdim$
we have $a\cdot b\in A_{\thedim+\otherdim}$.
The chain map condition on the multiplication reads
\[\diff(a\cdot b)=\diff (a)\cdot b+(-1)^{\deg a}a\cdot \diff (b)\]
(the Leibniz rule).
The unit $1_{A_*}$ is necessarily of degree 0.

We say that $A_*$ is 0-reduced if $A_0=\Z$, generated by $1_{A_*}$.
Regarding $\Z$ as a chain complex whose all chain groups are zero
except for the one in dimension 0, which is $\Z$, there is
a unique homomorphism $\varepsilon\:A_*\ra\Z$ of
differential graded algebras (i.e., a chain map preserving
the unit and the multiplication).\footnote{In detail
$\varepsilon(n\cdot 1_{A_*})=n$ and, for $a$ of positive dimension,
$\varepsilon(a)=0$.} We call $\varepsilon$ the \emph{augmentation}.
Its kernel, the \emph{augmentation ideal}, is denoted by $\overline A_*$.

Further, we denote by $\overline A_*^\uparrow$ the
shift of $\overline A_*$ upwards by one, so that we have
\[\overline A^\uparrow_0=\overline A^\uparrow_1=0\textrm{{}, and }\overline A^\uparrow_\thedim=A_{\thedim-1}, \thedim\ge 2.\]
The shifted chain complex comes with the shifted differential
$\diff^{\overline A_*^\uparrow}(a)=-\diff^{\overline A_*}(a)=
-\diff^{A_*}(a)$.

A right differential graded  $A_*$-module
 is a chain complex $M_*$ equipped with a chain map
\[M_*\otimes A_*\ra M_*\]
that satisfies the usual axioms for a module structure. Again the action
being a chain map translates into a Leibniz-type rule
for the compatibility of the multiplication and the differential. Similarly,
a left  $A_*$-module $N_*$ is equipped with an action $A_*\otimes N_*\ra N_*$.

Given $A_*,M_*,N_*$ as above, the bar construction produces
a chain complex $\Barr^{A_*}(M_*,N_*)$.
In order to define it,
we first form an auxiliary chain complex given by
\[
T_*:=\bigoplus_{n=0}^\infty M_*\otimes
 (\overline A^\uparrow_*)^{\otimes n} \otimes N_*.
\]
We denote the differential in $T_*$ by $\diff^T$ and call it the
\emph{tensorial differential}. The actual bar construction
will be given by a perturbation of this differential.

Assuming that each of the chain groups in $A_*,M_*,N_*$
has a distinguished basis,  the distinguished
bases in $T_*$ are made of elements of the form
\[
z := x\otimes a_1\otimes\cdots\otimes a_n\otimes y,
\]
where $x$ comes from a distinguished basis in $M_*$,
$y$ from one in $N_*$, and $a_1,\ldots,a_n\ne 1_A$ from those in $A_*$.
(Here we can also explain the origin of the name ``bar construction'';
in the Eilenberg--Mac~Lane founding paper, the tensor
product signs $\otimes$ in the above notation for $z$
were abbreviated to vertical bars.) The tensorial differential
$\diff^T(z)$ is given by the (iterated) formula~(\ref{e:difffmla})
from Section~\ref{s:prod}.

The degree of such a $z$ equals $\deg(z)=\deg_{\rm tens}(z)+\deg_{\rm res}(z)$,
where $\deg_{\rm tens}(z)$, the \emph{tensorial degree} of $z$,
equals $\deg(x)+\deg(y)+\sum_{i=1}^n\deg(a_i)$ (with
$\deg(a_i)$ being the degree of $a_i$ in $A_*$),
and the \emph{residual degree} $\deg_{\rm res}(z)=n$.

Now the chain complex $\Barr^{A_*}(M_*,N_*)$ has the same
chain groups as $T_*$, but the differential is modified
to $\diff^T+\delta^{\rm ext}$, where $\delta^{\rm ext}$, the \emph{external
differential}, is given by
\begin{eqnarray*}
\delta^{\rm ext}(x\otimes a_1\otimes\cdots\otimes a_n\otimes y)&:=&
(-1)^{m_0} x\cdot a_1\otimes a_2\otimes\cdots\otimes a_n\otimes y\\
&& {}+\sum_{i=1}^{n-1}(-1)^{m_i} x\otimes a_1\otimes\cdots\otimes a_i\cdot a_{i+1}\otimes\cdots\otimes a_n\otimes y\\
%&&\ \ \ \ \ \ \ \ \ \ \ \vdots\\
&& {}+(-1)^{m_n} x\otimes a_1\otimes\cdots\otimes a_{n-1}\otimes
a_n\cdot y,
\end{eqnarray*}
where $m_i=\deg(x)+\deg(a_1)+\cdots+\deg(a_i)+i$. We note that
the external differential is the only part of the definition of
$\Barr^{A_*}(M_*,N_*)$ where the algebra and module structures
play a role. This finishes the definition of the bar construction.

In our applications, the bar construction will be used with
$M_*$ equal to $\Z$. Here we endow $\Z$ with the right $A_*$-module
structure obtained from the augmentation---the unit $1_{A_*}$
acts by identity as it must and the elements from the augmentation
ideal act trivially, i.e., $a\cdot x=0$.
We also note that $\Z$ acts as a unit element
for tensor product, in the sense that $C_*\otimes \Z$
and $\Z\otimes C_*$ can be canonically identified with $C_*$
(this is obvious by considering the distinguished bases, for example).

\begin{lemma}[Polynomial-time homology for the bar construction]
\label{l:bar-poly}
Let $A_*,M_*,N_*$ be locally polynomial-time versions
of the objects above,
%i.e., $A_*$ is a differential
%graded algebra, $M_*$ a right differential graded
%$A_*$-module and and $N_*$ a left differential graded
%$A_*$-module,
with all the multiplications
involved being polynomial-time maps,
and let us suppose that $A_*,M_*,N_*$
are equipped with polynomial-time homology.
Then $\Barr^{A_*}(M_*,N_*)$ can
be equipped with polynomial-time homology.
\end{lemma}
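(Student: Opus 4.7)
The plan is to regard $\Barr^{A_*}(M_*,N_*)$ as a perturbation of the purely tensorial chain complex
\[
T_* \;:=\; \bigoplus_{n=0}^{\infty} M_* \otimes (\overline A^\uparrow_*)^{\otimes n} \otimes N_*
\]
carrying only the tensorial differential $\diff^T$, and then to propagate the extra piece $\delta^{\mathrm{ext}}$ through a polynomial-time strong equivalence of $T_*$ with a globally polynomial-time approximation. This follows the blueprint of Proposition~\ref{p:twistprod}: first approximate by a tensor product, then invoke the basic perturbation lemma.

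First I would observe that $\overline A^\uparrow_*$ is $1$-reduced: since $A_*$ is $0$-reduced we have $\overline A_0 = 0$, and the upward shift makes $\overline A^\uparrow_0 = \overline A^\uparrow_1 = 0$. Consequently every basis element of $(\overline A^\uparrow_*)^{\otimes n}$ has degree at least $2n$, so within $T_\thedim$ only residual degrees $n \leq \thedim/2$ contribute, making $T_\thedim$ a finite direct sum in each dimension. From the polynomial-time homology of $A_*$, together with the canonical splitting $A_* \cong \Z \oplus \overline A_*$ afforded by the augmentation, one extracts a polynomial-time strong equivalence for $\overline A^\uparrow_*$; similarly one uses the given polynomial-time strong equivalences for $M_*$ and $N_*$. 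Applying Lemma~\ref{l:tens-steq} summand by summand to the heavily repeated factors $(\overline A^\uparrow_*)^{\otimes n}$ (whose $1$-reducedness is even stronger than the $0$-reducedness there assumed), and treating $M_*$ and $N_*$ as two outermost fixed factors handled by the constant-size binary case, I would build $\hat T_*$ and $\ET_*$ in the analogous way from the middle and effective components, together with a polynomial-time strong equivalence $T_* \lreduP \hat T_* \reduP \ET_*$ in which $\ET_*$ is globally polynomial-time.

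Next I would turn on the external differential $\delta^{\mathrm{ext}}$, which is polynomial-time because the multiplications on $A_*$, $M_*$, and $N_*$ are. Using Lemma~\ref{l:EPL} I would propagate $\delta^{\mathrm{ext}}$ backwards across $\hat T_* \redu T_*$ to a perturbation $\hat\delta$ of the differential on $\hat T_*$, and then apply the basic perturbation lemma (Theorem~\ref{t:BPL}) to propagate $\hat\delta$ forwards across $\hat T_* \redu \ET_*$. Since perturbation does not alter chain groups, the resulting complex $\ET_*'$ remains globally polynomial-time, and composition of the maps then provides the desired polynomial-time strong equivalence $\Barr^{A_*}(M_*,N_*) \steqP \ET_*'$.

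The main obstacle will be verifying the constant nilpotency bound required by Theorem~\ref{t:BPL}. I would use the \emph{residual degree} $n$ as a potential function. Each tensor-product reduction produced by Lemma~\ref{l:tens-steq} acts inside a fixed residual summand, so its chain maps $f,g$ and its chain homotopy $h$ all preserve residual degree. In contrast, $\delta^{\mathrm{ext}}$ concatenates two consecutive factors via the algebra or module multiplication, and the result still lies in the augmentation ideal because $\varepsilon$ is a ring homomorphism; hence $\delta^{\mathrm{ext}}$ decreases residual degree by exactly~$1$, and the induced perturbation $\hat\delta = g\delta^{\mathrm{ext}} f$ does the same. Any composition $h\hat\delta$ therefore strictly decreases residual degree, so on chains of total degree $\thedim$ it vanishes after at most $\lfloor \thedim/2 \rfloor + 1$ iterations, supplying the required constant nilpotency bound.
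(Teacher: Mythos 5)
Your proposal follows the paper's proof essentially verbatim: regard $\Barr^{A_*}(M_*,N_*)$ as a perturbation of the purely tensorial complex $T_*$, equip $T_*$ with polynomial-time homology via Lemma~\ref{l:tens-steq} (treating $M_*$ and $N_*$ as the constant number of non-$0$-reduced factors), and then propagate $\delta^{\rm ext}$ through the strong equivalence using Lemma~\ref{l:EPL} and the basic perturbation lemma, with the residual degree $\deg_{\rm res}$ as the potential that $\delta^{\rm ext}$ strictly decreases and the homotopy does not increase. Your bound $n\le\lfloor\thedim/2\rfloor$ from the $1$-reducedness of $\overline A^\uparrow_*$ is a slight sharpening of the paper's $n\le\thedim$, but the argument is otherwise the same.
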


\begin{proof} First we equip $T_*$ with polynomial-time
homology; this is essentially Lemma~\ref{l:tens-steq} about tensor products
of strong equivalences. %We recall that $A_*$ is assumed $0$-reduced as
%Lemma~\ref{l:tens-steq} needs, while
The factors $M_*$ and $N_*$ are not 0-reduced but this can be accommodated,
in a way similar to Cartesian products---see the remark
following Proposition~\ref{p:bigprod}.
We also note that although $T_*$ is an infinite direct sum,
the $\thedim$th chain group involves only elements with $n\le \thedim$
from this direct sum, and so Lemma~\ref{l:tens-steq} is applicable.

Next, we apply the basic perturbation lemma and Lemma~\ref{l:EPL}, in a way
very similar to the proof of Proposition~\ref{p:twistprod} on twisted
products, to propagate the perturbation of the differential in $T_*$
by the external differential $\delta^{\rm ext}$ through the strong equivalence.
The only issue is to show constant nilpotency bounds. Here one uses that
the chain homotopy involved, which is of the form (\ref{e:thehat})
but with an arbitrary number of factors, does not increase the residual degree
$\deg_{\rm res}$, while $\delta^{\rm ext}$, obviously, decreases it by~$1$.
\end{proof}

The next lemma is a key property of the bar construction, showing
that it provides, in a sense, an ``inverse'' to the operation of tensor
product with $A_*$. Indeed, the bar construction $\Barr^{A_*}(\Z,A_*)$
can be regarded as a formal analog of the power series
expression $1=\frac a{a}=a+(1-a)a+(1-a)^2a+\cdots$
for a real number $a\in(0,2)$.

\begin{lemma}\label{l:bar-inverse}
Given a locally polynomial-time $0$-reduced differential graded algebra $A_*$,
there is a reduction
\[\Barr^{A_*}(\Z,A_*)\reduP \Z
\]
(where $A_*$ is taken
as a differential graded
 $A_*$-module in the obvious way). More generally, if we consider,
in addition, a locally polynomial-time
chain complex $M_*$ and turn $A_*\otimes M_*$ into
a left $A_*$-module by defining $a\cdot(b\otimes x):=(a\cdot b)\otimes x$, then we obtain
a reduction $\Barr^{A_*}(\Z,A_*\otimes M_*)\reduP M_*$.
\end{lemma}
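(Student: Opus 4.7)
The plan is to construct a single explicit reduction that handles the more general case $\Barr^{A_*}(\Z,A_*\otimes M_*)\reduP M_*$; the first statement is recovered via the canonical identification $A_*\otimes\Z\cong A_*$. The construction is essentially the classical contracting homotopy for the normalized bar resolution, adapted to our conventions, with the additional task of checking polynomial-time computability.

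On a basis element $z=[a_1|\cdots|a_n](b\otimes x)$ of $\Barr^{A_*}(\Z,A_*\otimes M_*)$, I would define
\[
f(z) := \begin{cases}\varepsilon(b)\cdot x,& n=0,\\ 0,& n\ge 1,\end{cases}\qquad g(x) := 1\otimes(1_A\otimes x),
\]
and the contracting homotopy
\[
h(z) := (-1)^{\sigma(z)}\,[a_1|\cdots|a_n\,|\,\bar b]\,(1_A\otimes x),
\]
where $\bar b:=b-\varepsilon(b)\cdot 1_A\in\overline A_*$ and $\sigma(z)$ is a Koszul sign determined by $\deg(x)$, $\deg(b)$, and the $\deg(a_i)$. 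Note that $h(z)=0$ whenever $b$ is a multiple of $1_A$; in particular $hg=0$ and $h^2=0$. The identities $fg=\id_{M_*}$ and $fh=0$ are also immediate (the latter because $h(z)$ has residual degree $n+1\ge 1$).

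The nontrivial reduction axiom is $\id-gf=\diff h+h\diff$, where $\diff=\diff^T+\delta^{\mathrm{ext}}$. I would verify it by expanding both sides on a basis element $z$. All contributions from the tensorial differential $\diff^T$ and from the \emph{non-extremal} terms of $\delta^{\mathrm{ext}}$ cancel between $\diff h(z)$ and $h\diff(z)$: this is because $h$ is (up to the sign $\sigma$) an inverse to the operation that collapses the rightmost bar slot $\bar b$ back into the trailing $A_*$-factor via multiplication by $1_A$. The only surviving contribution comes from the last term of $\delta^{\mathrm{ext}}$ applied to $h(z)$, which yields $\pm[a_1|\cdots|a_n](\bar b\otimes x)=\pm z\mp\varepsilon(b)\cdot[a_1|\cdots|a_n](1_A\otimes x)$; with $\sigma$ chosen correctly, the first piece contributes $+z$ and the second piece contributes $+gf(z)$ (which is nonzero only for $n=0$, matching the definition of $f$).

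Polynomial-time computability of $f$, $g$, $h$ is then a consequence of the polynomial-time availability of $\varepsilon$ (reading off the coefficient of $1_A$, since $A_*$ is $0$-reduced), of the algebra and module multiplications, and of the constant-length manipulations of tensor factors required to insert $\bar b$ and append $1_A$. Combined with Lemma~\ref{l:bar-poly}, which guarantees that $\Barr^{A_*}(\Z,A_*\otimes M_*)$ is locally polynomial-time, this yields the required polynomial-time reduction. The main obstacle I expect is precisely the sign bookkeeping: the shifted differential on $\overline A_*^\uparrow$, the position-dependent signs $(-1)^{m_i}$ in $\delta^{\mathrm{ext}}$, and $\sigma(z)$ must all be coordinated so that the non-extremal terms genuinely telescope. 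This is a classical calculation, going back to Eilenberg--Mac~Lane \cite{EilenbergMacLane:GroupsHPin1-1953}; a variant in a convention close to ours is worked out in \cite[Chap.~3]{RealThesis}.
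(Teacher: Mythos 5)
Your proposal is correct and matches the paper's proof essentially line for line: both use the classical contracting homotopy $h$ that moves $\overline b$ into a fresh bar slot and replaces the module factor by $1_{A_*}\otimes x$, with $f,g$ given by the augmentation and the unit. The only cosmetic difference is that the paper first builds the reduction $(f,g,h)\colon\Barr^{A_*}(\Z,A_*)\reduP\Z$ with the explicit sign $(-1)^{\deg(a_1)+\cdots+\deg(a_n)+\deg(a)+n+1}$ and then obtains the general case by tensoring all three maps with $\id_{M_*}$, whereas you write the general formula directly; one small consequence of the paper's route is that the sign $\sigma(z)$ in your $h$ does not in fact depend on $\deg(x)$, only on the degrees of the $a_i$, $\deg(b)$, and the residual degree $n$.
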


We note that we assume no $A_*$-module structure on $M_*$;
the left $A_*$-module structure on $A_*\otimes M_*$
comes from the multiplication in~$A_*$.

\begin{proof} In the reduction $(f,g,h)\: \Barr^{A_*}(\Z,A_*)\reduP \Z$,
$f$ and $g$ are given by the assumed identification of $A_0$ with $\Z$
(note that the $0$th chain group of $\Barr^{A_*}(\Z,A_*)$ can be
canonically identified with $A_0$); in particular,
we have $f(a_1\otimes\cdots\otimes a_n\otimes a)=0$ unless
$n=0$.

In residual degree 0 we have $f(a)=\varepsilon(a)$. Denote by
$\overline a=a-\varepsilon(a)\cdot 1_A$ the projection of $a$
onto the augmentation ideal $\overline A_*$. Then, for a basis element
$z=a_1\otimes\cdots\otimes a_n\otimes a$ of $\Barr^{A_*}(\Z,A_*)$,
we put
\[h(z):=(-1)^{\deg(a_1)+\cdots+\deg(a_n)+\deg(a)+n+1}a_1\otimes\cdots\otimes a_n\otimes\overline{a}\otimes 1_{A_*}.\]
It is simple to check that we indeed get a reduction (see \cite{Moore59}),
and polynomiality is obvious.

The more general reduction $\Barr^{A_*}(\Z,A_*\otimes M_*)\reduP M_*$
is then immediately obtained from the previous one by tensoring all the
maps with the identity on~$M_*$.
\end{proof}

\subsection{The base space (a ``twisted division'')}

Here, as in Section~\ref{s:twist}, $G$ is an Abelian simplicial
group, and we consider a twisted product, this time
a principal one: $G\times_\tau B$. However, while previously
we took $G,B,\tau$ as known, and wanted to compute $G\times_\tau B$
(so we did ``twisted multiplication''), here we assume that
$G$ \emph{and} $G\times_\tau B$ are known, and we want $B$---so
one can think of this as ``twisted division''. The bar construction
is the main tool.

\begin{prop}\label{p:tw-div}
Let $G$ be a $0$-reduced
locally polynomial-time Abelian simplicial group,
let $B$ be a locally polynomial-time
simplicial set, and let $\tau$ be a polynomial-time twisting
operator.  If both $G$ and $G\times_\tau B$ are equipped with
polynomial-time homology, then $B$ can also be
equipped with polynomial-time homology.
\end{prop}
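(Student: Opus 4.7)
The plan is to realize $C_*(B)$, up to polynomial-time strong equivalence, as the bar construction $\Barr^{C_*(G)}(\Z, C_*(E))$ with $E := G\times_\tau B$. Once this is achieved, polynomial-time homology for $B$ follows at once from Lemma~\ref{l:bar-poly} (which yields polynomial-time homology for the bar construction from the assumed ones for $G$ and $E$) together with transitivity of polynomial-time strong equivalence (Corollary~\ref{c:steq}).

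First I would equip $C_*(G)$ with a polynomial-time differential graded algebra structure: the simplicial group multiplication $G\times G\to G$ induces a chain map $C_*(G\times G)\to C_*(G)$, and composition with the Eilenberg--MacLane shuffle map $C_*(G)\otimes C_*(G)\to C_*(G\times G)$ from Lemma~\ref{l:EZ} gives the required product. The 0-reducedness of $G$ transfers to $C_*(G)$. Analogously, the principal $G$-action on $E$ endows $C_*(E)$ with a polynomial-time left $C_*(G)$-module structure, so all hypotheses of Lemma~\ref{l:bar-poly} are in place.

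The remaining task is to produce a polynomial-time strong equivalence $\Barr^{C_*(G)}(\Z, C_*(E))\steqP C_*(B)$. In the trivial case $\tau=0$ we have $E=G\times B$, and the binary Eilenberg--Zilber reduction yields a polynomial-time reduction $C_*(E)\reduP C_*(G)\otimes C_*(B)$ that intertwines the $C_*(G)$-actions (the principal action on $E$ corresponds to multiplication in the first tensor factor). Combining this with the second assertion of Lemma~\ref{l:bar-inverse} applied to $M_*=C_*(B)$ produces the composition of polynomial-time reductions
$\Barr^{C_*(G)}(\Z, C_*(E))\reduP \Barr^{C_*(G)}(\Z, C_*(G)\otimes C_*(B))\reduP C_*(B)$. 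For general $\tau$, the chain complex $C_*(E)$ differs from $C_*(G\times B)$ only by a perturbation $\delta_\tau$ of the differential; the module structure is unchanged, because the principal action only modifies the $G$-factor. As in the proof of Proposition~\ref{p:twistprod}, I would propagate $\delta_\tau$ through the Eilenberg--Zilber reduction using the basic perturbation lemma, obtaining a polynomial-time $C_*(G)$-module strong equivalence $C_*(E)\steqP T'_*$, where $T'_*$ has the same chain groups as $C_*(G)\otimes C_*(B)$ but with a perturbed differential. A further application of the bar construction combined with Lemma~\ref{l:bar-inverse} then produces the sought equivalence.

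The principal obstacle is verifying constant nilpotency bounds in the several applications of the basic perturbation lemma, namely for $\delta_\tau$ traveling through the Eilenberg--Zilber reduction and for the induced perturbation traveling through the bar construction. Control should come from combining the filtration by residual degree of the bar construction with the filtration by geometric dimension of simplices of $B$ used in Proposition~\ref{p:twistprod}; the 0-reducedness assumption on $G$ is essential here, since it forces the augmentation ideal $\overline{C_*(G)}$ into strictly positive degrees, so that each factor $\overline{C_*(G)}^\uparrow$ appearing in a bar chain lives in degree at least $2$. This ensures that each iterate of the composite perturbation strictly decreases one of the filtrations, yielding the required nilpotency and hence a polynomial-time version of the whole argument.
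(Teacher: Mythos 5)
Your strategy is closely aligned with the paper's in spirit: both run the bar construction over $A_* := C_*(G)$, use Lemma~\ref{l:bar-inverse} to reduce it back to $C_*(B)$, and control the resulting perturbations by the basic perturbation lemma, with nilpotency supplied by the $0$-reducedness of~$G$. The difference lies in the choice of module for the bar construction, and your route introduces a step the paper carefully avoids.

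You form $\Barr^{A_*}(\Z, C_*(E))$ with module $N_* = C_*(E)$ and then want a polynomial-time strong equivalence $\Barr^{A_*}(\Z, C_*(E)) \steqP C_*(B)$. To get this you must apply the functor $\Barr^{A_*}(\Z, -)$ to a reduction (or strong equivalence) of $A_*$-modules; you assert that the Eilenberg--Zilber reduction $C_*(G\times B)\reduP C_*(G)\otimes C_*(B)$ ``intertwines the $C_*(G)$-actions.'' But a reduction is a triple $(f,g,h)$, and to obtain an induced reduction on bar constructions all three maps---in particular the Shih homotopy $\SHI$---would need to respect the module structure, since the external differential $\delta^{\rm ext}$ in the bar construction involves the $A_*$-action on the last factor. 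The shuffle map $\EML$ is compatible with the algebra/module structure, and one can check the same for Alexander--Whitney, but $\SHI$ is not $A_*$-linear and you offer no argument for it; a chain homotopy that fails $A_*$-linearity does not lift to a chain homotopy of bar constructions. So the reduction $\Barr^{A_*}(\Z, C_*(E))\reduP\Barr^{A_*}(\Z, A_*\otimes C_*(B))$ you propose does not come for free, and this is a genuine gap.

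The paper sidesteps this precisely by never applying the bar functor to a reduction. It takes the module in the bar construction to be $Q'_*$, the BPL-perturbation of $Q_* := A_*\otimes C_*(B)$ obtained by pushing $\delta_\tau$ through the Eilenberg--Zilber reduction, after first establishing (via the twisting-cochain formula~\eqref{e:twistcoch}) that the perturbation $\delta^Q$ is $A_*$-linear, so that $Q'_*$ carries the \emph{same} $A_*$-module structure as $Q_*$. Then $\Barr^{A_*}(\Z, Q_*)$ and $\Barr^{A_*}(\Z, Q'_*)$ have identical chain groups and differ only by a differential perturbation, so BPL can be applied directly to the reduction $\Barr^{A_*}(\Z, Q_*)\reduP C_*(B)$ of Lemma~\ref{l:bar-inverse}. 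Lemma~\ref{l:diff-stays} then uses the $0$-reducedness of $G$ to force the twisting cochain to vanish in degrees $0$ and $1$, which delivers both the constant nilpotency bound and---a point your sketch also does not address---the fact that the induced perturbation of the differential on $C_*(B)$ itself is \emph{zero}, so the target really is $C_*(B)$ and not some perturbed avatar of it. Polynomial-time homology for $Q'_*$ comes from the assumed polynomial-time homology of $C_*(E)$ composed with the perturbed Eilenberg--Zilber reduction. If you wish to keep $C_*(E)$ as the bar module you would need to supply the missing module-compatibility argument (or an additional perturbation step replacing it) and separately verify that the resulting differential on $C_*(B)$ is the original one.
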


\begin{proof}
We follow the treatment in Real~\cite{RealThesis}.
We let $A_*:=C_*(G)$ be the normalized chain complex
of $G$. The \emph{Eilenberg--MacLane product} on $A_*$
is defined using the operator $\EML\:A_*\otimes A_*\to C_*(G\times G)$
as in the proof of Lemma~\ref{l:EZ}. Writing
$\EML(a\otimes b)=\sum_{i=1}^n \alpha_i(\gamma_i,\gamma'_i)$,
$\gamma_1,\ldots,\gamma'_n\in G$, we set
\[
a\cdot b:= \sum_{i=1}^n\alpha_i \gamma_i\gamma'_i,
\]
where $\gamma_i\gamma'_i$ is computed using the group operation in~$G$.
This multiplication
is polynomial-time, and with some work it
can be checked that it makes $A_*$ into a differential graded algebra.

\heading{The untwisted case. } First we assume that the
ordinary Cartesian product $G\times B$ is given with polynomial-time
homology. Then polynomial-time homology for $B$ is obtained
in the following steps:
\begin{enumerate}
\item $C_*(G\times B)$ has polynomial-time homology by the assumption.
\item The Eilenberg--Zilber reduction $C_*(G\times B)\reduP A_*\otimes
C_*(B)$
(Lemma~\ref{l:EZ}) and the composition of strong equivalences
yield polynomial-time homology for $ A_*\otimes C_*(B)$.
\item Since $A_*$ has polynomial-time
homology as well by assumption,
Lemma~\ref{l:bar-poly} yields polynomial-time homology
for  $\Barr^{A_*}(\Z,A_*\otimes C_*(B))$.
\item Finally,  the reduction
$\Barr^{A_*}(\Z,A_*\otimes C_*(B))\reduP C_*(B)$
from Lemma~\ref{l:bar-inverse} and composition of strong equivalences
provide polynomial-time homology for~$C_*(B)$.
\end{enumerate}

\heading{The twisting. } Now we present ``twisted analogs''
of  steps 1--4 above.
\begin{enumerate}
\item[$1_\tau$.]
We assume that polynomial-time
homology is available for the \emph{twisted} Cartesian product
$G\times_\tau B$.

\item [$2_\tau$.]
As in the proof of Proposition~\ref{p:twistprod}
(twisted product), applying the basic perturbation lemma
to the Eilenberg--Zilber reduction $C_*(G\times B)\reduP
Q_*:=A_*\otimes C_*(B)$ provides a reduction $C_*(G\times_\tau B)\reduP Q'_*$,
where $Q'_*$ is obtained by perturbing the differential $\diff^Q$
of the tensor product complex $Q_*$ to another differential
$\diff^{Q'}$. Let $\delta^Q:=\diff^{Q'}-\diff^Q$ be the difference.
On $Q'_*$ the multiplication by $A_*$ from the left is defined
in the same way as on $Q_*$. Using formula (\ref{e:twistcoch}) below,
one can prove that the perturbation $\delta^Q$ is $A_*$-linear.
It means that $\diff^{Q'}$ satisfies the Leibniz rule and hence $Q'_*$ is
a left $A_*$-module.

\item[$3_\tau$.]
We have $\diff^{Q'}$ polynomial-time computable (since the basic
perturbation lemma provides an explicit formula), and hence
we obtain  polynomial-time homology for $\Barr^{A_*}(\Z,Q'_*)$
by Lemma~\ref{l:bar-poly}.
\item[$4_\tau$.]
It remains to exhibit a reduction $\Barr^{A_*}(\Z,Q'_*)\reduP C_*(B)$;
then we obtain polynomial-time homology for $B$ as in the untwisted case above.
We begin with the reduction $\Barr^{A_*}(\Z,Q_*)\reduP C_*(B)$
from Lemma~\ref{l:bar-inverse} and apply the basic perturbation
lemma to it.

We note that, by the definition of the bar construction,
$\Barr^{A_*}(\Z,Q_*)$ and $\Barr^{A_*}(\Z,Q'_*)$ have the same chain
groups, and only the differential is modified.
Let $\delta^{\rm Bar}$ be the differential of $\Barr^{A_*}(\Z,Q'_*)$
minus the one of $\Barr^{A_*}(\Z,Q_*)$. We observe that the
external differentials in these bar constructions coincide,
and the tensorial differentials differ only in one term.
Thus, writing a basis element of $\Barr^{A_*}(\Z,Q_*)$
as $z=a_1\otimes\cdots\otimes a_n\otimes(a\otimes b)$,
we have
\[
\delta^{\rm Bar}z=(-1)^{\deg(a_1)+\cdots+\deg(a_n)+\deg(a)-n}a_1\otimes\cdots\otimes a_n\otimes\delta^{Q}(a\otimes b).
\]
The rest of the proof is delegated to the next lemma, which
is essentially Prop.~3.2.3 in~\cite{RealThesis}.
\end{enumerate}
\end{proof}

\begin{lemma}\label{l:diff-stays}
If $G$ is a $0$-reduced simplicial group,
$A_*=C_*(G)$ and $Q_*=A_*\otimes C_*(B)$
are as above, $(f,g,h)\:\Barr^{A_*}(\Z,Q_*)\redu
C_*(B)$ is the reduction from Lemma~\ref{l:bar-inverse}, and
$\delta^{\rm Bar}$ is the perturbation of the differential of
$\Barr^{A_*}(\Z,Q_*)$ as above, then $h\delta^{\rm Bar}$ has constant
nilpotency bounds, and the perturbed differential in $C_*(B)$ obtained
from the application of the basic perturbation lemma
to the reduction $(f,g,h)$ actually
equals the original differential in $C_*(B)$, i.e., the resulting
perturbation is zero.
\end{lemma}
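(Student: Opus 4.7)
The plan is to attack both assertions using two numerical invariants on basis elements of $\Barr^{A_*}(\Z,Q_*)$: the \emph{residual degree} $n$ and what I will call the \emph{$B$-filtration}, defined as the simplicial dimension of the $C_*(B)$-component of the rightmost tensor factor $(a\otimes b)$. A direct inspection of the defining formulas shows that $h$ preserves the $B$-filtration (since it replaces the rightmost factor $(a\otimes b)$ by $\overline{a}\otimes(1_{A_*}\otimes b)$, leaving the simplex $b$ intact) while raising the residual degree by $1$, whereas $\delta^{\rm Bar}$ preserves the residual degree and acts on the $B$-filtration exactly as $\delta^Q$ does on the $B$-component of the last factor.

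For the nilpotency of $h\delta^{\rm Bar}$, the argument used in the proof of Proposition~\ref{p:twistprod} (together with Filakovsky's refinement \cite[Cor.~9 and~11]{Filakovsky-tensor} exploiting the $0$-reducedness of $G$) shows that $\delta^Q$ strictly lowers the simplicial dimension of the $B$-component. Thus $h\delta^{\rm Bar}$ strictly lowers the $B$-filtration of every basis element; since this filtration is bounded above by the total degree~$\thedim$, we obtain $(h\delta^{\rm Bar})^{\thedim+1}=0$, the desired constant nilpotency bound.

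For the vanishing of the induced perturbation on $C_*(B)$, residual-degree bookkeeping reduces matters to showing $f\bigl(\delta^Q(1_{A_*}\otimes b)\bigr)=0$: in the expansion $\psi\delta^{\rm Bar} g(b)=\sum_{i\ge 0}(-1)^i(\delta^{\rm Bar} h)^i\delta^{\rm Bar} g(b)$ only the $i=0$ summand lies in residual degree~$0$, and $f$ annihilates everything of positive residual degree. Since $f$ restricted to residual degree~$0$ is the augmentation map $\pi\colon a\otimes c\mapsto\varepsilon(a)\,c$, it suffices to show $\pi\circ\delta^Q=0$. My route goes through the simplicial projection $p_B\colon G\times_\tau B\to B$, $(\gamma,\beta)\mapsto\beta$: its induced chain map $p_{B*}$ is a chain map for \emph{both} the Cartesian and twisted differentials on $C_*(G\times B)$, so $p_{B*}\circ\delta^{C_*(G\times B)}=0$. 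A direct Alexander--Whitney computation gives $\pi\circ\AW=p_{B*}$, and combining these via the BPL expansion $\AW'=\AW\sum_i(-1)^i(\delta^{C_*(G\times B)}\SHI)^i$ yields $\pi\circ\AW'=p_{B*}$. Surjectivity of $\AW'$ (from $\AW'\EML'=\id$) then lets us cancel $\AW'$ in the identity $\pi\diff^{Q'}\AW'=\diff^B\pi\AW'$, giving $\pi\diff^{Q'}=\diff^B\pi=\pi\diff^Q$, i.e.\ $\pi\delta^Q=0$.

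The main delicate point is the identity $\pi\circ\AW=p_{B*}$: in the unnormalized chain complex it is immediate from $\partial_1\cdots\partial_\thedim\gamma=e_0$ (the last vertex) for any $\gamma\in G_\thedim$, which gives $\varepsilon=1$ on the $j=0$ summand of $\AW((\gamma,\beta))$ so that $\pi\circ\AW((\gamma,\beta))=\beta$; transferring this to the normalized complex is standard Eilenberg--Zilber bookkeeping and introduces no new ingredients.
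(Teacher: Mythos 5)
Your proof is correct. The nilpotency argument is essentially the paper's: both control $(h\delta^{\rm Bar})$ through the $B$-filtration, using that $h$ from Lemma~\ref{l:bar-inverse} does not raise it and that $\delta^{\rm Bar}$ strictly lowers it (the paper quotes the sharper ``decrease by at least~$2$'' coming from $t_1=0$, but since here $h$ \emph{preserves} the $B$-filtration, your coarser ``strictly lowers'' already gives a constant bound). Your treatment of the vanishing of the perturbation, however, takes a genuinely different route. After the same residual-degree bookkeeping reduces the claim to $\pi\circ\delta^Q=0$ (equivalently, the paper's $f\delta^{\rm Bar}=0$), the paper computes directly from the Brown--Shih twisting-cochain formula~(\ref{e:twistcoch}): the $A_*$-degree of $a\cdot t_{\thedim-i}(b_{\thedim-i})$ can reach~$0$ only through the $t_1$ term, which vanishes by $0$-reducedness. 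You instead observe that the simplicial projection $p_B\colon G\times_\tau B\to B$ induces a chain map for \emph{both} the twisted and untwisted differentials, hence $p_{B*}\delta^{C_*(G\times B)}=0$; combined with $\pi\circ\AW=p_{B*}$ (which therefore survives the BPL expansion to $\pi\circ\AW'=p_{B*}$) and the surjectivity of $\AW'$, you extract $\pi\diff^{Q'}=\diff^B\pi=\pi\diff^Q$. This is more conceptual and avoids the twisting-cochain formula entirely; it also makes visible that this half of the lemma does not really use $0$-reducedness, since $\varepsilon$ of \emph{any} vertex of $G$ is~$1$. The only slip is cosmetic: $\partial_1\cdots\partial_\thedim\gamma$ is the \emph{first} vertex of $\gamma$ (vertex~$0$), not the last; harmless here, and anyway the argument only needs that $\varepsilon$ of a vertex equals~$1$.
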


\begin{proof} There is an explicit expression known for the
perturbation $\delta^Q$, going back to Brown~\cite{BrowThm}
and Shih~\cite{Shih}. We
do not need the full explicit formula, just some of its
properties.

Namely, given $G$, $B$, and the twisting
operator $\tau$, there is a sequence of homomorphisms
%$t=(t_\thedim)_{\thedim}$, where
$t_\thedim\:C_\thedim(B)\to
C_{\thedim-1}(G)$, such that for $a\in C_\otherdim(G)$,
$b\in C_\thedim(B)$, we have
\begin{equation}\label{e:twistcoch}
\delta^Q(a\otimes b)=
\sum_{i=0}^\thedim (-1)^{\otherdim} a\cdot t_{\thedim-i}(b_{\thedim-i})\otimes \tilde b_i,
\end{equation}
for some chains $b_0,\ldots,b_\thedim,\tilde b_0,\ldots,\tilde b_\thedim$, with $b_i,\tilde b_i\in C_i(B)$,
the multiplication in $a\cdot t_{\thedim-i}(b_{\thedim-i})$
being the Eilenberg--MacLane product introduced above.\footnote{In the
literature, $t$ is called a \emph{twisting cochain},
and $\delta^Q(a\otimes b)$ is written as a cap product $t\cap(a\otimes b)$.
Moreover, $t$ is in general not determined uniquely by $G,B,\tau$,
since the operator $\AW$ in the reduction $C_*(G\times B)\redu
C_*(G)\otimes C_*(B)$ is not unique. However, the relevant sources
use the same particular $\AW$ as we do.
%Nevertheless, our formula differs in signs from that of Shih since he considers multiplication by $A_*$ from the left.
}

Now $t_0=0$ since $C_{-1}(G)=0$. Moreover, one can compute
(see the proof of \cite[Corollary 11]{Filakovsky-tensor})
 that  $t(b_1)=\tau(b_1)-e_0$ for all
$1$-simplices $b_1\in B_1$. Since $G$ is $0$-reduced, it follows that $t_1=0$.

Hence the sum in (\ref{e:twistcoch}) goes only up to $i=\thedim-2$, and so
$\delta^Q$ decreases the filtration degree (given by the degree in $C_*(B)$)
at least by~$2$. The same applies to $\delta^{\rm Bar}$ when we take
the filtration on $\Barr^{A_*}(\Z,Q_*)$ given again by the degree in $C_*(B)$.
Similar to the conclusion of the proof of Proposition~\ref{p:twistprod},
we obtain constant nilpotency bound of $h\delta^{\rm Bar}$.

It remains to show that the perturbation of the differential
in $C_*(B)$ obtained by using the basic perturbation lemma
to the reduction $(f,g,h)\:\Barr^{A_*}(\Z,Q_*)\redu C_*(B)$
with the perturbation $\delta^{\rm Bar}$ is zero. As was
mentioned in connection with the basic perturbation lemma,
the considered perturbation equals $f\delta^{\rm Bar}\varphi g$,
where $\varphi=\sum_{i=0}^\infty (-1)^i(h\delta^{\rm Bar})^i$.

We will check that $f\delta^{\rm Bar}=0$. Indeed, the mapping
$f$ in the reduction from Lemma~\ref{l:bar-inverse} is obtained
from the augmentation $\varepsilon\:A_*\ra\Z$ by tensoring with
$\id_{C_*(B)}$. Thus,
if $z=a_1\otimes\cdots\otimes a_n\otimes(a\otimes b)$ is a basis
element, we have $f(z)=0$ unless $\deg(a)=0$. But the
Eilenberg--MacLane product $a\cdot t_{\thedim-i}(b_{\thedim-i})$
in the formula (\ref{e:twistcoch}) has degree at least
$\deg(t_{\thedim-i}(b_{\thedim-i}))=\thedim-i-1$.
Thus, the degree can be $0$ only for $\thedim-i-1=0$,
but in this case  $t_{\thedim-i}=t_1=0$,
and so $f\delta^{\rm Bar}=0$ as claimed.
\end{proof}

\subsection{Eilenberg--MacLane spaces}\label{s:EML1}

\heading{Preliminaries on cochains. } Before entering the realm
of Eilenberg--MacLane spaces, we recall a few notions related
to cohomology. Throughout this section, let $\pi$ be an Abelian
group.

For us, it will often be convenient to regard cochains as
homomorphisms from chain groups into $\pi$. That is, given a
chain complex $C_*$ (whose chain groups are, as always in this paper,
free Abelian groups), we define its \emph{$\thedim$th cochain group
with coefficients in $\pi$} as $C^\thedim(C_*;\pi):=\Hom(C_\thedim,\pi)$,
with pointwise addition. The \emph{coboundary operator}
$\cobo_\thedim\:C^\thedim(C_*;\pi)\to C^{\thedim+1}(C_*;\pi)$ is then
given by $(\delta_\thedim c^\thedim)(c_{\thedim+1}):=
c^\thedim(\diff_{\thedim+1}c_{\thedim+1})$ for every $\thedim$-cochain
$c^\thedim$ and every $(\thedim+1)$-chain $c_{\thedim+1}$.\footnote{Sometimes
other conventions are used for the coboundary operator in the literature;
e.g.~$(\delta_\thedim c^\thedim)(c_{\thedim+1})=(-1)^{\thedim+1}c^\thedim(\diff_{\thedim+1}c_{\thedim+1})$.
But our main sources \cite{May:SimplicialObjects-1992} and \cite{Hatcher}
use the version without signs.}
(The notation $\delta$ was earlier used for a perturbation of
a differential, but from now on, we will encounter it only
in the role of a coboundary operator.)

In particular, if $X$ is a simplicial set, the \emph{normalized cochain
complex} $C^*(X;\pi)$ is $C^*(C_*(X);\pi)$; thus, a $\thedim$-cochain can be
specified by its values on the standard basis, i.e., as a labeling
of the nondegenerate $\thedim$-simplices by elements of $\pi$---this agrees
with the usual definition in introductory textbooks.

For us, it will be important that if $X$ has infinitely many
nondegenerate $\thedim$-simplices, then a $\thedim$-cochain in
$C^\thedim(X)$ is an infinite object (unlike a $\thedim$-chain!).
Thus, in algorithms, we will need to use a black-box representation
of individual cochains---the black box supplies the value
of the cochain on a given simplex (or on a given chain, which
is computationally equivalent).

To finish our remark on cochains, we recall that if $C^*$ is a cochain
complex, with coboundary operator $\cobo=(\cobo_\thedim)_{\thedim\in\Z}$,
 then  $B^\thedim :=\im \cobo_{\thedim-1}$ is the group of
\emph{$\thedim$-coboundaries}, $Z^\thedim:=\ker \cobo_\thedim$
the group of \emph{$\thedim$-cocycles}, and $H^\thedim=H^\thedim(C^*;\pi):=
Z^\thedim/B^\thedim$ is
the \emph{$\thedim$th cohomology group}.

\heading{Eilenberg--MacLane spaces topologically. }
For an Abelian group $\pi$ and an integer $\thedim\ge 1$,
the \emph{Eilenberg--MacLane space} $K(\pi,\thedim)$ is
defined as any topological space $T$ with $\pi_\thedim(Z)\cong \pi$ and
$\pi_i(T)=0$ for all $i\ne \thedim$ (actually, $K(\pi,1)$ is also
defined for an arbitrary group $\pi$, but we will consider
solely the Abelian case).

It is known, and not too hard to prove,
that a $K(\pi,\thedim)$ exists for all $\thedim\ge 1$ and all $\pi$,
and it is also known to be unique up
to homotopy equivalence.\footnote{Provided that we restrict
to spaces that are homotopy equivalent to CW-complexes.}

The definition postulates that the
homotopy groups of an Eilenberg--MacLane space
are, in a sense, the simplest possible,
and this makes it relatively easy to understand the structure of all
maps from a given space $X$ into $K(\pi,\thedim)$.
Indeed, a basic topological result says that
\begin{equation}\label{e:mapsinK}%\[
[X,K(\pi,\thedim)]\cong H^\thedim(X;\pi),
\end{equation}%\]
assuming that $X$ is a ``reasonable'' space (say a CW-complex).
In words, homotopy classes
of maps $X\to K(\pi,\thedim)$ correspond to the elements of the
$\thedim$th cohomology group of $X$ with coefficients in~$\pi$
(see, e.g., \cite[Lemma~24.4]{May:SimplicialObjects-1992} for
this fact in a simplicial setting, and \cite{CKMSVW11}
for a geometric explanation).

\heading{The standard simplicial model. }
There is a standard way of representing $K(\pi,\thedim)$
as a Kan simplicial set, which actually is even a simplicial group.
We will work with this simplicial representation, and from now
on, the notation $K(\pi,\thedim)$ will be reserved for this particular
simplicial representation, to be defined next.

Let $\Delta^\otherdim$ denote the
$\otherdim$-dimensional standard simplex, regarded as a simplicial complex
(or a simplicial set; the difference is purely formal in this case).
That is, the vertex set is $\{0,1,\ldots,\otherdim\}$
and the $\thedim$-dimensional
(nondegenerate) simplices are all $(\thedim+1)$-element subsets
of $\{0,1,\ldots,\otherdim\}$.

The set of $\otherdim$-simplices of $K(\pi,\thedim)$ is given by
$$
K(\pi,\thedim)_\otherdim := Z^\thedim(\Delta^\otherdim;\pi);
$$
that is, each $\otherdim$-simplex is (represented by) a $\thedim$-dimensional
cocycle on $\Delta^\otherdim$. Thus, it can be regarded as
a labeling of the $\thedim$-dimensional faces of $\Delta^\otherdim$
by elements of the group $\pi$; moreover, the
labels must add up to $0$ on the boundary of every $(\thedim+1)$-face.

It is also easy to define the face and degeneracy operators
in $K(\pi,\thedim)$.
Given an $\otherdim$-simplex $\sigma$ of  $K(\pi,\thedim)$, represented
as a labeling of the $\thedim$-faces of $\Delta^\otherdim$,
$\partial_i\sigma$ is defined as the restriction of $\sigma$
on the $i$th $(\otherdim-1)$-face of $\Delta^\otherdim$.
(The $i$th $(\otherdim-1)$-face
of $\Delta^\otherdim$ is identified with $\Delta^{\otherdim-1}$ via
the unique order-preserving bijection of the vertex sets.)
As for the degeneracy operators, $s_i\sigma$ is the labeling of
$\thedim$-faces of $\Delta^{\otherdim+1}$ induced by
the mapping $\eta_i\:\{0,1,\ldots,\otherdim+1\}
\to\{0,1,\ldots,\otherdim\}$ given by
\[\eta_i(j)=\alterdef{j&\mbox{ for }j\le i,\\
j-1&\mbox{ for }j>i.}
\]
In particular, if a $\thedim$-face contains both $i$ and $i+1$,
then it is labeled by $0$, since its $\eta_i$-image is a degenerate
simplex.

The simplicial group operation in $K(\pi,\thedim)$ is the
addition of cocycles in $Z^\thedim(\Delta^\otherdim;\pi)$.

In the simplicial setting we have
\begin{equation}\label{e:SMapToK}
\SM(X,K(\pi,\thedim))\cong Z^\thedim(X;\pi)
\end{equation}
for every simplicial set $X$. That is, simplicial maps
$X\to K(\pi,\thedim)$ are in a bijective correspondence
with $\pi$-valued $\thedim$-cocycles on~$X$ (see below
for an explicit description
of this correspondence). Moreover,
two such simplicial maps, represented by cocycles
$z$ and $z'$, are homotopic iff $z-z'$ is a coboundary
(see, e.g., \cite[Theorem~24.4]{May:SimplicialObjects-1992}).
This immediately implies $[X,K(\pi,\thedim)]
\cong H^\thedim(X;\pi)$, which was mentioned
above in~(\ref{e:mapsinK}).

\heading{The set $E(\pi,\thedim)$. }
In addition to the simplicial Eilenberg--MacLane space $K(\pi,\thedim)$
we also need another simplicial set, denoted by $E(\pi,\thedim)$.
While the $\otherdim$-simplices of $K(\pi,\thedim)$ are
all $\thedim$-cocycles on $\Delta^\otherdim$, the $\otherdim$-simplices
of $E(\pi,\thedim)$ are all $\thedim$-cochains:
\[
E(\pi,\thedim)_\otherdim := C^\thedim(\Delta^\otherdim;\pi).
\]
The face and degeneracy operators are defined in exactly the same way
as those of $K(\pi,\thedim)$.

\heading{Converting between simplicial maps and cochains. }
We have mentioned that simplicial maps $X\to K(\pi,\thedim)$
are in one-to-one correspondence with cocycles in $Z^\thedim(X;\pi)$.
Similarly, simplicial maps $X\to E(\pi,\thedim)$ correspond
to cochains in $C^\thedim(X;\pi)$:
\[
\SM(X,E(\pi,\thedim))\cong C^\thedim(X;\pi).
\]
Let us describe this correspondence explicitly, since we will
need it in the algorithm. First we note that a $\thedim$-simplex $\tau$
of $E(\pi,\thedim)$ is a $\thedim$-cochain on $\Delta^\thedim$,
i.e., a labeling of the single $\thedim$-face of $\Delta^\thedim$
by an element of $\pi$. Let us denote this element by $\ev(\tau)$
(here ev stands for ``evaluation'').

Given a simplicial map $f\:X\to E(\pi,\thedim)$,
the corresponding cochain $\kappa\in C^\thedim(X;\pi)$
is simply given by $\kappa(\sigma)=\ev(f(\sigma))$ for
every $\sigma\in X_\thedim$ (where on the left-hand side,
$\sigma$ is taken as a generator of the chain group $C_\thedim(X)$).

Conversely, given $\kappa\in C^\thedim(X;\pi)$, we describe
the corresponding simplicial map $f$.
The value $f(\sigma)$ on an $\otherdim$-simplex $\sigma\in X_\thedim$
should be a $\thedim$-chain
on $\Delta^\otherdim$.
%There is exactly one way
%of inserting the standard $\otherdim$-simplex $\Delta^\otherdim$
%to the ``place of $\sigma$'' into $X$; more formally,
There is a unique simplicial map $i_\sigma\:\Delta^\otherdim\to X$
that sends the nondegenerate
$\otherdim$-simplex of $\Delta^\otherdim$ to $\sigma$
(indeed, a simplicial map has to respect the ordering of
vertices, implicit in the face and degeneracy operators).
Then $f(\sigma)$ is the cochain $i_{\sigma}^*(\kappa)$,
i.e., the labels of the $\thedim$-faces of $\sigma$ given by $\kappa$
are pulled back to~$\Delta^{\otherdim}$.
Moreover, if $\kappa$ is a cocycle, then
$f$ goes into $K(\pi,\thedim)$.

\heading{A useful fibration. }
Since an $\otherdim$-simplex $\sigma\in E(\pi,\thedim)$ is formally a
$\thedim$-cochain, we
can take its coboundary $\cobo\sigma$. This is a $(\thedim+1)$-coboundary
(and thus also cocycle), which we can interpret as an $\otherdim$-simplex
of $K(\pi,\thedim+1)$. It turns out that this induces a \emph{simplicial}
map $E(\pi,\thedim)\to K(\pi,\thedim+1)$, which is
(with the usual abuse of notation)
also denoted by~$\cobo$. This map is actually surjective, since
the relevant cohomology groups of $\Delta^\otherdim$ are all zero and
thus all cocycles are also coboundaries.

As is well known, $\cobo\: E(\pi,\thedim)\to K(\pi,\thedim+1)$ is
a fiber bundle with fiber $K(\pi,\thedim)$.
%(We remark that this
%is an instance of the universal fibration $EG\to BG$ with fiber $G$,
%for $G=K(\pi,\thedim)$.)

There is another simplicial description of $E(\pi,\thedim)$ as
a twisted product
\[
K(\pi,\thedim)\times_{\tau} K(\pi,\thedim+1),
\]
where $\tau$ has the following explicit form
(see \cite[\S23]{May:SimplicialObjects-1992} or \cite[Sec.~7.10.2]{SergerGenova}):

Let $z\in Z^{\thedim+1}(\Delta^\otherdim;\pi)$ be an $\otherdim$-simplex
of $K(\pi,\thedim+1)$, i.e., a labeling of the $(\thedim+1)$-faces
of $\Delta^\otherdim$ by elements of $\pi$ (satisfying the cocycle
condition). Then we want
$\tau(z)$ to be an $(\otherdim-1)$-simplex of $K(\pi,\thedim)$,
i.e., a labeling of $\thedim$-faces of $\Delta^{\otherdim-1}$.
If we write a $\thedim$-face of $\Delta^{\otherdim-1}$ as an
increasing $(\thedim+1)$-tuple $(i_0,\ldots,i_\thedim)$,
$0\le i_0<\cdots<i_\thedim\le\otherdim-1$, we set
\begin{equation}\label{e:taudef}%\[
(\tau(z))(i_0,\ldots,i_\thedim):=z(0,i_0+1,i_1+1,\ldots,i_\thedim+1)-
z(1,i_0+1,i_1+1,\ldots,i_\thedim+1).
\end{equation}%\]
%In this way, given polynomial-time homology for $K(\pi,\thedim)$
%and for $K(\pi,\thedim+1)$, we obtain polynomial-time homology
%for $E(\pi,\thedim)$ from Proposition~\ref{p:twistprod} (twisted product).
The twisted product $K(\pi,\thedim)\times_{\tau} K(\pi,\thedim+1)$
is simplicially isomorphic to
$E(\pi,\thedim)$ as defined earlier. The isomorphism will be described,
in a slightly more general setting, in the proof of Corollary~\ref{c:pullback}
below.

\subsection{Polynomial-time homology for $K(\pi,\thedim)$ }
\label{s:EML2}

A crucial ingredient in our algorithm for computing
Postnikov systems is obtaining polynomial-time
homology for $K(\pi,\thedim)$. Here, as usual,
we assume $\thedim$ fixed, and $\pi$ is a globally
polynomial-time Abelian group (as introduced
after Definition~\ref{d:glob-ptime}); then
$K(\pi,\thedim)$ has the same parameter set as~$\pi$.
It is easily checked that
$K(\pi,\thedim)$ is a locally polynomial-time simplicial group.

\heading{The $\overline W$ construction. }
Polynomial-time homology for $K(\pi,\thedim)$ will be constructed by
induction on $\thedim$. The inductive step is based on a construction
$\overline W$ (see \cite[pages 87--88]{May:SimplicialObjects-1992})
that, given an Abelian simplicial group $G$, produces another Abelian simplicial
group $\overline WG$. The $\thedim$-simplices
have the form $\omega=(\gamma_{\thedim-1},\gamma_{\thedim-2},\ldots,\gamma_0)$,
where $\gamma_i$ is an $i$-simplex of $G$, $i=0,1,\ldots,\thedim-1$,
and the group operation in $\overline WG$ is obtained by using the operation
of $G$ componentwise. The face operators are
\begin{eqnarray*}
\partial_0\omega&:=&(\gamma_{\thedim-2},\gamma_{\thedim-3},\ldots,\gamma_0),\\
\partial_{i+1}\omega&:=&(\partial_i\gamma_{\thedim-1},\ldots,
\partial_1\gamma_{\thedim-i},\underbrace{\partial_0
\gamma_{\thedim-i-1}+\gamma_{\thedim-i-2}}_{\mbox{\footnotesize
operation~in~$G$}}, \gamma_{\thedim-i-3},\ldots,\gamma_0),\ \ \
i=0,1,\ldots,\thedim-1,
\end{eqnarray*}
and the degeneracy operators are given by
\begin{eqnarray*}
s_0\omega&:=&(e_\thedim,\gamma_{\thedim-1},\ldots,\gamma_0),\\
s_{i+1}\omega&:=& (s_i\gamma_{\thedim-1},\ldots,s_0\gamma_{\thedim-i-1},
e_{\thedim-i-1},\gamma_{\thedim-i-2},\ldots,\gamma_0),
\ \ \ i=0,1,\ldots,\thedim-1,
\end{eqnarray*}
where $e_\thedim$ is the unit element of~$G_\thedim$.

Topologically, $\overline WG$ is the
classifying space of~$G$, usually denoted by $BG$,
but we won't use this fact directly.
What we need is the following simplicial isomorphism.

\begin{lemma}\label{l:barW-iso}
 For every Abelian group $\pi$ and every $\thedim\ge 1$,
there is a simplicial isomorphism
\[
f\: K(\pi,\thedim+1)\to\overline WK(\pi,\thedim);
\]
if $\thedim$ is fixed and $\pi$ is globally polynomial-time,
then both $f$ and $f^{-1}$ are polynomial-time maps.
Consequently, polynomial-time homology for
 $\overline WK(\pi,\thedim)$ yields
polynomial-time homology for $K(\pi,\thedim+1)$.
\end{lemma}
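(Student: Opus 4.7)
The plan is to exhibit an explicit simplicial bijection $f$, check it commutes with the face and degeneracy operators in both directions, and then transfer polynomial-time homology via Corollary~\ref{c:steq}. At the level of simplices, an $\otherdim$-simplex of $K(\pi,\thedim+1)$ is a cocycle $z\in Z^{\thedim+1}(\Delta^\otherdim;\pi)$, while an $\otherdim$-simplex of $\overline W K(\pi,\thedim)$ is a tuple $(\gamma_{\otherdim-1},\ldots,\gamma_0)$ with $\gamma_i\in Z^\thedim(\Delta^i;\pi)$. I would define $f(z)$ by extracting each $\gamma_i$ from $z$ via a formula in the spirit of the twisting operator~(\ref{e:taudef}); the explicit formulas can be read off from the description of $\overline W$ in \cite[pp.~87--88]{May:SimplicialObjects-1992}. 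The cocycle condition $\cobo\gamma_i=0$ then reduces in each case to $\cobo z=0$ applied to suitable $(\thedim+2)$-faces of $\Delta^\otherdim$.

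Checking that $f$ respects the face and degeneracy operators is a routine verification: the nontrivial ``addition'' appearing in $\partial_{i+1}$ on the $\overline W$ side matches the cocycle identity for $z$ (relating the value of $z$ on one $(\thedim+1)$-face to the alternating sum of values on adjacent $(\thedim+1)$-faces), while the unit element inserted by $s_0$ matches the vanishing of $z$ on $(\thedim+1)$-faces contained in a degenerate simplex. The inverse $f^{-1}$ would be built by reconstructing $z$ stage by stage from the tuple: given $(\gamma_{\otherdim-1},\ldots,\gamma_0)$, one solves the defining identities for the values of $z$ on successive $(\thedim+1)$-faces of $\Delta^\otherdim$. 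Existence and uniqueness of such a solution at each step follow from the acyclicity of $\Delta^\otherdim$ in the relevant dimensions, which is also what one implicitly uses in the textbook account.

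For polynomiality, with $\thedim$ fixed each $\otherdim$-simplex on either side is encoded by at most $O(\otherdim^{\thedim+1})$ elements of $\pi$, and the defining formulas for $f$ and $f^{-1}$ involve only polynomially many additions, subtractions, and sign flips in $\pi$. Since $\pi$ is globally polynomial-time, each such arithmetic operation is polynomial-time in $\size(I)+\size(\sigma)$, hence so are $f$ and $f^{-1}$. A polynomial-time simplicial isomorphism induces polynomial-time chain isomorphisms $f_*$ and $f_*^{-1}$, which together with the zero chain homotopy form polynomial-time reductions $C_*(K(\pi,\thedim+1))\reduP C_*(\overline W K(\pi,\thedim))$ (and back). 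Composing with the hypothesized polynomial-time strong equivalence for $\overline W K(\pi,\thedim)$ and invoking Corollary~\ref{c:steq} yields polynomial-time homology for $K(\pi,\thedim+1)$.

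The main obstacle I anticipate is nailing down the sign conventions in the explicit formulas for $f$ so that the face- and degeneracy-operator commutation identities verify on the nose; once those conventions are fixed consistently with the definitions of $\partial_{i+1}$ and $s_{i+1}$ in $\overline W$, the remaining bookkeeping is mechanical, and the polynomial-time estimates and the final transfer of polynomial-time homology through Corollary~\ref{c:steq} are essentially immediate.
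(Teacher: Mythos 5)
Your proposal matches the paper's proof in essence: the explicit formula $f(z)=(\tau(z),\tau(\partial_0 z),\ldots,\tau(\partial_0^{\otherdim-1}z))$ built from the twisting operator~(\ref{e:taudef}), the inductive stage-by-stage reconstruction of $f^{-1}$ (which the paper formalizes via the recursion $f(z)=(\tau(z),f(\partial_0 z))$ together with the explicit identities~(\ref{e:w'}) and~(\ref{e:w''})), and the transfer of polynomial-time homology through Corollary~\ref{c:steq} are all the same. The only difference is a matter of bookkeeping: the paper avoids the ``routine verification'' of face and degeneracy commutation by citing \cite[Theorem~23.10 and Lemma~21.9]{May:SimplicialObjects-1992} for the existence of $f$ and its formula, whereas you propose to check it directly.
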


\begin{proof}
We define an auxiliary simplicial set $WK(\pi,\thedim)$
as the twisted Cartesian product $K(\pi,\thedim)\times_\tau \overline
WK(\pi,\thedim)$,
where $\tau\: K(\pi,\thedim+1)\to K(\pi,\thedim)$ is the twisting
operator of $\delta$ introduced at the end of Section~\ref{s:EML1}.
Then, according to \cite[Theorem~23.10]{May:SimplicialObjects-1992},
there are simplicial isomorphisms
$f\: K(\pi,\thedim+1)\to \overline W K(\pi,\thedim)$
and $F\: E(\pi,\thedim)\to W K(\pi,\thedim)$ that are compatible
with respect to the projection maps
 $\delta\:E(\pi,\thedim)\ra K(\pi,\thedim+1)$ and
$WK(\pi,\thedim)\ra\overline WK(\pi,\thedim)$.
By \cite[Lemma~21.9]{May:SimplicialObjects-1992} and the formula (1)
there, the isomorphism $f$
maps $z\in K(\pi,\thedim+1)_\otherdim$ to
\[
f(z):= \Bigl(\tau(z),\tau(\partial_0z),\tau(\partial_0^2z),\ldots,
\tau(\partial_0^{\otherdim-1}z)\Bigr)\in \overline WK(\pi,\thedim)_{\otherdim}
\]
where $\tau$ is the twisting
operator as above.
Combining these statements together it follows that $f$ is an isomorphism,
and to finish the proof, we need to compute its inverse in polynomial time.

We describe an inductive algorithm for this.
First we note that $$f(z)=(\tau(z),f(\partial_0z)).$$
There is only one simplex in dimension at most $\thedim$ in both
of the considered simplicial sets, so the isomorphism
is given uniquely there. A $(\thedim+1)$-simplex
of $\overline WK(\pi,\thedim)$ has the form $\omega=(w_\thedim,0,0,\dots,0)$,
where $w_\thedim \in Z^\thedim(\Delta^\thedim;\pi)$.
Defining $z_{\thedim+1}\in K(\pi,\thedim+1)_{\thedim+1}=
Z^{\thedim+1}(\Delta^{\thedim+1};\pi)$ by
$z_{\thedim+1}(0,1,2,\dots,\thedim+1):=w_\thedim(0,1,\dots,\thedim)$,
we get $f(z_{\thedim+1})=(\tau(z_{\thedim+1}),0,\ldots,0)=\omega$,
so we have found $f^{-1}(\omega)$.

Next, we suppose that we can compute $f^{-1}$ for simplices
up to dimension $\otherdim\ge\thedim+1$, and let
$\omega=(w_{\otherdim},w_{\otherdim-1},\dots,w_0)
\in\overline WK(\pi,n)_{\otherdim+1}$.
In order to obtain $z=f^{-1}(\omega)$,
we first inductively compute $z'=f^{-1}(w_{\otherdim-1},\dots,w_0)$;
then $z'=\partial_0 z$, and by the definition of
$\partial_0$ in $K(\pi,\thedim+1)$, we get that
for $1\le i_0< i_1<\cdots<i_{\thedim+1}\le \otherdim+1$
we have
\begin{equation}\label{e:w'}
z(i_0,i_1,\dots,i_{\thedim+1})=z'(i_0-1,i_1-1,\dots, i_{\thedim+1}-1).
\end{equation}
On the other hand, for $0=i_0<i_1<\cdots<i_{\thedim+1}\le \otherdim+1$,
from the formula (\ref{e:taudef}) defining $\tau$ we obtain
\begin{eqnarray}\nonumber
\tau(z)(i_1-1,\ldots,i_{\thedim+1}-1) & =&z(0,i_1,\ldots,i_{\thedim+1})-z(1,i_1,\ldots,i_{\thedim+1}) \\
& =&z(0,i_1,\ldots,i_{\thedim+1})-z'(0,i_1-1,\ldots,i_{\thedim+1}-1).
\label{e:w''}
\end{eqnarray}
From this we can express $z(0,i_1,\ldots,i_\thedim)$ in terms of
$\tau(z)=w_\ell$ and $z'$, which are both known.
This finishes the construction of the inverse.
\end{proof}

Now we can state the main result of this section.

\begin{theorem}\label{t:EML}
Let $\thedim\ge 1$ be a fixed integer.
The standard simplicial model of the Eilenberg--MacLane
space $K(\pi,\thedim)$, where $\pi$ is a globally polynomial-time
Abelian group, can be equipped with polynomial-time
homology.
\end{theorem}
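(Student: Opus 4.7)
The plan is to proceed by induction on $\thedim$.

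For the base case $\thedim = 1$, I would exploit the canonical decomposition of the globally polynomial-time Abelian group $\pi$ as a direct sum $\Z^r \oplus \Z/m_1 \oplus \cdots \oplus \Z/m_s$, available in polynomial time from the representation discussed in Section~\ref{s:encsize}. Using the simplicial identification $K(\pi_1 \oplus \pi_2, 1) \cong K(\pi_1, 1) \times K(\pi_2, 1)$, one obtains $K(\pi, 1) \cong K(\Z, 1)^r \times \prod_{i=1}^s K(\Z/m_i, 1)$, in which every factor is $0$-reduced (indeed $(\thedim{-}1)$-reduced in general). Into this I would plug polynomial-time homology for $K(\Z, 1)$, which is the main result of the companion paper \cite{pKZ1}, together with polynomial-time homology for $K(\Z/m, 1)$ with $m$ in binary, which is to be built separately as part of the present section. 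Viewing all these factors as instances of a single parameterized family of cyclic Eilenberg--MacLane spaces, Proposition~\ref{p:bigprod} on products of many $0$-reduced factors then delivers polynomial-time homology for $K(\pi, 1)$.

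For the inductive step, the plan is to assume polynomial-time homology for $K(\pi, \thedim)$ and produce it for $K(\pi, \thedim + 1)$. By Lemma~\ref{l:barW-iso} there is a polynomial-time simplicial isomorphism $K(\pi, \thedim + 1) \cong \overline W K(\pi, \thedim)$, so it suffices to equip $\overline W K(\pi, \thedim)$ with polynomial-time homology. The proof of that lemma furthermore records a polynomial-time simplicial isomorphism $E(\pi, \thedim) \cong K(\pi, \thedim) \times_\tau \overline W K(\pi, \thedim)$ for a polynomial-time twisting operator $\tau$. Now $K(\pi, \thedim)$ is $(\thedim - 1)$-reduced, hence in particular $0$-reduced, and by the inductive hypothesis it is a locally polynomial-time Abelian simplicial group equipped with polynomial-time homology. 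Proposition~\ref{p:tw-div} (the ``twisted division'') thus reduces the inductive step to establishing polynomial-time homology for the total space $E(\pi, \thedim)$.

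This last ingredient is straightforward, because $E(\pi, \thedim)$ is contractible: its $\otherdim$-simplices are the $\thedim$-cochains on $\Delta^\otherdim$ and admit an obvious linear contraction to the zero cochain, which lifts to an explicit polynomial-time reduction $C_*(E(\pi, \thedim)) \reduP \Z$ via the classical Eilenberg--MacLane cone homotopy. This is a polynomial-time strong equivalence with the trivial globally polynomial-time chain complex $\Z$, so $E(\pi, \thedim)$ acquires polynomial-time homology; Proposition~\ref{p:tw-div} then delivers it for $\overline W K(\pi, \thedim)$, and Lemma~\ref{l:barW-iso} finally transports it to $K(\pi, \thedim + 1)$, completing the induction. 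The principal obstacle is hidden in the base case: polynomial-time homology for $K(\Z, 1)$ is genuinely nontrivial and occupies the entire companion paper \cite{pKZ1}, and the $K(\Z/m, 1)$ variant with $m$ in binary demands the fresh analysis to be carried out elsewhere in this section. Once these cyclic building blocks are in hand, the induction itself is a mechanical assembly of Propositions~\ref{p:bigprod} and~\ref{p:tw-div}, with the constant nilpotency bounds needed for the basic perturbation lemma automatically provided by the $0$-reducedness of each $K(\pi, \thedim)$.
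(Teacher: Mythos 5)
Your proposal is correct and follows essentially the same route as the paper's proof: the base case decomposes $\pi$ into cyclic summands and applies Proposition~\ref{p:bigprod} to $K(\Z,1)$ (from \cite{pKZ1}) and $K(\Z/m,1)$ (from Lemma~\ref{l:ZtoZm}), while the inductive step uses Lemma~\ref{l:barW-iso}, the twisted-product presentation of the contractible total space, and Proposition~\ref{p:tw-div}. The only cosmetic difference is that you invoke $E(\pi,\thedim)$ and the ``classical cone homotopy,'' whereas the paper works directly with the simplicially isomorphic $WG = G \times_\tau \overline W G$ and writes out the explicit contraction $h_\otherdim(\gamma_\otherdim,(\gamma_{\otherdim-1},\ldots,\gamma_0)) = (e_{\otherdim+1},(\gamma_\otherdim,\ldots,\gamma_0))$.
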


\begin{proof}
The proof proceeds by induction on $\thedim$. The base
case is $K(\pi,1)$, and it goes as follows.

\begin{enumerate}
\item Polynomial-time homology for $K(\Z,1)$ is the main
result of \cite{pKZ1}.
\item Polynomial-time homology for $K(\Z/m,1)$ is derived
from that for $K(\Z,1)$ in Lemma~\ref{l:ZtoZm} below.
\item For $\pi$ arbitrary, we use
the specified polynomial-time isomorphism $\pi\cong\AB(\mm)$ to write $K(\pi,1)\cong K(\AB(\mm),1)$. Since $\AB(\mm)$ decomposes into a direct sum of cyclic groups, we can obtain polynomial-time homology for $K(\pi,1)$ using
\[K(\pi_1\oplus\cdots\oplus\pi_s,1)\cong K(\pi_1,1)\times\cdots\times K(\pi_s,1),\]
which is easy to see from the definition
of $K(\pi,1)$, plus Proposition~\ref{p:bigprod}
(product with many factors).
\end{enumerate}

The inductive step from $K(\pi,\thedim)$
to $K(\pi,\thedim+1)$ is as in
\cite{RealThesis}, and it goes as follows.

\begin{enumerate}
\item To get polynomial-time homology for
$K(\pi,\thedim+1)$, according to Lemma~\ref{l:barW-iso}
it suffices to obtain  polynomial-time homology for $\overline WK(\pi,\thedim)$.
\item With $G=K(\pi,\thedim)$, let us consider the
twisted product $G\times_\tau
\overline WG$, where the twisting operator is given by
$\tau_\otherdim(\gamma_{\otherdim-1},\ldots,\gamma_0):=\gamma_{\otherdim-1}$
(this twisted product was denoted by $WG$ in the proof of
Lemma~\ref{l:barW-iso}).
Then there is a reduction
\[
(f,g,h): C_*(G\times_\tau \overline WG)\reduP \Z,
\]
with $f,g$ defined in the obvious way (note that both $G$
and $\overline WG$ are $0$-reduced), and with $h$ given by
$h_\otherdim(\gamma_\otherdim,(\gamma_{\otherdim-1},\ldots,\gamma_0)):=
(e_{\otherdim+1},(\gamma_\otherdim,\gamma_{\otherdim-1},\ldots,\gamma_0))$,
where
$e_{\otherdim+1}$ is the unit element of $G_{\otherdim+1}$
(see \cite[page~88]{May:SimplicialObjects-1992}).
Thus, using Proposition~\ref{p:tw-div} (twisted division)
with $B=\overline WG$, we obtain polynomial-time homology
for $\overline WG$ from that of~$G$.
\end{enumerate}

The proof of Theorem~\ref{t:EML} is finished,
except for the proof of the next lemma.
\end{proof}

\begin{lemma}\label{l:ZtoZm} Given a polynomial-time homology for
$K(\Z,1)$, one can equip
$K(\Z/m,1)$ (parameterized by the natural number
$m$ encoded in binary) with polynomial-time homology.
\end{lemma}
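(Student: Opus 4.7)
The plan is to realize $K(\Z/m,1)$ as the base of a principal twisted Cartesian product whose fiber and total space are both (polynomial-time isomorphic to) $K(\Z,1)$, and then invoke Proposition~\ref{p:tw-div} (twisted division) with $G=K(\Z,1)$ and $B=K(\Z/m,1)$. The underlying geometric input is the short exact sequence of Abelian groups $0\to\Z\xrightarrow{\cdot m}\Z\to\Z/m\to 0$, which induces the short exact sequence of Abelian simplicial groups
\[
K(\Z,1)\xrightarrow{\cdot m} K(\Z,1)\xrightarrow{q} K(\Z/m,1),
\]
a principal fibration with fiber $K(m\Z,1)\cong K(\Z,1)$; the task is to recast it as an explicit polynomial-time principal twisted product.

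Using the cocycle description of simplices, an $\otherdim$-simplex $c\in K(\Z,1)_\otherdim$ is encoded by integers $c_i=c(0,i)$ for $i=1,\ldots,\otherdim$. Writing uniquely $c_i=m\phi_i+\beta_i$ with $\beta_i\in\{0,\ldots,m-1\}$, the assignment $\Phi(c)=(\phi,\beta)$ with $\phi\in K(\Z,1)_\otherdim$ (under the identification $m\Z\cong\Z$) and $\beta\in K(\Z/m,1)_\otherdim$ is a polynomial-time invertible bijection in each dimension. A direct computation of $\partial_0 c$ in both descriptions will show that $\Phi$ is a simplicial isomorphism to a twisted product $K(\Z,1)\times_\tau K(\Z/m,1)$, where the twisting operator records the carry of the subtraction $\beta_{k+1}-\beta_1$: writing $\beta_{k+1}-\beta_1=m\epsilon_k+\gamma_k$ with $\gamma_k\in\{0,\ldots,m-1\}$, one is forced to set
\[
\tau(\beta)_k=\epsilon_k=\bigl\lfloor(\beta_{k+1}-\beta_1)/m\bigr\rfloor\in\{-1,0\},\quad k=1,\ldots,\otherdim-1,
\]
which is obviously polynomial-time computable in $\size(\beta)+\size(m)$. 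The twisting axioms of Definition~\ref{d:twistprod} are then automatic, because $\Phi$ transports the simplicial structure from $K(\Z,1)$ and these axioms are exactly what is needed for the simplicial identities to hold on the twisted product.

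With this in place the hypotheses of Proposition~\ref{p:tw-div} are satisfied: $G=K(\Z,1)$ is $0$-reduced, locally polynomial-time, and Abelian; $\tau$ is polynomial-time; and both $G$ and $G\times_\tau B\cong K(\Z,1)$ (via the polynomial-time $\Phi$) carry polynomial-time homology, the former by hypothesis and the latter by transport along the polynomial-time simplicial isomorphism $\Phi$. The proposition then yields polynomial-time homology for $B=K(\Z/m,1)$, parameterized by $m$ encoded in binary. The only genuine technical point is the explicit polynomial-time description of $\tau$; everything else is a direct application of the twisted division machinery built earlier in this section, so I do not expect any serious obstacle.
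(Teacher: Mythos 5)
Your proposal is essentially identical to the paper's proof: the same ``potential function'' encoding of simplices of $K(\Z,1)$, the same isomorphism $K(\Z,1)\times_\tau K(\Z/m,1)\cong K(\Z,1)$ read off from division with remainder by $m$, and the same appeal to Proposition~\ref{p:tw-div} (twisted division). In fact your carry formula $\tau(\beta)_k=\bigl\lfloor(\iota(b_{k+1})-\iota(b_1))/m\bigr\rfloor\in\{-1,0\}$ is the correct normalization: the paper's displayed formula $\tau_\otherdim(\beta)_k=\iota(b_{k+1})-\iota(b_1)-\iota(b_{k+1}-b_1)$ is exactly $m$ times yours and appears to be missing a factor of $1/m$, so after that correction the two agree.
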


We note that the simplicial set $K(\Z/m,1)$ has finitely many
simplices in each dimension (the number is even bounded by a polynomial
in $m$ for every fixed dimension). Nevertheless, we cannot treat
it as a finite simplicial set, since it is parameterized
by the group $\Z/m$, whose encoding size is only $\log m$,
and so the number of simplices is exponential in this size.
Somewhat paradoxically, we will use the infinite simplicial
set $K(\Z,1)$ to get a handle on the finite (in every dimension) $K(\Z/m,1)$.

\begin{proof}
By the assumption,
the simplicial group $K(\Z,1)$ is equipped with polynomial-time homology.

We will exhibit a twisting operator $\tau$
such that the principal twisted Cartesian product
$P:=K(\Z,1)\times_\tau K(\Z/m,1)$ is simplicially isomorphic to~$K(\Z,1)$.
Let $\varphi\: P\to K(\Z,1)$ be the isomorphism;
assuming that both $\varphi$ and $\varphi^{-1}$ are polynomial-time
maps, we can thus equip $P$ with polynomial-time
homology as well. Then we obtain the desired polynomial-time
homology for $K(\Z/m,1)$ from Proposition~\ref{p:tw-div}
(twisted division).

Conceptually, the isomorphism $\varphi$ is obtained
from the short exact sequence of Abelian groups
\[\xymatrixcolsep{3pc}
\xymatrix{
0\ar[r]&\Z \ar[r]^{\times m}& \Z\ar[r]^
{{\rm mod}~m~~~}& \Z/m\ar[r]& 0
}
\]
by passing to classifying spaces. But our presentation below
does not refer to this approach and
is completely elementary.

In order to define $\varphi$ and $\tau$, it will be convenient
to use a particular representation of simplices in
$K(\Z,1)$ and in $K(\Z/m,1)$, described next.

We recall that the $\otherdim$-simplices of $K(\Z,1)$ are
1-dimensional integral cocycles on $\Delta^\otherdim$,
in other words, labelings $c$ of the edges of the complete graph
on $\{0,1,\ldots,\otherdim\}$ with integers such that, for every
triple $i<j<k$, $c(i,j)-c(i,k)+c(j,k)=0$. It is easy to see
that every such labeling is determined by a ``potential function''
$a$ on the vertex set, i.e., $c(i,j)=a(j)-a(i)$
(from the topological point of view, every cocycle $c$ is
a coboundary since $\Delta^\otherdim$ is contractible,
and $a$ is a $0$-cochain with $c=\cobo a$). Moreover, w.l.o.g.
we can assume that $a(0)=0$, and then $a$ is determined uniquely.

Then we represent the $\otherdim$-simplex $c$ by the
$\otherdim$-tuple $\alpha=(a_1,a_2,\ldots,a_\otherdim)$, where we write
$a_i$ instead of $a(i)$ for typographic reasons. The boundary operators
then work as follows:
\begin{eqnarray*}
\partial_0\alpha&=&(a_2-a_1,a_3-a_1,\ldots,a_\otherdim-a_1),\\
\partial_i\alpha&=&(a_1,a_2,\ldots,a_{i-1},a_{i+1},\ldots,a_\otherdim),\ \ \
i=1,2,\ldots,\otherdim.
\end{eqnarray*}
The degeneracy operator $s_0$ prepends $0$ to the beginning of the
sequence, and for $i\ge 1$, $s_i$ duplicates the $i$th term.
An analogous representation is used for the simplices of $K(\Z/m,1)$.

Now if $\alpha=(a_1,\ldots,a_\otherdim)\in K(\Z,1)_\otherdim$
and $\beta=(b_1,\ldots,b_\otherdim)\in K(\Z/m,1)_\otherdim$
are simplices represented
in this way, the desired simplicial isomorphism
$\varphi\: K(\Z,1)\times_\tau K(\Z/m,1)\to K(\Z,1)$ is defined by
\[
\varphi_\otherdim(\alpha,\beta):= (ma_1+\intg(b_1),\ldots,ma_\otherdim+
\intg(b_\otherdim)),
\]
where $\intg\:\Z/m\to\Z$ is the identification of
$\Z/m$ with $\{0,1,\ldots,m\}\subseteq\Z$.
It is clear that $\varphi_\otherdim$ is a bijection between the sets
of $\otherdim$-simplices, and that both $\varphi$ and $\varphi^{-1}$
are polynomial-time computable.

We recall that in the twisted product $K(\Z,1)\times_\tau K(\Z/m,1)$
we have $s_i(\alpha,\beta)=(s_i\alpha,s_i\beta)$ for all $i$,
and $\partial_i(\alpha,\beta)=(\partial_i\alpha,\partial_i\beta)$
for all $i\ge 1$. It is then straightforward to check that
the mapping $\varphi$ commutes with $s_0,\ldots,s_\otherdim$
and with $\partial_1,\ldots,\partial_\otherdim$.

The face operator $\partial_0$ is twisted, i.e.,
$\partial_0(\alpha,\beta)=(\tau(\beta)+\partial_0\alpha,\partial_0\beta)$
(here we write the group operation additively, unlike in the
general discussion of twisted products earlier). From the requirement
that $\varphi$ commute with $\partial_0$, we can compute the appropriate
twisting operator~$\tau$.

Namely, we have
\[
\partial_0\varphi_\otherdim(\alpha,\beta)=
\Bigl(m(a_2-a_1)+\intg(b_2)-\intg(b_1),\ldots, m(a_\otherdim-a_1)+
\intg(b_\otherdim)-\intg(b_1)\Bigr),
\]
while
\[
\varphi_{\otherdim-1}(\partial_0\alpha,\partial_0\beta)=
\Bigl(m(a_2-a_1)+\intg(b_2-b_1),\ldots, m(a_\otherdim-a_1)+
\intg(b_\otherdim-b_1)\Bigr)
\]
(where the subtraction in the argument of $\intg$ is in $\Z/m$, i.e.,
modulo $m$). It follows that $\tau$ has to be given by
\[
\tau_\otherdim(\beta)=\Bigl(\intg(b_2)-\intg(b_1)-\intg(b_2-b_1),\ldots,
\intg(b_\otherdim)-\intg(b_1)-\intg(b_\otherdim-b_1)\Bigr).
\]
This is obviously a polynomial-time map, and a routine check of properties
(i)--(iv) of a twisting operator in Definition~\ref{d:twistprod}
concludes the proof.
\end{proof}

\subsection{A pullback from a fibration of Eilenberg--MacLane spaces}
\label{s:pullb}

For our construction of Postnikov systems, we will need an operation
that is essentially a twisted Cartesian product, but in a somewhat
different representation.
We will have the following situation. We are given
a simplicial set $P$, plus a simplicial mapping
$f\:P\to K(\pi,\thedim+1)$, for some Abelian group $\pi$
and a fixed~$\thedim\ge 1$.

Now we define a simplicial set $Q$ as the \emph{pullback}
according to the following commutative diagram:
\[
\xymatrix{
Q \ar[r] \ar[d] & E(\pi,\thedim) \ar[d]^{\cobo}\\
P\ar[r]^-{f} & K(\pi,\thedim+1)
}
\]
This means that $Q$ is the simplicial subset of the Cartesian product
$P\times E(\pi,\thedim)$ consisting of the pairs
$(\alpha,\beta)$ of simplices $\alpha\in P_\otherdim$,
$\beta\in E(\pi,\thedim)_\otherdim$
with $f(\alpha)=\cobo(\beta)$.

As a simple consequence of Proposition~\ref{p:twistprod} (twisted product)
and of an explicit isomorphism of the pullback with a suitable twisted
product, we obtain the following.

\begin{corol}\label{c:pullback} Given $\pi,\thedim,P,f$ as above, where $\pi$ is a globally polynomial-time Abelian group, $P$ is equipped with polynomial-time homology, and $f$ is polynomial-time, all parameterized by
the same parameter set $\II$,
the pullback $Q$ can be equipped with polynomial-time homology.
\end{corol}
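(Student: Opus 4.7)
The plan is to identify $Q$ explicitly with a principal twisted Cartesian product of the form $K(\pi,\thedim)\times_{\tilde\tau}P$, and then apply Proposition~\ref{p:twistprod}. The starting point is the isomorphism mentioned at the end of Section~\ref{s:EML1}: $E(\pi,\thedim)\cong K(\pi,\thedim)\times_\tau K(\pi,\thedim+1)$, where the twisting operator $\tau$ is given explicitly by formula~(\ref{e:taudef}), and under this identification the map $\cobo\colon E(\pi,\thedim)\to K(\pi,\thedim+1)$ becomes the projection onto the second factor.

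First, using this identification, an $\otherdim$-simplex of $Q$ can be described as a triple $(\alpha,\gamma,z)$ with $\alpha\in P_\otherdim$, $\gamma\in K(\pi,\thedim)_\otherdim$, $z\in K(\pi,\thedim+1)_\otherdim$, subject to $z=f(\alpha)$. Projecting away the redundant coordinate $z$ gives a bijection $Q_\otherdim\cong K(\pi,\thedim)_\otherdim\times P_\otherdim$, under which the Cartesian face and degeneracy operators (for indices other than $\partial_0$) carry over directly, because $f$ is a simplicial map. For $\partial_0$, the twisting in $E(\pi,\thedim)$ together with $f(\partial_0\alpha)=\partial_0 f(\alpha)$ forces the induced face operator on $Q$ to be
\[
\partial_0(\gamma,\alpha)=\bigl(\tilde\tau(\alpha)+\partial_0\gamma,\ \partial_0\alpha\bigr),
\]
where $\tilde\tau:=\tau\circ f\colon P_\otherdim\to K(\pi,\thedim)_{\otherdim-1}$ (so that $\tilde\tau$ shifts dimension down by one, as required of a twisting operator into the simplicial group $K(\pi,\thedim)$).

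Second, I would verify that $\tilde\tau$ is a polynomial-time twisting operator. The four axioms (i)--(iv) of Definition~\ref{d:twistprod} follow from the corresponding axioms for $\tau$ applied to the simplex $f(\alpha)$ and the fact that $f$ commutes with all face and degeneracy operators; polynomiality of $\tilde\tau$ is immediate because $\tau$ is given by the closed-form formula~(\ref{e:taudef}) and $f$ is polynomial-time by hypothesis. The identification $Q\cong K(\pi,\thedim)\times_{\tilde\tau}P$ is then a simplicial isomorphism of parameterized simplicial sets, computable in both directions in polynomial time.

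Finally, I invoke Proposition~\ref{p:twistprod}. Its hypotheses are met: the fiber $K(\pi,\thedim)$ is $0$-reduced (it has a unique $0$-simplex for every $\thedim\ge 1$), it is a locally polynomial-time Abelian simplicial group acting on itself by right multiplication, it carries polynomial-time homology by Theorem~\ref{t:EML}, the base $P$ carries polynomial-time homology by assumption, and $\tilde\tau$ is polynomial-time. Hence $K(\pi,\thedim)\times_{\tilde\tau}P$ is equipped with polynomial-time homology, and transporting along the isomorphism gives polynomial-time homology for $Q$. The only slightly subtle step is bookkeeping in the verification of the isomorphism, particularly checking that the twisted $\partial_0$ on the pullback matches the $\partial_0$ inherited from $E(\pi,\thedim)$; this is routine and uses only that $\cobo$ corresponds to the projection under the identification of $E(\pi,\thedim)$ as a twisted product.
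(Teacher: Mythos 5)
Your proof is correct and follows essentially the same route the paper's proof takes: it identifies $Q$ with the twisted product $K(\pi,\thedim)\times_{\tau\circ f}P$ (the paper writes $\tau^*$ for your $\tilde\tau$) and then applies Proposition~\ref{p:twistprod}, checking 0-reducedness of the fiber and invoking Theorem~\ref{t:EML} for polynomial-time homology of $K(\pi,\thedim)$.

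One caveat, which is more organizational than mathematical: you take as your starting point the isomorphism $E(\pi,\thedim)\cong K(\pi,\thedim)\times_\tau K(\pi,\thedim+1)$, but the paper, at the end of Section~\ref{s:EML1}, only asserts that isomorphism and explicitly defers its description ``to the proof of Corollary~\ref{c:pullback} below.'' The paper's proof instead writes down the general pullback isomorphism $K(\pi,\thedim)\times_{\tau^*}P\to Q$ directly, using the pseudo-section $\psi\colon K(\pi,\thedim+1)\to E(\pi,\thedim)$, $\psi(z)(i_0,\ldots,i_\thedim):=z(0,i_0+1,\ldots,i_\thedim+1)$, and only afterwards recovers your starting isomorphism as the special case $P=K(\pi,\thedim+1)$, $f=\id$. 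Deriving the general statement from the special case is not a mathematical gap, since that isomorphism is a standard fact from \cite{May:SimplicialObjects-1992}, but to make the argument self-contained---in particular, to substantiate your claim that the identification $Q\cong K(\pi,\thedim)\times_{\tilde\tau}P$ is polynomial-time computable in both directions---you should spell out the explicit pseudo-section formula rather than treat the $E(\pi,\thedim)$ identification as a black box.
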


\begin{proof} Let $\tau$ be the twisting operator in the
twisted product $K(\pi,\thedim)\times_\tau K(\pi,\thedim+1)$
at the end of Section~\ref{s:EML1},
and let $\tau^*$ be the pullback of $\tau$ by $f$; that is,
$\tau^*(\alpha):=\tau(f(\alpha))$. Then Proposition~\ref{p:twistprod}
yields polynomial-time homology for the twisted
product $K(\pi,\thedim)\times_{\tau^*} P$. According to
\cite[Prop.~18.7]{May:SimplicialObjects-1992} (which is formulated
in a more general setting), there is a simplicial isomorphism
$\varphi\:K(\pi,\thedim)\times_{\tau^*} P\to Q$, given by
\[
\varphi(\alpha,\beta) := (\psi(f(\alpha))+\beta,\alpha),
\]
where $\psi\:K(\pi,\thedim+1)\to E(\pi,\thedim)$ is the
\emph{pseudo-section} given by
\[
\psi(z)(i_0,\ldots,i_{\thedim}):=z(0,i_0+1,\ldots,i_\thedim+1),
\]
with the same notation as in the definition of~$\tau$.
Since both $\varphi$ and its inverse are polynomial-time maps,
we obtain polynomial-time homology for~$Q$ as needed.

In addition, setting $P=K(\pi,\thedim+1)$, we have
$Q=E(\pi,\thedim)$ and we obtain the isomorphism $E(\pi,\thedim)\cong
K(\pi,\thedim)\times_\tau K(\pi,\thedim+1)$ mentioned
at the end of Section~\ref{s:EML1}.
\end{proof}

%For the Postnikov system, we will need a simple modification of this corollary. The map $f\:P\to K(\pi,\thedim+1)$ corresponds to a cocycle $\phi\:C_{\thedim+1}(P)\to\pi$. Given $P$ with polynomial-time homology, provided by a strong equivalence $C_*(P)\steqP\EC_*(P)$, it is often the case that this cocycle is obtained as a composition
%\[\phi\:C_{\thedim+1}(P)\to\EC_{\thedim+1}(P)\xrightarrow{\phi^\mathrm{ef}}\pi,\]
%where $\phi^\mathrm{ef}\:\EC_{\thedim+1}(P)\to\pi$ is some cocycle on the effective chain complex (and thus specified by a finite amount of data). Clearly, $f$ is polynomial-time with respect to $I$ and $\phi^\ef$.

%\begin{corol}\label{c:efpullback}
%Let $(P(I):I\in\II)$ be a simplicial set with polynomial-time homology and consider the parameter set $\JJ=(I,\phi^\mathrm{ef},\mm)$ with $I\in\II$, $\mm\in\MM$ an isomorphism type of an Abelian group, and $\phi^\mathrm{ef}\:EC_{\thedim+1}(P(I))\to\AB(\mm)$ a cocycle on the effective chain complex of $P(I)$. Then the pullback $Q$ as above can be equipped with polynomial-time homology, parameterized by $(I,\phi^\mathrm{ef},\mm)\in\JJ$.\qed
%\end{corol}

%\begin{proof}
%The cocycle $\phi$ is polynomial-time as a composition of maps in the strong equivalence with the effective cocycle. The %corresponding map $f$ is then also polynomial-time and we may use the previous corollary.
%\end{proof}

\section{Postnikov systems}\label{s:post}

Let $Y$ be a topological space, which we will assume to be given
as a simplicial set equipped with polynomial-time homology.
Moreover, we assume that $Y$ is $1$-connected.
This is needed for the proof of correctness of the algorithm;
the algorithm itself does not
make use of any certificate of $1$-connectedness,
and in particular, we do not assume
$Y$ $1$-reduced.

For our purposes, we define a \emph{(simplicial) Postnikov system} of $Y$
as the collection of simplicial sets and simplicial maps organized
into the following commutative diagram,%
\newlength{\lxxx}
$$
\xymatrix{
& & \vdots  \ar[d]^{p_3} \\
& & P_{2} \ar[d]^{p_2}\\
& & P_{1} \ar[d]^{p_1} \\
Y \ar[uurr]^-{\varphi_2}
  \ar[urr]^{\ \ \ \ \varphi_{1}}
  \ar[rr]^{\varphi_0}&&
\leftbox{P_0}{{}=*}\\
}
$$
where $P_0$ is a single point, and the following conditions hold:
\begin{enumerate}
\item[(i)] For each $\thedimm\ge 0$,
the map $\varphi_\thedimm: Y\to P_\thedimm$
induces isomorphisms  $\varphi_{\thedimm*}:\pi_i(Y)\to \pi_i(P_\thedimm)$
of homotopy groups for $0\le i\le\thedimm$,
while $\pi_i(P_\thedimm)=0$ for $i\ge \thedimm+1$.
\item[(ii)] Each $P_\thedimm$, $\thedimm\ge 1$, is the pullback
according to the following diagram (as in Section~\ref{s:pullb})
for some
 map $\kkk_{\thedimm-1}\: P_{\thedimm-1}\to  K(\pi_\thedimm(Y),\thedimm+1)$:
\[
\xymatrix{
P_\thedimm \ar[r] \ar[d]_{p_\thedimm} & E(\pi_\thedimm(Y),\thedimm) \ar[d]^{\cobo}\\
P_{\thedimm-1}\ar[r]^-{\kkk_{\thedimm-1}} & K(\pi_\thedimm(Y),\thedimm+1)
}
\]
\end{enumerate}

The simplicial sets $P_0,P_1,\ldots$ are called the \emph{stages}
of the Postnikov system, and
the mappings $\kkk_{i}$ are called
\emph{Postnikov classes}  (the terms \emph{Postnikov factors}
or \emph{Postnikov invariants} are also used in the literature).

In the simplicial Postnikov system as introduced above,
each $P_{\thedimm}$ is a simplicial subset of the Cartesian
product $ P_{\thedimm-1}\times E(\pi_\thedimm(Y),\thedimm)$,
and the map $p_\thedimm\:P_\thedimm\to P_{\thedimm-1}$
is the projection to the first component.

In the rest of this section, we will prove Theorem~\ref{t:ipost}.
First we should make the statement precise.

\begin{theorem}[Restatement of Theorem~\ref{t:ipost}]\label{t:restat14}
Let $\thedim\geq 2$ be fixed and let $(Y(I):I\in\II)$
be a simplicial set with polynomial-time homology, the main example being a
finite simplicial complex, and let us suppose that $Y$ is
$1$-connected (or simple;  see the remark following
Theorem~\ref{t:pi_k}). Then there is a polynomial-time
algorithm that, given $I\in\II$, computes, for each $i\le\thedim$,
the isomorphism type $\mm_i=\mm_i(I)$ of the homotopy group $\pi_i(Y(I))$.
Furthermore, we can construct the following objects
(i.e., write down the algorithms for the black boxes representing them,
which use the black boxes defining $Y$ as subroutines).
\begin{itemize}
\item
Simplicial sets $P_0,P_1,\ldots,P_\thedim$ with polynomial-time homology.
\item
Polynomial-time simplicial maps $\varphi_i\:Y\to P_{i}$,
$i\le\thedim$.
\item
Polynomial-time simplicial maps $\kkk_{i-1}\:P_{i-1}\to K(\pi_i,i+1)$,
$i\le\thedim$, where we use the notation $\pi_i:=\AB(\mm_i)$
for the canonical representation of the Abelian group described
by $\mm_i$ (see the text following Definition~\ref{d:glob-ptime}).
\end{itemize}
All of these objects are parameterized by $\II$.
The $P_i(I)$, the $(\varphi_i)_I$,
and the $(\kkk_{i-1})_I$ form a Postnikov system of~$Y(I)$.
\end{theorem}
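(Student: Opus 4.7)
The plan is to construct the Postnikov system inductively on $i$ from $0$ to $\thedim$, maintaining the invariant that at stage $i-1$ we have a locally polynomial-time simplicial set $P_{i-1}$ with polynomial-time homology, together with polynomial-time simplicial maps $\varphi_{i-1}\:Y\to P_{i-1}$ and $\kkk_{i-2}\:P_{i-2}\to K(\pi_{i-1},i)$ satisfying the Postnikov conditions (i) and (ii). The base case $i=0$ is trivial: $P_0=*$ is a point with globally polynomial-time homology, and $\varphi_0$ is the constant map.

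For the inductive step, I would first form the algebraic mapping cone $\MCone_*(\varphi_{i-1*})$ of the induced chain map $\varphi_{i-1*}\:C_*(Y)\to C_*(P_{i-1})$, which by Proposition~\ref{p:mcone} is equipped with polynomial-time homology. A relative Hurewicz argument applied to the homotopy fiber of $\varphi_{i-1}$---which is $(i-1)$-connected because $\varphi_{i-1}$ induces isomorphisms on $\pi_{<i}$ and because $\pi_{\ge i}(P_{i-1})=0$---identifies $H_{i+1}(\MCone_*(\varphi_{i-1*}))$ with $\pi_i(Y)$. Since $\MCone_*(\varphi_{i-1*})$ has polynomial-time homology, its $(i+1)$st homology can be computed as a globally polynomial-time Abelian group, yielding the isomorphism type $\mm_i$ and hence $\pi_i=\AB(\mm_i)$.

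Next, I would construct the Postnikov class $\kkk_{i-1}\:P_{i-1}\to K(\pi_i,i+1)$ by producing the cocycle in $Z^{i+1}(P_{i-1};\pi_i)$ that corresponds to it under the bijection $\SM(P_{i-1},K(\pi_i,i+1))\cong Z^{i+1}(P_{i-1};\pi_i)$ recalled in Section~\ref{s:EML1}. This cocycle is the ``fundamental class'' of the fiber, namely the image under the connecting isomorphism $H^{i+1}(\MCone;\pi_i)\cong\Hom(H_{i+1}(\MCone),\pi_i)$ of the identity on $\pi_i$, pulled back along the canonical chain inclusion $C_*(P_{i-1})\to\MCone_*(\varphi_{i-1*})$ (which is the inclusion $i$ noted after equation~\eqref{e:conediff}). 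To evaluate this cocycle on an arbitrary simplex of $P_{i-1}$ in polynomial time, I would use the explicit strong equivalence of $\MCone_*(\varphi_{i-1*})$ with its effective version and the Smith-normal-form machinery supplied by polynomial-time homology to compute the class of any cycle in $H_{i+1}$ and then apply the fixed homomorphism to $\pi_i$; the polynomial bounds come from the formulas in Proposition~\ref{p:mcone} together with Lemma~\ref{l:comp-r0}. With $\kkk_{i-1}$ in hand, the next stage $P_i$ is defined as the pullback in condition~(ii), and Corollary~\ref{c:pullback} equips it with polynomial-time homology.

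Finally I would build the lift $\varphi_i\:Y\to P_i$. By construction, $\kkk_{i-1}\circ\varphi_{i-1}$ is a cocycle on $Y$ representing zero in $H^{i+1}(Y;\pi_i)$, so there exists a cochain $\eta\in C^i(Y;\pi_i)$ with $\cobo\eta=\kkk_{i-1}\circ\varphi_{i-1}$. Via the correspondence $\SM(Y,E(\pi_i,i))\cong C^i(Y;\pi_i)$, the pair $(\varphi_{i-1},\eta)$ gives a simplicial map into $P_{i-1}\times E(\pi_i,i)$ landing in the pullback $P_i$. The witness $\eta$ is produced, simplex by simplex, from the polynomial-time ``boundary-witness'' operation supplied by polynomial-time homology for $C_*(Y)$, which lets us, for each nondegenerate $i$-simplex $\sigma$ of $Y$, compute a chain on $\Delta^i$ whose coboundary realizes the pullback cocycle on that simplex. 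A routine long-exact-sequence argument on the fibration $K(\pi_i,i)\to P_i\to P_{i-1}$ verifies condition~(i) for $\varphi_i$, completing the induction. The main obstacle is the third step above: turning the abstract cohomology-class description of $\kkk_{i-1}$ into a concrete simplicial map whose evaluation on any simplex runs in polynomial time. This requires threading the strong equivalences and the explicit reductions from Section~\ref{s:ops} carefully enough to keep the running time under control as the parameter grows, and constructing the section $\eta$ in a way that is compatible with face and degeneracy operators so that the resulting $\varphi_i$ is genuinely simplicial.
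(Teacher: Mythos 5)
Your overall strategy matches the paper's: induct on the stage, form the algebraic mapping cone $\MCone_*(\varphi_{i-1*})$ with polynomial-time homology, read $\pi_i(Y)$ off its $(i{+}1)$st homology via the relative Hurewicz theorem, represent $\kkk_{i-1}$ by a cocycle, and take the pullback. But the way you propose to produce the lift $\varphi_i$ has a genuine gap.

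You construct $\kkk_{i-1}$ first and then want to find a primitive $\eta\in C^i(Y;\pi_i)$ with $\cobo\eta=\kkk_{i-1}\varphi_{i-1}$, computing $\eta$ ``simplex by simplex'' from a boundary-witness routine. This cannot work as described: the equation $\cobo\eta=\kkk_{i-1}\varphi_{i-1}$ is a \emph{global} linear system coupling the values of $\eta$ on all $i$-simplices of $Y$ through the faces of each $(i{+}1)$-simplex, so the values cannot be chosen independently one simplex at a time. Nor does the boundary-witness operation of polynomial-time homology directly supply cochain primitives on the (generally infinite) $C_*(Y)$; it lives on the small effective complex $\EC_*^Y$, and even pulling back a primitive from there only recovers $\eta$ up to the error controlled by a chain homotopy, which must be tracked explicitly. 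The paper sidesteps all of this by defining the Postnikov class and the lift \emph{simultaneously}: it chooses a direct complement of the cycles in $\EC^M_{i+1}$, defines a single cochain $\rho\colon\EC^M_{i+1}\to\pi_i$ inducing the isomorphism to $\pi_i(Y)$, and then takes $\kappa_{i-1}^{\ef}$ and $\lambda_i^{\ef}$ to be the restrictions of $\rho$ to the two summands $\EC^{P_{i-1}}_{i+1}$ and $\EC^Y_i$; the nondegenerate $\lambda_i$ is the restriction of $\rho f$ to $C_i(Y)$, where $f\colon M_*\to\EC^M_*$ is the chain map from the strong equivalence. The compatibility $\kappa_{i-1}\varphi_*=\lambda_i\diff^Y$, which says exactly that $(\varphi_{i-1},\ell_i)$ lands in the pullback, then falls out of the formula for the mapping-cone differential, because both sides are $\rho f$ applied to something that differs by a boundary. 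This unified construction is the missing idea in your write-up (and its nondegenerate half $\lambda_i=\rho f|_{C_i(Y)}$ in general cannot be recovered from $\lambda_i^{\ef}$ alone, as the paper notes in a footnote).

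A smaller point: verifying that $\varphi_{i*}\colon\pi_i(Y)\to\pi_i(P_i)$ is an isomorphism is \emph{not} just a ``routine long-exact-sequence argument on the fibration.'' That LES handles $\pi_j(P_i)$ for $j\neq i$, but to compare $\pi_i(Y)$ with $\pi_i(P_i)$ one needs to relate $\pi_{i+1}(\MCyl\varphi_{i-1},Y)$ to $\pi_{i+1}(\MCyl p_i,P_i)$, which the paper does through a two-square diagram, a computation in relative homology using Lemma~\ref{l:ev-cone}, and a careful fibration-vs-cofibration comparison in the proof of Lemma~\ref{l:cylindrak}. That argument is substantive and you should not treat it as routine.
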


%\heading{Remarks. }
%\begin{enumerate}
%\item A Postnikov system of a space $Y$ is typically not unique.
%As we will see, our algorithm involves some arbitrary choices.
%These choices are encoded in $F_\thedim(I)$, and hence
%the Postnikov stages $P_i(J)$ are defined uniquely, and
%similarly for the $(\kkk_i)_J$ and $(\varphi_i)_J$.
%\item We could simplify the statement of the theorem
%somewhat by having the $P_i$, $\kkk_i$, and $\varphi_i$
%parameterized by $\II$, in the same way as $Y$.
%This would mean, though, that there is no preprocesessing---every time
%we would want to evaluate $\kkk_i$, say, we would have to do
%all computations from scratch. This would be still polynomial-time,
%but rather wasteful. Moreover, the re-parameterization
%with $\JJ$ also nicely solves the issue with non-unique choices
%in the computation of a Postnikov system.
%\end{enumerate}
%

\subsection{The algorithm}

\heading{Representing a simplicial map by an effective cocycle. }
In the Postnikov system algorithm, we will encounter the
following situation. We consider a simplicial set $(U(I):I\in\II)$
with polynomial-time
homology; let us write $\EC_*^U$ for the globally polynomial-time
chain complex used in the polynomial-time
homology, i.e., the one for which $C_*(U)\steqP \EC_*^U$.

Let us also consider a $(\thedim+1)$-cocycle $\psi^{\ef}\in
Z^{\thedim+1}(\EC_*^U;\pi)$ for some globally
polynomial-time Abelian group $\pi$, also parameterized
by $\II$; here
the superscript ``ef'' should suggest that the cocycle belongs
to the ``effective'' chain complex $\EC_*^U$ associated to~$U$.
Then $\psi^\ef$ can be represented by
a finite matrix, since it is a homomorphism from the
chain group $\EC_{\thedim+1}^U$ of finite rank into~$\pi$.

Now the strong equivalence $C_*(U)\steqP \EC_*^U$ defines, in
particular, a chain map $f\:C_*(U)\to \EC_*^U$. We define
a cocycle $\psi\in Z^{\thedim+1}(C_*(U))$ as
$\psi=f \psi^\ef$. As was discussed in Section~\ref{s:EML1},
such a $\psi$ canonically defines a simplicial map
$\hat\psi\:U\to K(\pi,\thedim+1)$.

The point we want to make here is that $\hat\psi$ can be regarded
as a polynomial-time simplicial map parameterized by
pairs $(I,\psi^\ef)$.

\heading{Re-parameterizing the Postnikov system. }
In Theorem~\ref{t:restat14}, we have the Postnikov system
parameterized by the same parameter set $\II$ as the input
simplicial set $Y$. This simplifies the formulation, but
as we have already remarked earlier, it is not very efficient
for an implementation, since it stipulates re-computing everything
from scratch every time we call one of the black boxes representing
the Postnikov system.

We are going to organize the algorithm somewhat differently.
We are going to define a new parameter set $\JJ_\thedim$,
whose elements have the form $(I,F_\thedim(I))$, where $F_\thedim$
is a polynomial-time mapping described below. The computation
of $F_\thedim(I)$ corresponds to a preprocessing, or ``construction''
of the Postnikov system. Then we will have the Postnikov system
parameterized by $\JJ_\thedim$ instead of $\II$, and this will
allow for much more effective black boxes. This point of view
is also very natural for presentation of the Postnikov
system algorithm.

What kind of data should be included in $\JJ_\thedim$ to
describe the Postnikov system? First, given $I\in\II$, we need the
homotopy groups $\pi_i(Y(I))$, $i\le\thedim$. As
in Theorem~\ref{t:restat14}, we are going to
represent  each $\pi_i(Y(I))$  by its isomorphism
type $\mm_i$, and we use the notation $\pi_i=\AB(\mm_i)$.
Thus $\mm_1,\ldots,\mm_\thedim$ are
included in~$F_\thedim(I)$.

%isomorphism types by $\mm_i$, we may replace them by isomorphic groups
%$\pi_i=\AB(\mm_i)$, as in the text following Definition~\ref{d:glob-ptime}.
%Thus, $\mm_1,\ldots,\mm_\thedim$ will be included in~$F_\thedim(I)$.

%First we describe how to view a Postnikov system as parameterized
%by a system of natural parameters. The main task will then lie
%in computing these parameters. In more detail, to describe the
%$\thedim$-th Postnikov stage
%\[P_\thedim\subseteq E(\pi_1,1)\times\cdots\times E(\pi_\thedim,\thedim),\]
%we need to supply the homotopy groups $\pi_1,\ldots,\pi_\thedim$ and the Postnikov classes $\kkk_1,\ldots,\kkk_{\thedim-1}$. If $P_{\thedim-1}$ is already equipped with polynomial-time homology, we obtain $P_\thedim$ with polynomial-time homology by Corollary~\ref{c:efpullback}, provided that $\pi_i=\AB(\mm_i)$ and we specify $\kkk_{\thedim-1}$ by an effective cocycle $\kappa_{\thedim-1}^\ef$. Thus, by induction, we obtain, for each $\thedim$, a simplicial set
%\[P_\thedim(\mm_1,\kappa_1^\ef,\mm_2,\ldots,\kappa_{\thedim-1}^\ef,\mm_\thedim)\]
%with polynomial-time homology.
%%%By the very definition, this simplicial set describes an arbitrary Postnikov stage\footnote{An arbitrary cocycle on a simplicial set with effective homology is homotopic to a cocycle coming from an effective cocycle.}.

Next, the Postnikov stage $P_\thedim$ is a simplicial subset of the
product
\[
P_\thedim\subseteq E(\pi_1,1)\times\cdots\times E(\pi_\thedim,\thedim),
\]
and for describing it, we need the Postnikov classes
$\kkk_{i-1}$, $i\le\thedim$. We are going to have $\kkk_{i-1}$
represented by a cocycle $\kappa_{i-1}^\ef\in Z^{i+1}(\EC_*^{P_{i-1}};\pi_{i})$,
in the way described above, and $\kappa_1^\ef,\ldots,\kappa_{\thedim-1}^\ef$
are also a part of $F_\thedim(I)$.

This, of course, assumes that
$P_{i-1}$ has already been equipped with polynomial-time homology;
indeed, the algorithm will proceed inductively, constructing
$P_{i-1}$
first, then $\kappa_{i-1}^\ef$ (and thus $\kkk_{i-1}$), and then $P_{i}$.
Here $P_i$ with polynomial-time homology is obtained as the
pullback as in the definition of a Postnikov system,
using Corollary~\ref{c:pullback}.

Finally, to describe the maps $\varphi_1,\ldots,\varphi_\thedim$,
we need even more data. Namely, $\varphi_\thedim$ is,
in particular, a simplicial map into $E(\pi_1,1)\times\cdots
\times E(\pi_\thedim,\thedim)$, and so we can write it
as $(\ell_1,\ldots,\ell_\thedim)$, where $\ell_i$ goes into
$E(\pi_i,i)$. Each $\ell_i$ is going to be specified
using a cochain $\lambda_i^\ef\in Z^{i}(\EC_*^Y;\pi_i)$.
The construction of $\ell_i$ from $\lambda_i^\ef$ is
described in the algorithm below;
it is roughly similar to the construction of $\kkk_i$
from $\kappa_i^\ef$, but there is a subtlety involved.

Hence the parameter $J\in\JJ_\thedim$ describing the first
$\thedim$ stages of the Postnikov system has the form
\[J=(I,\mm_1,\lambda_1^\ef,\kappa_1^\ef,\mm_2,\ldots,\kappa_{\thedim-1}^\ef,\mm_\thedim,\lambda_\thedim^\ef).\]
%It is easy to view $P_k$ as a simplicial set with polynomial-time homology, and $\kkk_{\thedim-1}$ as a polynomial-time simplicial map, both parameterized by $\JJ_\thedim$, by forgetting some of the components. The globally polynomial-time Abelian group $\pi_i$ is given as $\pi_i=\AB(\mm_i)$. We will now describe $\varphi_\thedim$.
%
%
%The components $\lambda_i^\ef\:\EC_i(Y(I))\to\pi_i$ are cochains on the effective chain complex of $Y(I)$. When $\EC_*(Y)=C_*(Y)$, as happens for finite simplicial complexes, we thus have cochains $\lambda_i\:C_i(Y(I))\to\pi_i$. In general, the $\lambda_i$ are obtained in a slightly non-obvious way; this is explained in detail in Step~\ref{step:nonef} of our algorithm below. By the usual correspondence, we then obtain simplicial maps $\ell_i\:Y(I)\to E(\pi_i,i)$. Finally, the simplicial map
%\[\varphi_\thedim\:Y\longrightarrow E(\pi_1,1)\times\cdots\times E(\pi_\thedim,\thedim)\]
%is given as $\varphi_\thedim=(\ell_1,\ell_2,\ldots,\ell_\thedim)$, and is polynomial-time on the parameter set $\JJ_\thedim$. In order for this procedure to correctly describe a Postnikov system up to dimension $\thedim$, we need that the image of $\varphi_\thedim$ lies in $P_\thedim$.
Of course, the $\kappa_i^\ef$ and $\lambda_i^\ef$ have to satisfy
certain consistency requirements, so that they describe a valid
Postnikov system (up to stage $\thedim$). These will be formulated
and proved later.

%When parameterized by $\JJ_\thedim$, the Postnikov stages $P_i$, $i\leq k$, form a simplicial set with polynomial-time homology, as follows by a repeated application of Corollary~\ref{c:pullback}. All the remaining objects from Theorem~\ref{t:restat14} are also polynomial-time on $\JJ_\thedim$. To finish the proof of Theorem~\ref{t:restat14}, we are thus left to exhibit a polynomial-time algorithm computing $F_\thedim(I)$.

\heading{The Postnikov system algorithm. }
Now we describe the way of computing $F_\thedim(I)$, i.e.,
obtaining the values of $\mm_1,\lambda_1^\ef,\kappa_1^\ef,\ldots,
\kappa^\ef_{\thedim-1},\mm_\thedim,\lambda_\thedim^\ef$
from $I$ (using the black boxes
specifying $Y$, of course).

As was mentioned above, we proceed by induction.
By definition, there is nothing to compute for $\thedim=0$ and,
in order to make the induction start, we define $P_0$ to be a single point
and $\varphi_0$ to be the constant map. Next, we assume that the algorithm
for $F_{\thedim-1}$, computing the parameters, is given and we are required
to compute the components $\kappa_{\thedim-1}^\ef$, $\mm_\thedim$,
and~$\lambda_\thedim^\ef$.

%\lukas{specifying $P_0$ and $\varphi_0$ may not be the best but I do not see a better way.}

\begin{enumerate}
\item\label{step:cone}
Construct the algebraic mapping cone $\MCo_*:=\MCone_*((\varphi_{\thedimm-1})_*)$, where $(\varphi_{\thedimm-1})_*\:C_*(Y)\to C_*(P_{\thedimm-1})$ is the chain map induced by $\varphi_{\thedimm-1}$, as a chain complex with polynomial-time homology, by Proposition~\ref{p:mcone}.
By the proof of that proposition,
the corresponding globally polynomial-time chain complex $\EC^M_*$ has
$\EC^M_{\thedim+1}=\EC_{\thedim}^Y\oplus\EC_{\thedim+1}^{P_{\thedim-1}}$.

\item
Compute the homology group
$H_{\thedimm+1}(\EC^M_*)$ as a globally polynomial-time Abelian group.
We let $\mm_{\thedimm}$ be its isomorphism type,
and let $\pi_\thedimm=\AB(\mm_\thedimm)$.
 We also have an explicit, polynomial-time
isomorphism $H_{\thedimm+1}(\EC^M_*)\cong\pi_\thedimm$,
as in the definition of a globally polynomial-time Abelian group.

\item\label{step:getDecomp}
Choose a decomposition of the chain group $\EC^M_{\thedimm+1}$ of the
form $\EC^M_{\thedimm+1}=\EZ_{\thedimm+1}\oplus \tEM_{\thedimm+1}$,
where $\EZ_{\thedimm+1}$ is the subgroup of all cycles,
and $\tEM_{\thedimm+1}$ is an arbitrary direct complement.

Let $\rho\: \EC^M_{\thedimm+1}\to \pi_{\thedimm}$ be given as the projection
\[\rho\:\EC^M_{\thedimm+1}=\EZ_{\thedimm+1}\oplus \tEM_{\thedimm+1}\to \EZ_{\thedimm+1}\to H_{\thedimm+1}(\EC^M_*)\stackrel{\cong}{\longrightarrow}\pi_{\thedimm}.\]
In other words, every chain $c\in\EC^M_{\thedimm+1}$ has a unique expression as $c=z+\tilde c$, $z\in \EZ_{\thedimm+1}$, $\tilde c\in \tEM_{\thedimm}$, and $\rho(c)$ is  the element of $\pi_\thedimm$
corresponding to the homology class
$[z]\in H_{\thedimm+1}(\EC^M_*)\cong\pi_\thedimm$.

\item
Using the decomposition of $\EC^M_{\thedimm+1}$ as in Step~\ref{step:cone}, we denote the restriction of $\rho$ to $\EC_{\thedimm}^Y$ by
$\lambda_{\thedim}^\ef$ and the restriction to
$\EC_{\thedimm+1}^{P_{\thedimm-1}}$ by $\kappa_{\thedim-1}^\ef$.
In effect, to give $\rho$ is the same as to give its
two components $\lambda_\thedimm^\ef$ and $\kappa_{\thedimm-1}^\ef$.

\item\label{step:nonef}
In the strong equivalence $M_*\steqP\EC^M_*$, let $f$ denote the composite chain map $M_*\to\EC^M_*$. Then we obtain a cochain $\rho f\:M_{\thedim+1}\to\pi_\thedim$. Again we have a direct sum decomposition $M_{\thedimm+1}=C_\thedim(Y)\oplus C_{\thedim+1}(P_{\thedim-1})$. We define $\lambda_\thedim\:C_\thedimm(Y)\to\pi_\thedimm$ as the restriction of $\rho f$ to the summand $C_\thedim(Y)$ and $\ell_\thedimm:Y\to E(\pi_\thedimm,\thedimm)$ as the corresponding simplicial map; it is clearly polynomial-time.
%Denoting the inclusions $j\:C_\thedim(Y)\to M_{\thedim+1}$ and $i\:C_{\thedim+1}(P_{\thedim-1})\to M_{\thedim+1}$, we may write the two components of $\rho f$ as $\lambda_\thedimm=\rho f j$ and $\kappa_{\thedimm-1}=\rho f i$.
%They correspond, respectively, to simplicial maps $\ell_\thedim\:Y\to E(\pi_\thedim,\thedim)$ and $\kkk_{\thedim-1}\:P_{\thedim-1}\to K(\pi_\thedim,\thedim+1)$ forming the Postnikov system; as explained $\varphi_\thedim=(\varphi_{\thedim-1},\ell_\thedim)$.
\end{enumerate}

It is easy to see that all the computations can be implemented
in polynomial time. Perhaps only the decomposition in Step~\ref{step:getDecomp}
may need some comment. The computation of $\EZ_{\thedimm+1}$ is a part
of computing the homology group $H_{\thedimm+1}(\EC^M_*)$. Then, given a basis
of $\EZ_{\thedimm+1}$, it suffices to extend it to a basis of the
free Abelian group $\EC^M_{\thedimm+1}$, which is also straightforward
using the Smith normal form.

To prove correctness, we will need to verify that
$\pi_\thedim\cong\pi_\thedim(Y)$, that $\kappa_{\thedim-1}^\ef$ is a cocycle,
that the image of the induced map
$\varphi_\thedim=(\varphi_{\thedim-1},\ell_\thedim)$ lies in $P_\thedim$,
and that it satisfies the conditions in the definition of a Postnikov system.
The proofs of all these claims are postponed to Section~\ref{s:correctness}.

\heading{Remark: non-uniqueness. }
A Postnikov system of a space $Y$ is typically not unique.
The algorithm above involves some arbitrary choices, namely,
the choice of the direct complement of $\EZ_{\thedim+1}$
in Step~\ref{step:getDecomp}, as well as the choice of the
isomorphism of $H_{\thedim+1}(\EC^M_*)$ with $\AB(\mm_i)$.
Performing these choices differently may result in a different
Postnikov system.

At the same time, in an algorithm that uses a Postnikov system,
such as the one in Corollary~\ref{c:[XY]}, we make many calls
to the black boxes representing the Postnikov system,
and we thus need that each time they refer to the same
Postnikov system, for otherwise, the algorithm
may not work correctly. This requirement is
reflected in the definition of  a parameterized simplicial set
$(X(I):I\in\II)$, where $I$ determines $X(I)$ uniquely.

One way of satisfying this requirement is to use only deterministic
algorithms (no randomization). Then, although the algorithm
makes some ``arbitrary'' choices, these choices are always made
in the same way for a given input.

Another, more conceptual and practical way, is the re-parameterization
as above: the results of all of the arbitrary choices are encoded
in $F_\thedim(I)$, and then the Postnikov stages $P_i(J)$ are
defined uniquely, and similarly for the $(\kkk_i)_J$ and $(\varphi_i)_J$.
In this case the computation of $F_\thedim(I)$ may use randomized
algorithms as well, which may be useful, e.g., for a fast computation
of the Smith normal form.

%\begin{Remark}[on non-deterministic algorithms]

%The concept of a simplicial set with effective (hence also polynomial-time) homology is not compatible with non-deterministic algorithms. When performing a number of computations in a non-deterministic simplicial set, it may happen that the black boxes make choices that are not consistent.

%In the construction of the Postnikov system, there is an ambiguity in the choice of the complement $\widetilde{\EC^M}_{\thedimm+1}$ and in the choice of the identification $H_{\thedimm+1}(\EC^M_*)\cong\pi_\thedimm$. When using a non-deterministic algorithm for the computation of the Smith normal form, we may obtain in this way different Postnikov systems at every run.

%Thus, for example the algorithm of Corollary~\ref{c:[XY]}, that performs many computations with the Postnikov system, would not run correctly, if it used Postnikov stages parameterized by the target simplicial complex $Y$. However, both these issues happen only in the computation of the parameters $F_\thedimm(I)$. Therefore, if we store the parameters $F_\thedimm(I)$ in the memory, and use them consistently to access the black boxes of $P_\thedim$ and to compute with $\varphi_\thedimm\:Y\ra P_\thedimm$, then the problem with multiple computations disappears.
%\end{Remark}

\subsection{Further properties of Eilenberg--MacLane spaces}

Here we prepare several lemmas needed in the proof of correctness
of our algorithm for computing Postnikov systems.
The proofs are routine, but we have no good reference for
these facts. Here, $\pi$ will stand for an Abelian group.

We recall that $\ev\:K(\pi,\thedimm)_\thedimm=E(\pi,\thedimm)_\thedimm \to\pi$
is the mapping assigning to each $\pi$-valued cocycle
$z\in Z^\thedimm(\Delta^\thedimm;\pi)$ its value on the
unique $\thedimm$-face of $\Delta^\thedimm$.
We can extend $\ev$ linearly to a  homomorphism
$\ev\:C_\thedimm(K(\pi,\thedimm))\to\pi$.

The first lemma is essentially just re-phrasing of the considerations
in Section~\ref{s:EML1} concerning the correspondence of simplicial
maps into $E(\pi,\thedimm)$ with cochains.

\begin{lemma}[Lemma~24.2 in \cite{May:SimplicialObjects-1992}]\label{l:ev*}
Let $f\: X\to E(\pi,\thedimm)$ be a simplicial map. Then the
cochain $\kappa\:C_\thedimm(X)\to \pi$ corresponding to it can be expressed as
$\kappa=\ev f_*$, where $f_*\:C_*(X)\to C_*(E(\pi,\thedimm))$
is the chain map induced by~$f$.
\end{lemma}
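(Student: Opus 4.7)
The plan is to verify the claimed identity $\kappa = \ev f_*$ of homomorphisms $C_\thedimm(X) \to \pi$ by checking it on the distinguished basis of the normalized chain group $C_\thedimm(X)$, namely on nondegenerate $\thedimm$-simplices of~$X$. Since both sides are integer-linear extensions of their values on this basis, agreement there implies equality.

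So fix a nondegenerate $\sigma \in X_\thedimm$. By the definition of $\kappa$ recalled in Section~\ref{s:EML1}, we have $\kappa(\sigma) = \ev(f(\sigma))$, where $f(\sigma)$ is regarded as a $\thedimm$-cochain on $\Delta^\thedimm$ and $\ev$ returns its value on the unique nondegenerate $\thedimm$-face. On the other hand, since we work with the \emph{normalized} chain complex, $f_*(\sigma)$ equals $f(\sigma)$ if $f(\sigma)$ is nondegenerate in $E(\pi,\thedimm)$, and equals $0$ otherwise. The identity then splits into two cases.

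In the first case, $f(\sigma)$ is a nondegenerate $\thedimm$-simplex of $E(\pi,\thedimm)$, and then $\ev f_*(\sigma) = \ev(f(\sigma)) = \kappa(\sigma)$ directly. In the second case, $f(\sigma) = s_i \tau$ for some $i$ and some $(\thedimm-1)$-simplex $\tau$; here $\ev f_*(\sigma) = 0$ by normalization, so it remains to show $\ev(f(\sigma)) = 0$ as well. This follows from the explicit description of the degeneracy operators in $E(\pi,\thedimm)$ given in Section~\ref{s:EML1}: the cochain $s_i\tau$ is obtained by pulling $\tau$ back along $\eta_i\colon\{0,\ldots,\thedimm\}\to\{0,\ldots,\thedimm-1\}$, and since the unique $\thedimm$-face $(0,1,\ldots,\thedimm)$ of $\Delta^\thedimm$ contains both $i$ and $i+1$, its image under $\eta_i$ is a degenerate (hence zero as a chain) $\thedimm$-simplex of $\Delta^{\thedimm-1}$. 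Thus the value of $s_i\tau$ on this face is~$0$, so $\ev(f(\sigma)) = 0 = \kappa(\sigma)$.

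There is no substantial obstacle here; the statement is essentially a bookkeeping consequence of how $\kappa$ is defined and how normalization interacts with the degeneracy convention in $E(\pi,\thedimm)$. The only point requiring care is the second case, where one must invoke the vanishing of the evaluation on degenerate simplices to reconcile the definition of $\kappa$ (which \emph{a priori} sees all simplices) with the fact that $f_*$ annihilates those whose image is degenerate.
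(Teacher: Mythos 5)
Your argument is correct. The paper gives no proof of its own here---it cites Lemma~24.2 of May and remarks that the statement is a rephrasing of the $\SM(X,E(\pi,\thedim))\cong C^\thedim(X;\pi)$ correspondence from Section~\ref{s:EML1}---so your two-case verification on the normalized basis is the natural unpacking of that remark, and the key point (that a degenerate $\thedim$-simplex of $E(\pi,\thedim)$ is necessarily the zero cochain, so $\ev$ vanishes on it) is handled properly.
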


Also see \cite[Lemma~24.3]{May:SimplicialObjects-1992} for
the corresponding statement for $K(\pi,\thedim)$.

The next two lemmas deal with maps induced by $\ev$ in homology.

\begin{lemma}\label{eval}
The  homomorphism
$\ev\:C_\thedimm(K(\pi,\thedimm))\to\pi$
induces an isomorphism $H_{\thedimm}(K(\pi,\thedimm))\to\pi$.
\end{lemma}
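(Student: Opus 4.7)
The plan is to produce an explicit right inverse $\phi\colon\pi\to H_\thedimm(K(\pi,\thedimm))$ to $\ev_*$ and then invoke the Hurewicz theorem to upgrade it to a two-sided inverse.

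First I would check that $\ev$ descends to homology: a $(\thedimm+1)$-simplex of $K(\pi,\thedimm)$ is a cocycle $z\in Z^\thedimm(\Delta^{\thedimm+1};\pi)$, and its $\thedimm$-faces $\partial_i z$ are $\thedimm$-simplices of $K(\pi,\thedimm)$ with $\ev(\partial_i z)=z(F_i)$, so that $\ev(\partial z)=\sum_i(-1)^i z(F_i)=(\cobo z)(\{0,\ldots,\thedimm+1\})=0$. Next I would define $\phi(g):=[z_g]$, where $z_g\in K(\pi,\thedimm)_\thedimm=Z^\thedimm(\Delta^\thedimm;\pi)$ assigns $g$ to the unique $\thedimm$-face. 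In the normalized chain complex $z_g$ is a cycle, because $K(\pi,\thedimm)_{\thedimm-1}=Z^\thedimm(\Delta^{\thedimm-1};\pi)$ is a singleton whose unique element is degenerate (an easy induction on dimension shows it equals $s_0$ of the unique simplex one dimension lower). Additivity $[z_{g+g'}]=[z_g]+[z_{g'}]$ is witnessed by the $(\thedimm+1)$-cocycle $\omega$ on $\Delta^{\thedimm+1}$ with $\omega(F_0)=g'$, $\omega(F_1)=g+g'$, $\omega(F_2)=g$, and $\omega(F_i)=0$ for $i\ge 3$; its boundary reduces to $z_{g'}-z_{g+g'}+z_g$ since $z_0$ is degenerate. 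Hence $\phi$ is a well-defined homomorphism, and $\ev_*\circ\phi=\id_\pi$, so in particular $\ev_*$ is surjective.

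The main obstacle is injectivity of $\ev_*$, equivalently surjectivity of $\phi$. For this I would observe that $K(\pi,\thedimm)$ is $(\thedimm-1)$-reduced, since the sets $K(\pi,\thedimm)_i$ for $i<\thedimm$ are singletons and, for $i\ge 1$, degenerate by the argument above. Being a simplicial group $K(\pi,\thedimm)$ is also Kan, hence $(\thedimm-1)$-connected, so the Hurewicz theorem yields that the Hurewicz homomorphism $h\colon\pi_\thedimm(K(\pi,\thedimm))\to H_\thedimm(K(\pi,\thedimm))$ is an isomorphism.

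It remains to identify $h$ with $\phi$ under the defining isomorphism $\pi\cong\pi_\thedimm(K(\pi,\thedimm))$. Taking $S^\thedimm:=\Delta^\thedimm/\partial\Delta^\thedimm$ as the simplicial sphere, this identification sends $g\in\pi$ to the homotopy class of the unique simplicial map $f_g\colon S^\thedimm\to K(\pi,\thedimm)$ that, via the cochain/simplicial-map correspondence of Section~\ref{s:EML1}, corresponds to the cocycle labeling the top cell by $g$; equivalently, $f_g$ sends the top cell to $z_g$. Since the fundamental class of $S^\thedimm$ is represented by the top cell (all its faces collapse to the basepoint and are therefore degenerate), we get $h(g)=(f_g)_*([S^\thedimm])=[z_g]=\phi(g)$. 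Thus $\phi$ is an isomorphism and $\ev_*=\phi^{-1}$ is the desired one.
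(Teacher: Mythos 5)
Your proof is correct, but it takes a genuinely different route from the paper's. The paper argues entirely inside the chain complex: after noting $C_\thedimm=Z_\thedimm$ and checking $B_\thedimm\subseteq\ker\ev$, it shows the reverse inclusion $\ker\ev\subseteq B_\thedimm$ directly, by observing (via the same $\omega$-type $(\thedimm+1)$-simplex you construct) that $1\cdot\sigma_{g_1}+1\cdot\sigma_{g_2}$ is homologous to $1\cdot\sigma_{g_1+g_2}$, so every chain $c$ is homologous to $1\cdot\sigma_{\ev(c)}$ and hence bounds if $\ev(c)=0$. You instead build the same homomorphism in the form of a section $\phi$ of $\ev_*$, and then invoke the Hurewicz theorem together with the facts that $K(\pi,\thedimm)$ is Kan (being a simplicial group), $(\thedimm-1)$-reduced, and has $\pi_\thedimm\cong\pi$ via the evaluation/top-cell correspondence, to upgrade $\phi$ from an injective section to a bijection. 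Your argument is correct and perhaps more conceptual (it explains the lemma as a shadow of Hurewicz), but it imports substantially more machinery — Kan-ness, the Hurewicz theorem, and the computation of $\pi_\thedimm(K(\pi,\thedimm))$ for the standard simplicial model — whereas the paper's proof is a short self-contained homology calculation, which fits the self-contained effective-homology spirit of the paper better. Two tiny quibbles: the clause ``the unique element is degenerate'' in your justification that $z_g$ is a cycle only holds for $\thedimm\ge 2$ (for $\thedimm=1$ the unique $0$-simplex is nondegenerate; $z_g$ is still a cycle because $\partial_0 z_g=\partial_1 z_g$), and for $\thedimm=1$ the Hurewicz step uses that $\pi$ is Abelian, which the paper does assume — worth making explicit.
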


\begin{proof} First we note that $C_\thedimm(K(\pi,\thedimm))=
Z_\thedimm(K(\pi,\thedimm))$, since $K(\pi,\thedimm)_{\thedimm-1}=\{0\}$.
Then $\ev$ is easily seen to be surjective, and so it remains to prove
that $\ker(\ev)=B_\thedimm(K(\pi,\thedimm))$.

Let us consider $z\in K(\pi,\thedimm)_{\thedimm+1}=
Z^{\thedimm}(\Delta^{\thedimm+1};\pi)$; thus, $z$
is given by the $(\thedimm+2)$-tuple $(g_0,\ldots,g_{\thedimm+1})$,
where $g_i$ is the value of $z$ on $\partial_i\Delta^{\thedimm+1}$,
and the cocycle condition reads $\sum_{i=0}^{\thedimm+1}(-1)^ig_i=0$
(in $\pi$). On the other hand, considering $z$ as a chain in
$C_{\thedimm+1}(K(\pi,\thedimm))$, we have $\diff z=
\sum_{i=0}^{\thedimm+1}(-1)^i\partial_i z$, and $\partial_i z$
is the $\thedimm$-cochain on $\Delta^\thedimm$ with value $g_i$
(if $g_i=0$, the term $\partial_i z$ is ignored in $\diff z$).
Thus $\ev(\diff z)=\sum_{i=0}^{\thedimm+1}(-1)^ig_i=0$,
and so $B_\thedimm(K(\pi,\thedimm))\subseteq \ker(\ev)$.

For the reverse inclusion, we recall that there is a one-to-one
correspondence, given by the mapping $\ev$, between
the nondegenerate  $\thedimm$-simplices of $K(\pi,\thedimm)$ and the
 nonzero elements of $\pi$. Let us write $\sigma_g$ for the
unique $k$-simplex of $K(\pi,\thedimm)$ with $\ev\sigma_g=g$.
Then a $\thedimm$-chain
$c\in C_\thedimm(K(\pi,\thedimm))$ can be written
as $c=\sum_{g\in\pi\setminus\{0\}}\alpha_g\cdot\sigma_g$,
with finitely many nonzero coefficients $\alpha_g$.
We have $\ev(c)=0$ iff $\sum_{g\in\pi\setminus\{0\}}\alpha_g g=0$
in~$\pi$.

By the above description
of generators of $B_\thedimm(K(\pi,\thedimm))$, and since $\thedimm\ge 1$,
we get that for every $g_1,g_2\in \pi$, the chain
$1\cdot\sigma_{g_1}+1\cdot\sigma_{g_2}$ is homologous to
$1\cdot\sigma_{g_1+g_2}$ (where terms involving $\sigma_0$
are to be ignored). Then by induction we get that
a general chain  $c=\sum_{g\in\pi\setminus\{0\}}\alpha_g\cdot\sigma_g$
is homologous to $1\cdot\sigma_s$, where
$s=\sum_{g\in\pi\setminus\{0\}}\alpha_g g$. In particular,
if $\ev c=0$, then $c$ is homologous to the zero chain,
and so $c\in B_\thedimm(K(\pi,\thedimm))$ as claimed.
\end{proof}

\begin{lemma}\label{l:ev-cone}
The homomorphism \[
h:=\ev+\ev\:\MCone_{\thedimm+1}(\delta_*)=
C_\thedimm (E(\pi,\thedimm))\oplus C_{\thedimm+1}(K(\pi,\thedimm+1))
\to \pi
\] sending $(\sigma,\tau)$ to
$\ev\sigma+\ev\tau$ induces an isomorphism $H_{\thedimm+1}(
\MCone_*(\delta_*))\to\pi$.
\end{lemma}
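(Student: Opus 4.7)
The plan is to compute $H_{\thedimm+1}(\MCone_*(\delta_*))$ via the long exact sequence of the mapping cone and to recognize $h$ as the resulting isomorphism. First I would verify that $h$ descends to a map $h_*$ on $H_{\thedimm+1}$ by showing $h\circ\diff^{\MCone_*(\delta_*)}=0$. For $(a',\tilde b')\in C_{\thedimm+1}(E(\pi,\thedimm))\oplus C_{\thedimm+2}(K(\pi,\thedimm+1))$, formula~(\ref{e:conediff}) gives $\diff(a',\tilde b')=(-\diff a',\delta_*a'+\tilde\diff\tilde b')$, so
\[
h(\diff(a',\tilde b'))=-\ev(\diff a')+\ev(\delta_*a')+\ev(\tilde\diff\tilde b').
\]
The last summand vanishes because $\ev$ sends boundaries in $C_*(K(\pi,\thedimm+1))$ to zero, as shown inside the proof of Lemma~\ref{eval}. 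The first two cancel: taking $a'$ to be a single $(\thedimm+1)$-simplex, i.e.\ a $\thedimm$-cochain on $\Delta^{\thedimm+1}$ with values $a'(j_0,\ldots,j_\thedimm)$, we have $\ev(\partial_i a')=a'(0,\ldots,\widehat\imath,\ldots,\thedimm+1)$, while $\ev(\delta_* a')=(\delta a')(0,\ldots,\thedimm+1)=\sum_i(-1)^i a'(0,\ldots,\widehat\imath,\ldots,\thedimm+1)=\ev(\diff a')$.

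Next I would invoke the standard short exact sequence of chain complexes $0\to C_*(K(\pi,\thedimm+1))\xrightarrow{i}\MCone_*(\delta_*)\xrightarrow{p}C'_*\to 0$, where $i(\tilde b)=(0,\tilde b)$, $p(a,\tilde b)=a$, and $C'_*$ is the shifted complex with $C'_\thedim=C_{\thedim-1}(E(\pi,\thedimm))$ and differential $-\diff$. A direct diagram chase shows that the connecting homomorphism in the associated long exact sequence is $\delta_*$ on homology, so the relevant segment reads
\[
H_{\thedimm+1}(E(\pi,\thedimm))\xrightarrow{\delta_*}H_{\thedimm+1}(K(\pi,\thedimm+1))\xrightarrow{i_*}H_{\thedimm+1}(\MCone_*(\delta_*))\to H_\thedimm(E(\pi,\thedimm))\xrightarrow{\delta_*}H_\thedimm(K(\pi,\thedimm+1)).
\]
The end group $H_\thedimm(K(\pi,\thedimm+1))$ vanishes because $K(\pi,\thedimm+1)$ is $\thedimm$-reduced, and the two groups $H_j(E(\pi,\thedimm))$ for $j=\thedimm,\thedimm+1$ vanish because $E(\pi,\thedimm)$ is acyclic in positive degrees; concretely, via the simplicial isomorphism $E(\pi,\thedimm)\cong WK(\pi,\thedimm)$ recorded inside the proof of Lemma~\ref{l:barW-iso}, this follows from the reduction $C_*(WK(\pi,\thedimm))\reduP\Z$ constructed in step~2 of the inductive proof of Theorem~\ref{t:EML}. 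Consequently $i_*$ is an isomorphism.

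To conclude, observe that $h\circ i(\tilde b)=h(0,\tilde b)=\ev(\tilde b)$ on $C_{\thedimm+1}(K(\pi,\thedimm+1))$, so $h_*\circ i_*=\ev_*$. By Lemma~\ref{eval} (applied with $\thedimm+1$ in place of $\thedimm$), $\ev_*$ is an isomorphism $H_{\thedimm+1}(K(\pi,\thedimm+1))\to\pi$; combined with the isomorphism $i_*$ this forces $h_*$ to be an isomorphism as claimed. The only nontrivial input beyond routine bookkeeping is the acyclicity of $E(\pi,\thedimm)$, which is the step I would expect to require most care in a self-contained write-up, but in our setting it is directly supplied by machinery already developed earlier in the paper.
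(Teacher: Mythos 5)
Your proof is correct and follows essentially the same route as the paper's: verify $h$ vanishes on boundaries, use the long exact sequence of the pair $(\MCone_*(\delta_*),C_*(K(\pi,\thedimm+1)))$ together with acyclicity of $E(\pi,\thedimm)$ to show the inclusion $i_*$ is an isomorphism, and then read off $h_*$ from $h_*i_*=\ev_*$ and Lemma~\ref{eval}. The only cosmetic difference is that the paper justifies acyclicity of $E(\pi,\thedimm)$ by citing May, while you derive it from the reduction $C_*(WK(\pi,\thedimm))\reduP\Z$ already set up in the paper; both are fine.
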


\begin{proof}
For brevity, we write $E=E(\pi,\thedimm)$ and
$K=K(\pi,\thedimm+1)$ since there are no other
Eilenberg--MacLane spaces in this proof.

In order to claim that $h$ induces a map in homology,
we verify that it vanishes on all boundaries.
Thus, let $(\sigma', \tau')\in\MCone_{\thedimm+2}(\delta_*)$
be a generator,  $\sigma'\in E_{\thedimm+1}$,
$\tau'\in K_{\thedimm+2}$. According to the formula
(\ref{e:conediff}) in Section~\ref{s:mcon} we have
$\diff^{\MCone_*}(\sigma',\tau')=
(-\diff^E\sigma',\delta_*(\sigma')+\diff^K\tau')$.
Since $\tau'$ is a cocycle, we have $\ev(\diff^K\tau')=0$,
as we saw in the proof of Lemma~\ref{eval}.
Moreover, it is easily checked that
$\ev(\diff^E\sigma')=\ev(\delta_*(\sigma'))$.
It follows that $h$ indeed vanishes on boundaries
and induces a homomorphism
$h_*\:H_{\thedimm+1}(\MCone_*(\delta_*))\to\pi$.

Now we consider the canonical inclusion
$C_*(K)\to \MCone_*(\delta_*)$, which is a chain map,
and thus it induces a map in homology, as in the following
diagram (here we use that $C_{\thedimm+1}(K)=Z_{\thedimm+1}(K)$
and $C_{\thedimm}(E)=Z_\thedimm(E)$):
\[\xymatrix{
& C_{\thedimm+1}(K)\ar[r]^{i~~~~~} \ar[d] & \leftbox{C_{\thedimm}(E)\oplus
C_{\thedimm+1}(K)}{=\MCone_{\thedimm+1} (\delta_*)} \ar[d] \\
\pi&H_{\thedimm+1}(K) \ar[l]^{\cong~~~~}_{\ev_*~~~} \ar[r]_-\cong^{i_*~~~~} &
H_{\thedimm+1}(\MCone_*(\delta_*))
}\]
Here $\ev_*$ on the left in the bottom row is the
isomorphism induced by $\ev$ as in Lemma~\ref{eval}.

The map $i_*$ is an isomorphism by the long exact homology sequence
of the pair $(\MCone(\delta_*),C_*(K))$,
 because the quotient $\MCone(\delta_*)/C_*(K)\cong C_*(E)^\uparrow$
is the shift of the chain complex of a contractible simplicial set $E$
(see e.g.~\cite[Proposition~21.5, Theorem~23.10]{May:SimplicialObjects-1992}),
and thus
 all homology groups of this quotient vanish except for the one in dimension~1.

Finally, it suffices to verify that $h_*i_*=\ev_*$,
 but this is clear, since the composition on the left maps
$[\tau]\xmapsto{~i_*~}[(0,\tau)]\xmapsto{~h_*~}\ev\tau$.
\end{proof}

\subsection{Correctness of the algorithm}\label{s:correctness}

Here we provide a proof of correctness for the algorithm above.
It uses more or less standard methods, but we do not know of
an accessible presentation in the literature.
In this part, we are going to use somewhat more advanced topological notions
without defining them; we refer to standard textbooks, such as~\cite{Hatcher}.

We assume the correctness of the algorithm for $\thedim-1$.
%So we assume that the stages $P_0,\ldots,P_{\thedimm-1}$
%and simplicial maps $\varphi_0,\ldots,\varphi_{\thedimm-1}$ satisfy the
%conditions in the definition of a Postnikov system.
For brevity
we write $\varphi=\varphi_{\thedimm-1}$, $P=P_{\thedimm-1}$,
$K=K(\pi_{\thedimm},\thedimm+1)$, and $E=E(\pi_{\thedimm},\thedimm)$.

\heading{Checking $\pi_\thedimm\cong\pi_\thedimm(Y)$. }
We recall that the algorithm sets up an isomorphism $H_{\thedimm+1}(\EC^M_*)\cong\pi_\thedimm$;
thus we need to verify that $H_{\thedimm+1}(\EC^M_*)\cong\pi_{\thedimm}(Y)$.
Let $\MCyl\varphi$ be the mapping cylinder
of $\varphi\:Y\to P$, i.e.,
 the simplicial set $(Y\times \Delta^1 \cup P)/\sim$,
 where $\sim$ is the equivalence identifying $(y,0)$ with $f(y)$,
$y\in Y$. Let us also identify  $Y$ with $Y\times\{1\}$,
the ``top copy'' of $Y$ in~$\MCyl\varphi$.

Using the Eilenberg--Zilber reduction, it is easy to check
that the chain complex of the pair $(\MCyl\varphi,Y)$
has a reduction to $\MCo_*=\MCone_*(\varphi_*)$.
Hence
\[H_{\thedimm+1}(\MCyl\varphi,Y)\cong H_{\thedimm+1}(\MCo_*)\cong H_{\thedimm+1}(\EC^M_*).\]
Using the fact that $\MCyl\varphi$ is homotopy equivalent to $P$,
and the assumption $\pi_i(P)=0$ for $i\ge k$,
the long exact sequence  of homotopy groups for the pair
$(\MCyl\varphi,Y)$ yields that this pair is $\thedimm$-connected
and $\pi_{\thedimm}(Y)\cong \pi_{\thedimm+1}(\MCyl\varphi,Y)$.
Due to the $\thedimm$-connectedness of $(\MCyl\varphi,Y)$,  the
Hurewicz isomorphism yields $\pi_{\thedimm+1}(\MCyl\varphi,Y)
\cong H_{\thedimm+1}(\MCyl\varphi,Y)$.
Putting all these isomorphisms together we obtain
$\pi_{\thedimm}(Y)\cong\pi_{\thedimm}$,
as desired.

\heading{The cochain $\kappa_{\thedimm-1}^\ef$ is a cocycle. }
%As was announced in the algorithm, we need to verify
%that $\kappa=\rho fg'i$ is a cocycle, or equivalently,
%that it is zero on boundaries.
We recall that $\kappa_{\thedimm-1}^\ef$ is the composition
\[\kappa_{\thedim-1}^\ef\:\EC_{\thedimm+1}^{P_{\thedimm-1}}\hookrightarrow \EC_{\thedimm}^{Y}\oplus \EC_{\thedimm+1}^{P_{\thedimm-1}}=\EC^M_{\thedim+1}\xrightarrow{\ \rho\ }\pi_\thedim\]
The inclusion, being a chain map, preserves boundaries, and $\rho$, by definition, vanishes on them. Thus the composite $\kappa_{\thedim-1}^\ef$ also vanishes on boundaries and is indeed a cocycle.
%Thus, $\kkk_{\thedimm-1}$ is indeed a simplicial map into~$K$.

\heading{The map $\varphi_{\thedimm}$ takes values in $P_{\thedimm}$. }
First we will need a description of the cocycle $\kappa_{\thedimm-1}$ similar to that of $\lambda_\thedimm$. Namely, the remark following the proof of Proposition~\ref{p:mcone} says that $\kappa_{\thedimm-1}$ can be also obtained as a restriction of $\rho f$ from Step~\ref{step:nonef} to $C_{\thedimm+1}(P_{\thedimm-1})$.\footnote{On the other hand, $\lambda_\thedimm$
in general cannot be computed solely from $\lambda_\thedimm^\ef$ and the effective homology of $Y$. A notable exception to this is when $C_*(Y)=\EC_*^Y$, as happens e.g.~for finite simplicial complexes. In this case we have $\lambda_\thedimm=\lambda_\thedimm^\ef$.} Thus, denoting the inclusions of the two summands by $i\:C_{\thedim+1}(P_{\thedim-1})\to M_{\thedim+1}$ and $j\:C_\thedim(Y)\to M_{\thedim+1}$, we can write $\kappa_{\thedimm-1}=\rho f i$ and $\lambda_\thedimm=\rho f j$.

%First, we remind that $\kappa_{\thedimm-1}$ is obtained from $\kappa_{\thedimm-1}^\ef$ via the strong equivalence %$M_*\steqP\EC^M_*$. By the remark following the proof of Proposition~\ref{p:mcone}, it can be also obtained from the strong %equivalence $C_*(P_{\thedimm-1})\steqP\EC_*(P_{\thedimm-1})$.
%% as in Corollary~\ref{c:efpullback}.
%In particular, this means that the Postnikov stage $P_\thedimm$
%is defined as a pullback along the cocycle $\kappa_{\thedimm-1}=\rho fi$
%from our algorithm.\footnote{On the other hand, $\lambda_\thedimm$
%in general cannot be computed solely from $\lambda_\thedimm^\ef$ and the effective homology of $Y$.}

%\lukas{I do not know if it could be of any interest, but the way to compute $\lambda_\thedimm$ is not that complicated. If %$C_*(Y)\steqP\EC_*(Y)$ is given by $(f_1,g_1,h_1)\:\tC_*(Y)\redu C_*(Y)$ and $(f_2,g_2,h_2)\:\tC_*(Y)\redu\EC_*(Y)$ then %$\lambda_\thedimm=\lambda_\thedimm^\ef f_2g_1+\kappa_{\thedimm-1}(\varphi_{\thedimm-1})_*f_1h_2g_1$.}

Now, we will verify that the map $\varphi_{\thedimm}=(\varphi,\ell_\thedimm)\:Y\to P\times E$ has image in the pullback~$P_{\thedimm}$, which amounts to showing that
$\kkk_{\thedimm-1}\varphi=\cobo\ell_\thedimm$.
This will follow easily from the following equality of cochains in
$C^{\thedimm+1}(Y;\pi_\thedimm)$:
\begin{equation}\label{e:cocha-level}
\kappa_{\thedim-1}\varphi_*=\lambda_\thedim\diff^Y,
\end{equation}
where $\varphi_*\:C_{\thedimm+1}(Y)\to C_{\thedimm+1}(P)$
is the chain map induced by $\varphi$ and $\diff^Y$
is the differential in $C_*(Y)$.
We have
$\kappa_{\thedimm-1}\varphi_*-\lambda_\thedimm\diff^Y=\rho f(i\varphi_*-j\diff^Y)$.
As above, $\rho f$ maps boundaries
in $\MCo_*$ to $0$, so it suffices to show that the images
of $i\varphi_*-j\diff^Y$ are boundaries---but by the formula
for the differential in the algebraic mapping cone,
we have that for every $\sigma\in Y_{\thedimm+1}$,
$(i\varphi_*-j\diff^Y)(\sigma)=\diff^{\MCo_*}(\sigma,0)$
is indeed a boundary.

Using Lemma~\ref{l:ev*}, we find that
$\kappa_{\thedim-1}\varphi_*=(\ev (\kkk_{\thedimm-1})_*)\varphi_*=
\ev(\kkk_{\thedimm-1}\varphi)_*$. It is also easy to verify from the definitions
that $\ev(\delta\ell_\thedimm)_*= \lambda_\thedimm\diff^Y$, and
so  the equality (\ref{e:cocha-level}) of cochains
yields the desired equality $\kkk_{\thedimm-1}\varphi=\cobo\ell_\thedimm$
of simplicial maps.

\heading{The maps induced by $\varphi_\thedimm$ in homotopy. }
 Considering the long exact
sequence of homotopy groups
of the fibration $K(\pi_{\thedimm},\thedimm)\to P_{\thedimm}
\to P$ and using the assumption $\pi_i(P)=0$ for $i\ge\thedimm$,
it is straightforward to check that $\pi_i(P_\thedimm)=0$
for $i\ge\thedimm+1$, and that the maps $\pi_i(Y)\to\pi_i(P_\thedimm)$
induced by $\varphi_\thedimm$ are isomorphisms for $i\le \thedimm-1$.
For establishing  condition (i) in the definition of
a Postnikov system, it remains to verify that
$(\varphi_{\thedimm})_*\:\pi_\thedimm(Y)\to\pi_\thedimm(P_\thedimm)$
is an isomorphism as well.

To this end, we begin with the diagram
\[\xymatrix{
Y \ar[d]^\varphi \ar[r]^{\varphi_\thedimm} & P_{\thedimm} \ar[r] \ar[d]^{p_{\thedimm}}
& E \ar[d]^\delta \\
P \ar@{=}[r] & P \ar[r] & K,
}
 \]
where the right square is the pullback diagram defining $P_\thedimm$.
Next, we replace each of the spaces in the bottom row with
the mapping cylinder of the respective vertical map, so that
the vertical maps become inclusions (of the domain in the cylinder);
the horizontal maps of the cylinders are then induced in a canonical way.
\begin{equation}\label{e:two_squares}
\xy *!C\xybox{\xymatrix{
Y \ar[d] \ar[r]^{\varphi_\thedimm}
 & P_\thedimm \ar[r] \ar[d] & E \ar[d]\\
\MCyl \varphi \ar[r] & \MCyl p_\thedimm \ar[r] & \MCyl \cobo
}}\endxy
\end{equation}

\begin{lemma}\label{l:cylindrak}
The map $\pi_{\thedimm+1}(\MCyl \varphi ,Y)\to\pi_{\thedimm+1}
(\MCyl  p_{\thedimm} ,P_{\thedimm})$
induced by the left square of the last diagram is an isomorphism.
\end{lemma}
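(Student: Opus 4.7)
The plan is to decompose the map of pairs induced by the left square of~(\ref{e:two_squares}) via the right square and the outer rectangle, show that each of the latter two induces an isomorphism on $\pi_{\thedimm+1}$ of the respective pairs, and then conclude the lemma by the 2-out-of-3 property for isomorphisms.

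For the right square, use that $\cobo\colon E\to K$ is a Kan fibration with fiber $F=K(\pi_\thedimm,\thedimm)$, so that its pullback $p_\thedimm\colon P_\thedimm\to P$ is also a Kan fibration with the same fiber~$F$. For any Kan fibration $p\colon E'\to B'$, combining the long exact sequence of the pair $(\MCyl p,E')$ (via $\MCyl p\simeq B'$) with the fibration sequence $F\to E'\to B'$ and applying the five lemma yields a natural isomorphism $\pi_i(\MCyl p,E')\cong\pi_{i-1}(F)$. Naturality together with the fact that the map $P_\thedimm\to E$ restricts to the identity on the common fiber~$F$ then shows that the right square induces an isomorphism on $\pi_{\thedimm+1}$ of the pairs.

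For the outer rectangle $(\MCyl\varphi,Y)\to(\MCyl\cobo,E)$, first observe that both pairs are $\thedimm$-connected: the first because $\varphi$ is a $(\thedimm-1)$-equivalence and $\pi_i(P)=0$ for $i\ge\thedimm$, and the second because $E$ is contractible while $\MCyl\cobo\simeq K(\pi_\thedimm,\thedimm+1)$ is $\thedimm$-connected. By the relative Hurewicz theorem, the comparison on $\pi_{\thedimm+1}$ reduces to the corresponding comparison on $H_{\thedimm+1}$, which by the Eilenberg--Zilber reductions becomes a comparison of $H_{\thedimm+1}(\MCo_*)$ with $H_{\thedimm+1}(\MCone_*(\cobo_*))$, induced by the chain map $(\ell_\thedimm)_*\oplus(\kkk_{\thedimm-1})_*$. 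Lemma~\ref{l:ev-cone} provides an isomorphism $H_{\thedimm+1}(\MCone_*(\cobo_*))\xrightarrow{\cong}\pi_\thedimm$ via $\ev+\ev$, and by Lemma~\ref{l:ev*} the composite at the chain level equals $\lambda_\thedimm+\kappa_{\thedimm-1}=\rho f$ on $M_{\thedimm+1}=C_\thedimm(Y)\oplus C_{\thedimm+1}(P_{\thedimm-1})$, using the identities $\lambda_\thedimm=\rho f j$ and $\kappa_{\thedimm-1}=\rho f i$ from the proof that $\varphi_\thedimm$ takes values in $P_\thedimm$. But $\rho f$ is by construction the cochain inducing the isomorphism $H_{\thedimm+1}(\MCo_*)\cong H_{\thedimm+1}(\EC^M_*)\cong\pi_\thedimm$ built into the algorithm, so the outer rectangle induces an isomorphism on $H_{\thedimm+1}$ and hence on $\pi_{\thedimm+1}$ of the pairs.

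The main obstacle will be the careful bookkeeping in the outer-rectangle step: one has to verify that the chain of identifications (Hurewicz, Eilenberg--Zilber, the projection $\rho$ from Step~\ref{step:getDecomp}, and the evaluation map $\ev$) fit together into a single commutative diagram, so that the two routes from $H_{\thedimm+1}(\MCone_*(\varphi_*))$ to $\pi_\thedimm$---one running through the outer rectangle and $\ev+\ev$, the other coming directly from the algorithm's definition of $\pi_\thedimm$---coincide.
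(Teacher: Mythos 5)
Your proposal is correct and takes essentially the same route as the paper: decompose into the right square and the outer rectangle of~(\ref{e:two_squares}), handle the outer rectangle via the relative Hurewicz theorem, the Eilenberg--Zilber identification of $C_*(\MCyl,\,\cdot\,)$ with the algebraic mapping cone, Lemma~\ref{l:ev-cone}, and the identity $\lambda_\thedimm+\kappa_{\thedimm-1}=\rho f$, and handle the right square via the natural identification $\pi_{i}(\MCyl p,E')\cong\pi_{i-1}(\fib p)$ for a fibration. The only difference is in presentation: the paper spells out the naturality for the right square by constructing the isomorphism explicitly on representatives and checking that a ladder diagram commutes up to sign, where you cite a pre-packaged natural isomorphism, but the content is identical.
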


We first finish the proof of correctness of the algorithm assuming
the lemma. We consider the long exact sequences coming from the pairs
$(\MCyl\varphi,Y)$ and $(\MCyl p_{\thedimm},P_{\thedimm})$:
\[\xymatrix@C=14pt{
0=\pi_{\thedimm+1}(\MCyl\varphi)\ar[r]
\ar[d]^{\cong}
& \pi_{\thedimm+1}(\MCyl\varphi,Y) \ar[d]^{\cong} \ar[r]
& \pi_{\thedimm}(Y) \ar[d]^{\varphi_{\thedimm*}} \ar[r]
& \pi_{\thedimm}(\MCyl\varphi)=0 \ar[d]^{\cong}\\
0=\pi_{\thedimm+1}(\MCyl p_{\thedimm})
\ar[r] & \pi_{\thedimm+1}(\MCyl p_{\thedimm},P_{\thedimm}) \ar[r]
& \pi_{\thedimm}(P_{\thedimm}) \ar[r] & \pi_{\thedimm}(\MCyl p_{\thedimm})=0
 }\]
The second vertical isomorphism is proved in the lemma and the other two
follow from $\pi_i(P)=0$ for $i\ge \thedimm$, since both of the
cylinders deform onto the base $P$. Then the five-lemma implies
that $\varphi_{\thedimm*}$ is an isomorphism on $\pi_{\thedimm}$,
which completes the proof of condition (i) from
the definition of a Postnikov system. All that remains is
to prove the lemma.

\begin{proof}[Proof of Lemma~\ref{l:cylindrak}]
We will show that both the right square and the composite square induce
an isomorphism in the relative homotopy groups of
the vertical pairs in dimension $\thedimm+1$.
We start with the composite square.

 Since both $(\MCyl\varphi,Y)$ and $(\MCyl\delta,E)$
are $\thedimm$-connected, it suffices to prove that the square induces an
isomorphism on the $(\thedimm+1)$-st homology group. We use that the
chain complexes of these pairs are isomorphic to the respective (reduced)
algebraic mapping cones. We find that the chain map $C_*(\MCyl\varphi,Y)\to
C_*(\MCyl\delta,E)$ is actually $\ell_{\thedim*}\oplus \kkk_{(\thedimm-1)*}$;
this can be seen using the diagram
\[
\xymatrix{
Y \ar[r]^{\ell_\thedim} \ar[d]_\varphi& E\ar[d]^{\cobo}\\
P\ar[r]^{\kkk_{\thedimm-1}} & K.
}
\]
Then we consider the diagram
\[\xymatrix{
C_{\thedimm}(Y)\oplus C_{\thedimm+1}(P)
\ar[rr]^-{\ell_{\thedim*}\oplus \kkk_{(\thedimm-1)*}}
\ar[rd]_-{\lambda_{\thedim}+\kappa_{\thedim-1}} & & C_{\thedimm}(E)\oplus C_{\thedimm+1}(K),
\ar[ld]^-{\ev+ \ev} \\
& \pi_\thedimm
}\]
which commutes in view of Lemma~\ref{l:ev*}.
The left map $\lambda_\thedimm+\kappa_{\thedimm-1}$ equals
$\rho f$, and since both $\rho$ and $f$
induce isomorphisms in homology, so does $\lambda_\thedimm+\kappa_{\thedimm-1}$.
The map $\ev+\ev$ induces an isomorphism in homology
by Lemma~\ref{l:ev-cone}.
Therefore the same is true for the horizontal map, and hence the composite
square in the diagram \eqref{e:two_squares} induces an isomorphism in the $(k+1)$st
homotopy groups of the vertical pairs, as claimed.

It remains to study the right square.
 Before we passed to mapping cylinders, the original square was a pullback.
The original vertical maps are fibrations, and consequently, the induced map
on fibers (which are both $K(\pi_{\thedimm},\thedimm)$) is an isomorphism.
Next, there is an isomorphism
$\pi_{\thedimm+1}(\MCyl p_{\thedimm},P_{\thedimm})\cong\pi_\thedimm
(\fib p_{\thedimm})$,
and a similar one for $\delta$.
From their description below it will be apparent that this isomorphism
is natural so that the square
\[\xymatrix{
\pi_{\thedimm+1}(\MCyl p_{\thedimm},P_{\thedimm}) \ar[r] \ar[d]_-\cong & \pi_{\thedimm+1}
(\MCyl \delta,E)
\ar[d]^-\cong \\
\pi_{\thedimm}(\fib p_{\thedimm}) \ar[r]_-\cong
& \pi_{\thedimm}(\fib \delta).
}\]
commutes.
 We will thus be able to conclude that
$\pi_{\thedimm+1}(\MCyl p_{\thedimm},P_{\thedimm})\to
\pi_{\thedimm+1}(\MCyl \delta,E)$ is indeed an isomorphism as required.

The required map $\pi_{\thedimm+1}(\MCyl p_{\thedimm},P_{\thedimm})\to
\pi_{\thedimm}(\fib p_{\thedimm})$ is defined via representatives.
To this end, we represent an element
of $\pi_{\thedimm+1}(\MCyl p_{\thedimm},P_{\thedimm})$ by a map
 $f\:I^{\thedimm+1}\ra\MCyl p_{\thedimm}$
that sends the face $I^{\thedimm}$ (where the last coordinate is zero)
to $P_{\thedimm}$ and the union of the remaining faces, which we denote
by $J^{\thedimm}$, to the basepoint (here $I^{\thedimm+1}$ denotes the
unit cube). Now composing $f$ with the projection
$\mathop\mathrm{pr}\:\MCyl p_{\thedimm}\ra P$ we obtain
$g=\mathop\mathrm{pr}\circ f\:I^{\thedimm+1}\ra P$, which we lift along
$p_{\thedimm}$ to $\widetilde g\:I^{\thedimm+1}\ra P_{\thedimm}$.
One may prescribe the values on all the faces except for one.
Here we decide that $\widetilde g$ agrees with $f$ on the only interesting
face $I^{\thedimm}$ and that it is constant onto the basepoint on
the neighboring faces. Finally, the restriction to the remaining face
(opposite to $I^{\thedimm}$) gives us a map
$\widetilde g_1\:I^{\thedimm}\ra\fib p_{\thedimm}$,
and this is the representative of the image of $[f]$ under the desired map
$\pi_{\thedimm+1}(\MCyl p_{\thedimm},P_{\thedimm})
\to\pi_{\thedimm}(\fib p_{\thedimm})$.

It remains to show that this map
is indeed an isomorphism.
For this, we consider the following diagram, whose top row is the long exact
sequence of the pair $(\MCyl p_{\thedimm},P_{\thedimm})$, and whose bottom row
is associated with the fibration $p_{\thedimm}$.
\[\xymatrix{
\cdots \ar[r] & \pi_{\thedimm+1}P_{\thedimm} \ar[r] \ar[d]^\id & \pi_{\thedimm+1}\MCyl p_{\thedimm} \ar[r] \ar[d]^\cong \POS[];[dr]**{}?(.5)*{\textrm{(A)}} & \pi_{\thedimm+1}(\MCyl p_{\thedimm},P_{\thedimm}) \ar[r] \ar[d] \POS[];[dr]**{}?(.5)*{\textrm{(B)}} & \pi_{\thedimm}P_{\thedimm} \ar[r] \ar[d]^\id & \pi_{\thedimm}\MCyl p_{\thedimm} \ar[r] \ar[d]^\cong & \cdots \\
\cdots \ar[r] & \pi_{\thedimm+1}P_{\thedimm} \ar[r] & \pi_{\thedimm+1}P \ar[r] & \pi_{\thedimm}\fib p_{\thedimm} \ar[r] & \pi_{\thedimm}P_{\thedimm} \ar[r] & \pi_{\thedimm}P \ar[r] & \cdots
}\]
The isomorphism will follow from the five-lemma once we show that the
squares (A) and (B) commute up to a sign.
The square (A) anticommutes because the path through the bottom left
corner consists of lifting $g$ as above but with the restriction to
$J^{\thedimm}$ being constant onto the basepoint.
One can obtain this by first flipping $I^{\thedimm+1}$ along the last
coordinate and then lifting as above. The flipping amounts to multiplication
by $-1$ on $\pi_{\thedimm+1}(P)$. The square denoted by (B) commutes by
an easy inspection: the map $\widetilde g_1$ is homotopic inside
$P_{\thedimm}$ with $f|_{I^{\thedimm}}$ (the image in the top right corner
of that square), the required homotopy being~$\widetilde g$.
\end{proof}

\section{The extension problem}\label{s:ext}

\begin{proof}[Proof of Theorem~\ref{t:ext}.]
Here we prove the result about testing extendability of a map using tools from~\cite{CKMSVW11}. We are given simplicial sets $A\subseteq X$ and $Y$
and a simplicial map $f\:A\to Y$, where $X$ is finite,
 $\dim X\le 2\thedim-1$, and $Y$ is $(\thedim-1)$-connected.

First, by \cite[Theorem~7.6.22]{Spanier}, a continuous extension of $f$ to
$X$ exists, under these assumptions,
if and only if the composition $\varphi_{2\thedim-2}f\:A\to P_{2\thedim-2}$
admits a continuous extension to $X$,
where $\varphi_{2\thedim-2}\:Y\to P_{2\thedim-2}$ is the map in
the Postnikov system of $Y$. By the homotopy extension property,
this happens precisely when there exists a map $X\to P_{2\thedim-2}$,
whose restriction to $A$ is homotopic to $\varphi_{2\thedim-2}f$.
In terms of homotopy classes of maps, this is if and only
$[\varphi_{2\thedim-2}f]$ lies in the image of the restriction map
$\rho\:[X,P_{2\thedim-2}]\to[A,P_{2\thedim-2}]$.

The algorithm in Corollary~\ref{c:[XY]} for computing
$[X,Y]$ actually computes $[X,P_{2\thedim-2}]$. The isomorphism
$[X,Y]\cong [X,P_{2\thedim-2}]$ holds only for $\dim X\le2\thedim-2$,
but the computation of $[X,P_{2\thedim-2}]$ works correctly
for $X$ of arbitrary dimension.
Thus, in the setting of Theorem~\ref{t:ext}, we can compute
the Abelian group $[X,P_{2\thedim-2}]$ represented by generators,
which are specified
as simplicial maps\footnote{Actually, a more compact
\emph{cochain representation} is used in \cite{CKMSVW11},
but for our purposes, we can think of explicit simplicial maps.}
$X\to P_{2\thedim-2}$, and relations (it is \emph{fully effective}
in the terminology of \cite{CKMSVW11}).

For the simplicial subset $A\subseteq X$, we similarly compute
$[A,P_{2\thedim-2}]$.
As we recall from \cite{CKMSVW11}, the  group operation
in $[X,P_{2\thedim-2}]$ is
induced by an operation $\boxplus$ on $\SM(X,P_{2\thedim-2})$, which
is defined simplexwise (i.e., $(f\boxplus g)(\sigma)=
f(\sigma)\boxplus g(\sigma)$). This easily implies that the
restriction map $\rho$ is a group homomorphism.

Given an element (homotopy class) $[g]\in [X,P_{2\thedim-2}]$,
represented by a simplicial map $g$, we consider the
restriction $g|_A$ as a representative of an element of $[A,P_{2\thedim-2}]$,
and we can express it using the generators of $[A,P_{2\thedim-2}]$.
Thus, $\rho$ is polynomial-time computable, and we can compute the
image $\im\rho$ as a subgroup of $[A,P_{2\thedim-2}]$ (by computing
the images of the generators of $[X,P_{2\thedim-2}]$ and the
subgroup generated by them).

Then, given a simplicial map $f\:A\to Y$, we compute the corresponding
element $[\varphi_{2\thedim-2}f]\in [A,P_{2\thedim-2}]$ and test (in
polynomial time) whether it lies in the image of $\rho$.
This is the desired algorithm for testing the extendability of~$f$.

In case $\dim X\leq 2\thedim-2$ we have $[X,Y]\cong[X,P_{2\thedim-2}]$
and $[A,Y]\cong[A,P_{2\thedim-2}]$. Thus, if $x\in \im\rho$,
we can compute the preimage $\rho^{-1}(x)$ as a coset in
$[X,P_{2\thedim-2}]$ (since we have $\rho$ represented by a matrix),
and this coset is isomorphic to $[X,Y]_f$ as needed.
This concludes the proof.
\end{proof}

\subsection*{Acknowledgments}

We would like to thank Francis Sergeraert for many useful discussion
and extensive advice; in particular, the current proof of
Lemma~\ref{l:ZtoZm} is mostly due to
him---not speaking of his leading role in the
development of effective homology, which constitutes the foundation
of our algorithmic methods.
We also thank Marek Filakovsk\'y for useful discussions, and finally, to an anonymous referee for many valuable suggestions and comments.

\bibliographystyle{plain}
\bibliography{Postnikov,simplicHomotopy}

\begin{thebibliography}{10}

\bibitem{AlvarezArmarioFrauReal}
V.~\'Alvarez, J.A. Armario, M.D. Frau, and P.~Real.
\newblock {Algebra structures on the comparison of the reduced bar construction
  and the reduced $W$-construction}.
\newblock {\em Commun. Algebra}, 37(10):3643--3665, 2009.

\bibitem{Anick-homotopyhard}
D.~J. Anick.
\newblock {The computation of rational homotopy groups is {\#}$\wp$-hard}.
\newblock {Computers in geometry and topology, Proc. Conf., Chicago/Ill. 1986,
  Lect. Notes Pure Appl. Math. 114, 1--56}, 1989.

\bibitem{BarnesLambe}
D.~W. Barnes and L.~A. Lambe.
\newblock A fixed point approach to homological perturbation theory.
\newblock {\em Proc. AMS}, 112:881--892, 1991.

\bibitem{BrowThm}
E.~H. {Brown, Jr.}
\newblock {Twisted tensor products. I.}
\newblock {\em Ann. Math. (2)}, 69:223--246, 1959.

\bibitem{Brown}
E.~H. B{rown (jun.)}.
\newblock {Finite computability of Postnikov complexes}.
\newblock {\em Ann. Math. (2)}, 65:1--20, 1957.

\bibitem{CKMSVW11}
M.~{\v{C}}adek, M.~Kr\v{c}\'al, J.~Matou\v{s}ek, F.~Sergeraert,
  L.~Vok\v{r}\'{\i}nek, and U.~Wagner.
\newblock Computing all maps into a sphere.
\newblock {\em J. ACM}, 2014.
\newblock To appear. Preprint in arXiv:1105.6257. Extended abstract in
  \emph{Proc. ACM--SIAM Symposium on Discrete Algorithms} (SODA 2012).

\bibitem{ext-hard}
M.~{\v{C}}adek, M.~Kr\v{c}\'al, J.~Matou\v{s}ek, L.~Vok\v{r}\'{\i}nek, and
  U.~Wagner.
\newblock Extendability of continuous maps is undecidable.
\newblock {\em Discr. Comput. Geom.}, 51(1):24--66, 2014.
\newblock Preprint arXiv:1302.2370.

\bibitem{aslep}
M.~{\v{C}}adek, M.~Kr\v{c}\'al, and L.~Vok\v{r}\'{\i}nek.
\newblock Algorithmic solvability of the lifting-extension problem.
\newblock Preprint, arXiv:1307.6444, 2013.

\bibitem{Curtis:SimplicialHomotopyTheory-1971}
E.~B. Curtis.
\newblock Simplicial homotopy theory.
\newblock {\em Advances in Math.}, 6:107--209, 1971.

\bibitem{Dehn:comput-fund-group-2-mani}
M.~Dehn.
\newblock Transformation der kurven auf zweiseitigen fl{\"a}chen.
\newblock {\em Mathematische Annalen}, 72(3):413--421, 1912.

\bibitem{EilenbergMacLane:GroupsHPin1-1953}
S.~Eilenberg and S.~Mac~Lane.
\newblock On the groups of {$H(\Pi,n)$}. {I}.
\newblock {\em Ann. of Math. (2)}, 58:55--106, 1953.

\bibitem{EilenbergMacLane:GroupsHPin2-1954}
S.~Eilenberg and S.~Mac~Lane.
\newblock On the groups {$H(\Pi,n)$}. {II}. {M}ethods of computation.
\newblock {\em Ann. of Math. (2)}, 60:49--139, 1954.

\bibitem{ellis2008HAP}
G.~Ellis.
\newblock Homological algebra programming.
\newblock In {\em Computational group theory and the theory of groups}, volume
  470 of {\em Contemp. Math.}, pages 63--74. Amer. Math. Soc., Providence, RI,
  2008.

\bibitem{Filakovsky-tensor}
M.~Filakovsk{\'y}.
\newblock Effective chain complexes for twisted products.
\newblock Preprint, arXiv: 1209.1240, 2012.

\bibitem{VokriFil-homotopic}
M.~Filakovsk\'y and L.~Vok\v{r}\'{\i}nek.
\newblock Are two given maps homotopic? {A}n algorithmic viewpoint, 2013.
\newblock Preprint, arXiv:1312.2337.

\bibitem{FranekKrcal:RobustSatisfiability}
P.~Franek and M.~Kr\v{c}\'al.
\newblock Robust satisfiability of systems of equations.
\newblock In {\em Proc.\ Ann. ACM-SIAM Symp. on Discrete Algorithms (SODA)},
  2014.

\bibitem{Friedm08}
G.~{Friedman}.
\newblock An elementary illustrated introduction to simplicial sets.
\newblock {\em Rocky Mountain J. Math.}, 42(2):353--423, 2012.

\bibitem{GoerssJardine}
P.~G. Goerss and J.~F. Jardine.
\newblock {\em Simplicial homotopy theory}.
\newblock Birkh{\"a}user, Basel, 1999.

\bibitem{GonRea05}
R.~Gonzalez-Diaz and P.~Real.
\newblock Simplification techniques for maps in simplicial topology.
\newblock {\em J. Symb. Comput.}, 40:1208--1224, October 2005.

\bibitem{Hatcher}
A.~Hatcher.
\newblock {\em Algebraic {T}opology}.
\newblock Cambridge University Press, Cambridge, 2001.
\newblock Electronic version available at
  \url{http://math.cornell.edu/hatcher#AT1}.

\bibitem{KannanBachem}
R.~Kannan and A.~Bachem.
\newblock Polynomial algorithms for computing the {S}mith and {H}ermite normal
  forms of an integer matrix.
\newblock {\em SIAM J. Computing}, 8:499--507, 1981.

\bibitem{Kochman}
S.~O. Kochman.
\newblock {\em {Stable homotopy groups of spheres. A computer-assisted
  approach.}}
\newblock Lecture Notes in Mathematics 1423. Springer-Verlag, Berlin etc.,
  1990.

\bibitem{kozl-surv}
D.~N. Kozlov.
\newblock Chromatic numbers, morphism complexes, and {Stiefel--Whitney}
  characteristic classes.
\newblock In {\em Geometric Combinatorics (E. Miller, V. Reiner, and B.
  Sturmfels, editors)}, pages 249--315. Amer. Math. Soc., Providence, RI, 2007.
\newblock Also arXiv:math/0505563.

\bibitem{Krcal-thesis}
M.~Kr\v{c}\'al.
\newblock {\em Computational Homotopy Theory}.
\newblock PhD thesis, Department of Applied Mathematics, Faculty of Mathematics
  and Physics, Charles University, Prague, 2013.

\bibitem{pKZ1}
M.~Kr\v{c}\'al, J.~Matou\v{s}ek, and F.~Sergeraert.
\newblock Polynomial-time homology for simplicial {Eilenberg--MacLane} spaces.
\newblock {\em J. Foundat. of Comput. Mathematics}, 13:935--963, 2013.
\newblock Preprint, arXiv:1201.6222.

\bibitem{Manning:mani-with-word-problem}
J.~{Manning}.
\newblock {Algorithmic detection and description of hyperbolic structures on
  closed 3-manifolds with solvable word problem}.
\newblock {\em {Geometry and Topology}}, 6:1--26, 2002.

\bibitem{Matousek:BorsukUlam-2003}
J.~Matou{\v{s}}ek.
\newblock {\em Using the {B}orsuk-{U}lam theorem (revised 2nd printing)}.
\newblock Universitext. Springer-Verlag, Berlin, 2007.

\bibitem{MatousekTancerWagner:HardnessEmbeddings-2011}
J.~Matou{\v{s}}ek, M.~Tancer, and U.~Wagner.
\newblock {Hardness of embedding simplicial complexes in $\R^d$}.
\newblock {\em J. Eur. Math. Soc.}, 13(2):259--295, 2011.

\bibitem{Mat-homotopyW1}
J.~Matou\v{s}ek.
\newblock Computing higher homotopy groups is {$W[1]$}-hard.
\newblock {\em Fundamenta Informaticae}, 2014.
\newblock To appear.

\bibitem{May:SimplicialObjects-1992}
J.~P. May.
\newblock {\em Simplicial objects in algebraic topology}.
\newblock Chicago Lectures in Mathematics. University of Chicago Press,
  Chicago, IL, 1992.
\newblock Reprint of the 1967 original; the page numbers do not quite agree
  with the 1967 edition.

\bibitem{Moore59}
J.~C. Moore.
\newblock Homological algebra and the cohomology of classifying spaces (in
  {F}rench).
\newblock {\em S\'eminaire H.~Cartan 1959/60}, exp. 7, 1959.

\bibitem{Munkres}
J.~R. Munkres.
\newblock {\em Elements of Algebraic Topology}.
\newblock Addison-Wesley, Reading, MA, 1984.

\bibitem{Novikov:UndecidabilityWordProblem-1955}
P.~S. Novikov.
\newblock Ob algoritmi\v cesko\u\i\ nerazre\v simosti problemy to\v zdestva
  slov v teorii grupp ({O}n the algorithmic unsolvability of the word problem
  in group theory).
\newblock {\em Trudy Mat. inst. im. Steklova}, 44:1--143, 1955.

\bibitem{Ravenel}
D.~C. Ravenel.
\newblock {\em Complex Cobordism and Stable Homotopy Groups of Spheres (2nd
  ed.)}.
\newblock Amer. Math. Soc., 2004.

\bibitem{RealThesis}
P.~Real.
\newblock Algorithms for computing effective homology of classifying spaces (in
  {S}panish).
\newblock PhD. Thesis, Facultad de Mathem\'aticas, Univ. de Sevilla, 1993.
\newblock Available online at
  \url{http://fondosdigitales.us.es/media/thesis/1426/C_043-139.pdf}.

\bibitem{Real96}
P.~Real.
\newblock An algorithm computing homotopy groups.
\newblock {\em Mathematics and Computers in Simulation}, 42:461---465, 1996.

\bibitem{real-assoc}
P.~Real.
\newblock {Homological perturbation theory and associativity}.
\newblock {\em Homology Homotopy Appl.}, 2:51--88, 2000.

\bibitem{RomeroEllisRubio}
A.~Romero, G.~Ellis, and J.~Rubio.
\newblock Interoperating between computer algebra systems: computing homology
  of groups with {K}enzo and {GAP}.
\newblock In {\em Proc. ISAAC, ACM, New York}, pages 303--310, 2009.
\newblock Available on-line at
  \url{http://hamilton.nuigalway.ie/preprints/sigproc-sp.rev1.pdf}.

\bibitem{RomeroRubioSergeraert}
A.~Romero, J.~Rubio, and F.~Sergeraert.
\newblock {Computing spectral sequences}.
\newblock {\em J. Symb. Comput.}, 41(10):1059--1079, 2006.

\bibitem{SergRomEffHmtp}
A.~Romero and F.~Sergeraert.
\newblock Effective homotopy of fibrations.
\newblock {\em Applicable Algebra in Engineering, Communication and Computing},
  23(1-2):85--100, 2012.

\bibitem{RouneSaensdeCabezon}
B.~H. Roune and E.~{S{\'a}enz de Cabez{\'o}n}.
\newblock Complexity and algorithms for {E}uler characteristic of simplicial
  complexes.
\newblock Preprint arXiv:1112.4523, \url{http://arxiv.org/pdf/1112.4523v1},
  2011.

\bibitem{Rubio-thesis}
J.~Rubio.
\newblock Effective homology of iterated loop spaces (in {F}rench).
\newblock PhD. Thesis, Univ. Grenoble~I, 1991.

\bibitem{RubioSergeraert:ConstructiveAlgebraicTopology-2002}
J.~Rubio and F.~Sergeraert.
\newblock Constructive algebraic topology.
\newblock {\em Bull. Sci. Math.}, 126(5):389--412, 2002.

\bibitem{SergRub-homtypes}
J.~Rubio and F.~Sergeraert.
\newblock Algebraic models for homotopy types.
\newblock {\em Homology, Homotopy and Applications}, 17:139--160, 2005.

\bibitem{SergerGenova}
J.~Rubio and F.~Sergeraert.
\newblock Constructive homological algebra and applications.
\newblock Preprint, arXiv:1208.3816, 2012.
\newblock Written in 2006 for a MAP Summer School at the University of Genova.

\bibitem{RS-funcform}
J.~Rubio and F.~Sergeraert.
\newblock Homology of iterated loop spaces.
\newblock Manuscript in preparation, 2012.

\bibitem{Schoen-effectivetop}
R.~Sch{\"o}n.
\newblock {Effective algebraic topology}.
\newblock {\em Mem. Am. Math. Soc.}, 451:63 p., 1991.

\bibitem{Sergeraert:ComputabilityProblemAlgebraicTopology-1994}
F.~Sergeraert.
\newblock The computability problem in algebraic topology.
\newblock {\em Adv. Math.}, 104(1):1--29, 1994.

\bibitem{SergerTrieste}
F.~Sergeraert.
\newblock Introduction to combinatorial homotopy theory.
\newblock Available at
  \url{http://www-fourier.ujf-grenoble.fr/~sergerar/Papers/}, 2008.

\bibitem{Shih}
Weishu Shih.
\newblock {Homologie des espaces fibres}.
\newblock {\em Publ. Math. de l'IHES}, 13:93--176, 1962.

\bibitem{skopenkov-survey}
A.~B. Skopenkov.
\newblock Embedding and knotting of manifolds in {E}uclidean spaces.
\newblock In {\em Surveys in contemporary mathematics}, volume 347 of {\em
  London Math. Soc. Lecture Note Ser.}, pages 248--342. Cambridge Univ. Press,
  Cambridge, 2008.

\bibitem{smith-mstructures}
J.~R. Smith.
\newblock {m}-{S}tructures determine integral homotopy type.
\newblock Preprint, arXiv:math/9809151v1, 1998.

\bibitem{Spanier}
E.~H. Spanier.
\newblock {\em Algebraic topology}.
\newblock McGraw Hill, 1966.

\bibitem{Steenrod:CohomologyOperationsObstructionsExtendingContinuousFunctions%
-1972}
N.~E. Steenrod.
\newblock Cohomology operations, and obstructions to extending continuous
  functions.
\newblock {\em Advances in Math.}, 8:371--416, 1972.

\bibitem{Storjohann:NearOptimalAlgorithmsSmithNormalForm-1996}
A.~Storjohann.
\newblock Near optimal algorithms for computing {Smith} normal forms of integer
  matrices.
\newblock In {\em International Symposium on Symbolic and Algebraic
  Computation}, pages 267--274, 1996.

\bibitem{vokrinek:odd-spheres}
L.~{Vok{\v r}{\'{\i}}nek}.
\newblock {Decidability of the extension problem for maps into odd-dimensional
  spheres}.
\newblock Preprint, arXiv:1312.2474, January 2014.

\bibitem{Vokrinek-algoheaps}
L.~Vok\v{r}\'{\i}nek.
\newblock Computing the abelian heap of unpointed stable homotopy classes of
  maps, 2013.
\newblock Preprint, arXiv:1312.2474.

\bibitem{Vokrinek}
L.~Vok\v{r}\'{\i}nek.
\newblock Constructing homotopy equivalences of chain complexes of free
  {${\mathbb{Z}}G$}-modules.
\newblock Preprint, arXiv:1304.6771, 2013.

\end{thebibliography}

\end{document}